\theoremstyle{definition}
\newtheorem{thm}{Theorem}
\newtheorem{lem}[thm]{Lemma}
\theoremstyle{remark}
\newtheorem{remark}{Remark}
\newtheorem{claim}{Claim}[subsection]
\newtheorem{case}{Case}
\theoremstyle{plain}
\providecommand{\keywords}[1]
{
  \small	
  \textbf{Keywords:} #1
}
\providecommand{\JEL}[1]
{
  \small	
  \textbf{JEL Codes:} #1
}
\DeclareMathOperator*{\argmax}{arg\,max}
\newcommand{\Real}{\mathbb{R}}
\newcommand{\Regret}{\mathrm{Reg}}
\newcommand{\Reg}{\mathrm{Reg}}
\newcommand{\regret}{\mathrm{reg}} 
\newcommand{\reg}{\mathrm{reg}}
\newcommand{\Ex}{\mathbb{E}}
\newcommand{\Prob}{\mathbb{P}}
\newcommand{\eps}{\epsilon}
\newcommand{\Normal}{\mathcal{N}}
\newcommand{\Ind}{\bm{1}}
\newcommand{\polylog}{\text{polylog}}
\newcommand{\tilO}{\tilde{O}}
\newcommand{\tilOmega}{\tilde{\Omega}}
\newcommand{\tilTheta}{\tilde{\Theta}}
\newcommand{\tiltheta}{\tilde{\theta}}
\newcommand{\sgn}{\mathrm{sgn}}
\newcommand{\erf}{\mathrm{erf}}
\newcommand{\EA}{\mathcal{A}}
\newcommand{\EB}{\mathcal{B}}
\newcommand{\EC}{\mathcal{C}}
\newcommand{\EE}{\mathcal{E}}
\newcommand{\EP}{\mathcal{P}}
\newcommand{\ET}{\mathcal{T}}
\newcommand{\EU}{\mathcal{U}}
\newcommand{\EX}{\mathcal{X}}
\newcommand{\EZ}{\mathcal{Z}}
\newcommand{\trunkN}{\mathcal{N}^{\mathrm{tr}}}
\newcommand{\Ehatzt}{Z_t}
\newcommand{\Ehatz}[1]{Z_{#1}}
\newcommand{\hatr}{\hat{r}}
\newcommand{\ist}{b^*_t}
\newcommand{\Unif}{\mathrm{Unif}}
\newcommand{\Creg}{C_{\mathrm{reg}}}
\newcommand{\Cezl}{C_{\mathrm{l}}} 
\newcommand{\Cezu}{C_{\mathrm{u}}}
\newcommand{\Cshrink}{C_{\mathrm{shrink}}}
\newcommand{\Ceps}{C_{\eps}}
\newcommand{\Ctemp}{C_2}
\newcommand{\Cupdate}{C_{\mathrm{update}}}
\newcommand{\ACC}{\mathrm{ACC}}
\newcommand{\Follow}{\ET_\mathrm{Obey}}
\newcommand{\Deviate}{\ET_\mathrm{Deviate}}
\newcommand{\Fence}{\ET_\mathrm{OtF}}
\newcommand{\ACCFollow}{\mathrm{ACC}_{\mathrm{obey}}}
\newcommand{\ACCDeviate}{\mathrm{ACC}_{\mathrm{deviate}}}
\newcommand{\ACCFence}{\mathrm{ACC}_{\mathrm{OtF}}}
\newcommand{\Cternary}{C_{\mathrm{ter}}}
\newcommand{\CfenceL}{C_{\mathrm{OtF}}^L}
\newcommand{\CfenceU}{C_{\mathrm{OtF}}^U}
\newcommand{\Cregt}{C_{\mathrm{regt}}}
\newcommand{\Cregttwo}{C_{\mathrm{regt}}} 
\newcommand{\Cgeomup}{C_w}
\newcommand{\Cgeomuptmp}{C_{w,2}}
\newcommand{\Runnum}{5000}
\newcommand{\Nsingle}{10,000}
\begin{document}

\title{Deviation-Based Learning: Training Recommender Systems Using Informed User Choice\thanks{We are grateful to Alex Bloedel, Jin-Wook Chang, Vitor Farinha Luz, Kohei Kawaguchi, Yichuan Lou, Daisuke Nakajima, Foster Provost, Wing Suen, and the seminar participants at the Happy Hour Seminar (online), the 27th Decentralization Conference (online), Otaru University of Commerce, the Information-Based Induction Sciences Workshop (IBISML) 2021 (online), the CUHK-HKU-HKUST Theory Seminar (online), the 2nd UTMD Conference (online), the Marketplace Innovations Workshop (online), and the North American Meeting (Miami), Australasia Meeting (online), and Asian Meeting (Tokyo) of the Econometric Society. All remaining errors are our own.}}

\author{
Junpei Komiyama\thanks{Leonard N. Stern School of Business, New York University, 44 West 4th Street, New York, NY 10012, United States. E-mail: \href{mailto:junpei.komiyama@gmail.com}{junpei.komiyama@gmail.com}.}
\and
Shunya Noda\thanks{Graduate School of Economics, University of Tokyo, 7-3-1 Hongo, Tokyo, 113-0033, Japan. E-mail: \href{mailto:shunya.noda@gmail.com}{shunya.noda@gmail.com}. Noda has been supported by the Social Sciences and Humanities Research Council of Canada and JSPS KAKENHI Grant Number JP22K13361.}
}

\date{First Draft: September 22, 2021; Last Updated: \today}

\maketitle

\begin{abstract}
This paper proposes a new approach to training recommender systems called \emph{deviation-based learning}. The recommender and rational users have different knowledge. The recommender learns user knowledge by observing what action users take upon receiving recommendations. Learning eventually stalls if the recommender always suggests a choice: Before the recommender completes learning, users start following the recommendations blindly, and their choices do not reflect their knowledge. The learning rate and social welfare improve substantially if the recommender abstains from recommending a particular choice when she predicts that multiple alternatives will produce a similar payoff.
\end{abstract}

\keywords{Recommender System, Social Learning, Information Design, Strategic Experimentation, Revealed Preference}

\JEL{C44, D82, D83}

\vspace{1em}

\pagebreak

\section{Introduction}

In every day of our life, our choices rely on recommendations made by others based on their knowledge and experience.
The prosperity of online platforms and artificial intelligence has enabled us to develop data-based recommendations, and many systems have been implemented in practice.
Successful examples include e-commerce (Amazon), movies (Netflix), music (Spotify), restaurants (Yelp), sightseeing spots (TripAdvisor), hotels (Booking.com), classes (RateMyProfessors), hospitals (RateMD), and route directions by car navigation apps (Google Maps). 
These ``recommender systems''\footnote{In a narrow sense, a ``recommender system'' is defined as an algorithm for predicting rating users would enter. For example, \cite{AT2005_nextrecommend} state ``In its most common formulation, the recommendation problem is reduced to the problem of estimating ratings for the items that have not been seen by a user'' (p.~734). Our system is not a ``recommender system'' in this narrow sense because we do not utilize ratings. This paper adopts a broader definition of ``recommender system'' to denote any mechanism recommending arms (items or actions) to help users make better decisions.} are helping us to make better decisions.

The advantages of the data-based recommender systems can be classified into two groups. First, the system can leverage experiences of the most knowledgeable experts. Once the system learns experts' behavior using data, the system can report what a user would do if he had expert knowledge. Accordingly, with the help of the recommender system, all users can optimize their payoffs even when they have no experience with the problem they are facing. Second, the system can utilize information that an individual cannot access easily or quickly. For example, restaurant-reservation systems present the list of all available reservation slots at that moment, and online travel agencies provide the prices and available rooms of hotels. These conditions change over time; thus, it would be very difficult for an individual user to keep up to the minute with the latest conditions on their own. Accordingly, even experts benefit from the information provided by recommender systems.

One of the largest challenges in developing a recommender system is to predict users' payoffs associated with specific alternatives.
Real-world recommenders always confront the problem of insufficient initial experimentation (known as the ``cold start'' problem). Utilization of feedback provided by users is necessary, but such data are often incomplete and insufficient.
In particular, the system can rarely observe information about users' payoffs, which is crucial in many learning methods (e.g., reinforcement learning and algorithms to solve the multi-armed bandit problem).
As a proxy for payoffs, many recommender systems have adopted \emph{rating-based learning}, which substitutes the ratings submitted by the users for the true payoffs of users.
Nevertheless, a number of previous studies have reported that user-generated ratings often involve various types of biases and are not very informative signals of users' true payoffs \citep[e.g.,][]{salganik2006experimental,muchnik2013social,luca2016fake}.

In this paper, we propose a new approach to training recommender systems called \emph{deviation-based learning}. In our model, a recommender (she) faces many rational users (he) sequentially. Neither users' payoffs nor ratings are available.
Instead, we train a recommender system using data about past recommendations and users' actions taken after receiving recommendations. By focusing on the relationship between recommendation and action choice, the recommender can infer the user's knowledge. For example, if the recommender has not yet been well-trained, expert users often deviate from her recommendations. On the flip side of the coin, upon observing expert users' deviations, the recommender can recognize that she misestimated the underlying state. Conversely, if a user follows the recommendation while the recommender is not perfectly sure whether the user would follow it, then the recommender can improve her confidence in the accuracy of her recommendations. We refer to this approach as ``deviation-based learning'' because these two examples, both based on deviations, represent the most primitive ways of extracting users' knowledge from choices given information.

We evaluate the tradeoff between choice efficiency and communication complexity in deviation-based learning. If the recommender could send a more informative message to users, then users can better understand the recommender's information and make a better choice. However, because simpler communications are preferred in practice, the real-world recommender system often attempts to make the recommendation as simple as possible. By analyzing a stylized environment, we demonstrate that almost all gain is obtained by slightly enriching the communication from a simple straightforward recommendation, i.e., just to inform the estimated-to-be-better choice to users. A slightly richer communication not only better conveys the recommender's information to users but also enhances the recommender's learning by making users' choices more informative. Our results suggest that, in a wide range of environments, an optimal recommender system employs communication that is slightly more complex than a straightforward recommendation.\footnote{We do not model the communication cost explicitly because its shape, structure, and magnitude critically depend on the applications. Instead, we characterize the tradeoff by evaluating the benefits of enriching communication.}

An illustrative example is app-based car navigation systems (e.g., Google Maps or, Waze). In recent years, such navigation apps have become extremely popular.\footnote{According to \citet{Android2019}, Google Maps became the second app (after YouTube) to reach five billion downloads.} Navigation apps have an immense information advantage over individual drivers because they use aggregated information to dynamically detect traffic jams and then recommend less-congested routes. Accordingly, such apps are useful even for expert drivers who can figure out the shortest route without the recommender's help.

When a navigation app is launched, the app does not have complete information about road characteristics---local drivers have more comprehensive knowledge about their neighborhoods. For example, the app may miss information about hazard conditions associated with specific roads (e.g., high-crime-rate areas, rock-fall hazard zones, and accident blind spots). Such hazardous roads are often vacant because local drivers avoid them, leading a na{\"\i}ve recommender to consider such a route desirable and recommend it. Drivers unfamiliar with this hazard information might then follow the recommendation, exposing them to danger. To avoid this tragedy, the app must learn road characteristics to understand \emph{why} the road is vacant.

The classical rating-based approach is unsuitable for detecting hazards in the car navigation problem because (i) detailed ratings and reviews are often unavailable, and (ii) the app should not wait until it observes low payoffs because that would mean incidents or accidents indeed occur, causing problems for some users. Moreover, this problem cannot be solved completely by inputting hazard information manually because it is difficult to list all relevant hazard conditions in advance.\footnote{Nevertheless, navigation apps attempt to avoid this problem by manually inputting hazard information in practice. For example, in Israel and Brazil, Waze provides the option of alerting about high-risk routes: \url{https://support.google.com/waze/answer/7077122?hl=en} (seen on July 22, 2021).}

Our deviation-based learning approach solves this dilemma by extracting local drivers' (i.e., experts') knowledge. For example, when a hazardous route is recommended, a local driver ignores the recommendation and chooses a different route. Given that the app has an information advantage (i.e., insight into road congestion), such a decision would not be made unless the app has misunderstood something about the static map (with which the local driver is very familiar). Thus, upon observing a deviation, the app can update its knowledge about the static map. Conversely, if the app recommends a route that involves a potentially hazardous road but observes that the local driver followed the suggested route, then the app can conclude that the road is not so dangerous. In this manner, the app can better understand the static map and improve its recommendations. Furthermore, the deviation-based learning approach can detect hazardous roads \emph{before} additional incidents occur because the recommender can observe that local drivers avoid hazardous roads from the outset.

We analyze how the recommender can efficiently perform deviation-based learning. Formally, we analyze a stylized model in which each user has two arms (actions), as in seminal papers on information design theory (e.g., \citealt{KMP2014_wisdom} and \citealt{Che2017}). A user's payoff from an arm is normalized to zero, and his payoff from another arm is given by $x\theta + z$. The \emph{context} $x$ specifies the user's problem (in the navigation problem, a context includes elements such as the origin, destination, and means of transportation). We assume each user is an expert who knows the parameter $\theta$ and can correctly interpret his context $x$ to predict the first term of his payoff, $x \theta$ (i.e., he knows the static map and can find the shortest safe route). The recommender has additional information about the value of $z$ (e.g., congestion), which is not observed by the user. We assume that local drivers are more knowledgeable than the recommender about the static map; the recommender does not at first know the parameter $\theta$ and must learn it over time. For each user, the recommender sends a recommendation (message) based on a precommitted information structure. Upon observing the recommendation, the user forms a belief about the unobservable payoff component $z$ and selects either one of the two actions.

We demonstrate that the size of the message space is crucial for efficiency, showing that by making the message space \emph{slightly} larger than the action space, we obtain a \emph{very large} welfare gain.
A large message space enables the recommender to send a signal that indicates the recommender is ``on the fence'' which means that the payoffs associated with the two distinct actions are likely similar.
The availability of such messaging reveals users' information more efficiently and improves the learning rate exponentially without sacrificing the utilization of current knowledge.

First, we consider a binary message space, which is the same size as the action space. We first analyze the \emph{straightforward policy}, which simply informs the user which arm is estimated to be better. Our first main theorem shows that learning is very slow under the straightforward policy, and therefore, users suffer from substantial welfare loss. Here, recall that the recommended arm is chosen based on the recommender's current knowledge. Given the recommender has an information advantage, provided the recommender knows the state \emph{moderately} well, users are prone to following the recommendation blindly despite its flaws. Because the recommender knows that no deviation will occur, she learns nothing from users' subsequent behaviors. Formally, we prove that the expected number of users required to improve the recommendations increases exponentially as the quality of the recommender's knowledge improves. This effect slows learning severely, which has a large welfare cost: While the per-round welfare loss in this situation is moderately small (because most users want to follow the recommendation blindly), the loss accumulates to a large amount in the long run.

We demonstrate that an ideal solution to the problem above is to use a ternary message space.
We focus on the \emph{ternary policy}, a simple policy that recommends a particular arm only if the recommender is confident in her prediction. Otherwise, the recommender explains that she is ``on the fence,'' which means that, based on the recommender's current information, the two actions are predicted to produce similar payoffs. When the recommender is confident about her prediction (which is almost always the case after the quality of her knowledge has become high), the user also confidently follows the recommendation, which maximizes the true payoff with high probability. Furthermore, when the recommender admits that she is on the fence, the user's choice is very useful in updating the recommender's belief: The user's choice reveals whether the recommender overestimates or underestimates the state, and this information shrinks the recommender's confidence interval geometrically. With the ternary message space, the total welfare loss is bounded by a constant (independent of the number of users). We confirm this theoretical result by conducting numerical simulation and demonstrate that the ternary policy reduces the welfare loss by 99\% compared to the straightforward policy under a certain simulation setting. Note also that the performance difference becomes arbitrarily large when we consider a longer time horizon. Accordingly, the recommender can improve the learning rate and social welfare drastically by increasing the size of the message space just by one.

To confirm the superiority of the ternary policy, we also develop and analyze two further binary policies, the \emph{myopic policy} and the \emph{exploration-versus-exploitation (EvE) policy}. The myopic policy maximizes the current user's expected payoff with respect to the recommender's current knowledge. While the myopic policy sometimes achieves a strictly better payoff than the straightforward policy, it is asymptotically equivalent to the straightforward policy and the order of welfare loss is also the same. The EvE policy sacrifices early users' payoffs but rapidly learns the state at first and exploits the knowledge gained to achieve better welfare for late users. Among the three binary policies, the EvE policy performs the best. The myopic policy and the EvE policy feature several drawbacks and are difficult to implement. The ternary policy is easier to use, despite requiring one more message to be sent. Moreover, we demonstrate that the ternary policy substantially outperforms all three binary policies in terms of social welfare.

The rest of the paper is organized as follows. Section~\ref{sec: related literature} reviews the literature. Section~\ref{sec: model} describes the model. Section~\ref{sec_binary} studies the straightforward policy. Section~\ref{sec_ternary} studies the ternary policy. Section~\ref{sec: other binary policies} considers the myopic policy and the EvE policy. Section~\ref{sec_sim} presents the simulation results. Section~\ref{sec: conclusion} concludes the research.

\section{Related Literature}\label{sec: related literature}

\paragraph*{Information Design}

The literature on strategic experimentation \citep[e.g.,][]{bolton1999strategic,KMP2014_wisdom,Che2017} has considered an environment where a social planner can improve (utilitarian) social welfare by inducing early users' effort for exploration, while myopic users have no incentive to explore the state. The previous studies have demonstrated that effort for exploration can be induced by controlling users' information.
In our recommender's problem, the recommender also wants to explore information to improve the payoffs of late users. However, to achieve this, we sacrifice no user's payoff: By increasing the message space slightly, we can improve all users' payoffs substantially. Rather, this paper points out that there is a tradeoff between choice efficiency and communication complexity.

Furthermore, this paper elucidates how the recommender learns experts' knowledge via users' actions. This contrasts with previous studies on strategic experimentation and information design \citep[e.g.,][]{KG2011,bergemann2016bayes,bergemann2016information}, which have explored ways of incentivizing agents to obey recommendations. Indeed, when either (i) the recommender (sender) has complete information about the underlying parameter (as in information design models) or (ii) payoffs (or signals about them) are observable (as in strategic experimentation models), a version of the ``revelation principle'' (originally introduced by \citealp{myerson1982optimal}) holds. In these cases, without loss of generality, we can focus on incentive-compatible straightforward policies (which always recommend actions from which no user has an incentive to deviate). By contrast, we demonstrate that when the recommender learns about underlying parameters by observing how users act after receiving the recommendation, only recommending a choice is often inefficient.

\paragraph*{Recommender System}
Although the recommender systems have mostly focused on predicting ratings, the vulnerability of rating-based learning has been widely recognized. \citet{salganik2006experimental} and \citet{muchnik2013social} show that prior ratings bias the evaluations of subsequent reviewers.
\citet{marlin2009collaborative} show that ratings often involve nonrandom missing data because users choose which item to rate.
\citet{mayzlin2014promotional} and \citet{luca2016fake} report that firms attempt to manipulate their online reputations strategically. While the literature has proposed several approaches to addressing these issues (for example, \citet{sgr2016_nipsfeedbackloop} propose a way to correct bias by formulating recommendations as a control problem), the solutions proposed thus far remain somewhat heuristic. That is, their authors have not identified the fundamental source of the biases in rating systems using a model featuring rational agents.\footnote{See the survey of the biases in rating systems by \citet{CDWFWH_biasrecommend}.}
By contrast, our deviation-based approach is fundamentally free from these biases because our approach does not assume the availability of ratings.

\paragraph*{Learning from Observed Behaviors}

In the literature of economic theory, inferring a rational agent's preferences given their observed choices is rather a classic question \citep[\emph{revealed preference theory}, pioneered by][]{samuelson1938note}.\footnote{More recently, \cite{Cheung2021} has developed a revealed-preference framework under the presence of recommendations and proposed a method for identifying how recommendations influence decisions.}
Furthermore, recent studies on machine learning and operations research, including inverse reinforcement learning \citep{ng2000} and contextual inverse optimization \citep{ahuja2001,omar2021} have also proposed learning methods that recover a decision maker's objective function from his behavior.\footnote{Classical learning methods, such as reinforcement learning (see \citealt{Sutton2018}, a standard textbook on this subject) and algorithms that solve multi-armed bandit problems \citep{Thompson1933,lai1985}, assume that the learner can directly observe realized payoffs.} These methods can usefully extract experts' knowledge to make better predictions about users' payoffs.

Our contribution to this literature can be summarized as follows.
First, we elucidate the effect of the recommender's information advantage. In many real-world problems (e.g., navigation), the recommender is not informationally dominated by expert users; thus, decisions made by experts who are not informed of the recommender's information are typically suboptimal. This paper proposes a method to efficiently extract experts' knowledge and combine it with the recommender's own information.
Second, we articulate the role of users' beliefs about the accuracy of the recommender's predictions. When the recommendation is accurate, users tend to follow recommendations blindly, and therefore, learning stalls under a na{\"\i}ve policy.
Third, we demonstrate that the recommender can improve her learning rate significantly by intervening in the data generation process through information design. In our environment, learning under the ternary policy is exponentially faster than learning under the binary (straightforward) policy. The difference in social welfare achieved is also large.

The marketing science literature has proposed adaptive conjoint analysis as a method of posing questions to estimate users' preference parameters in an adaptive manner. Several studies, such as \cite{toubia2007} and \cite{saure2019ellipsoidal}, have considered adaptive \emph{choice-based} conjoint analysis, which regards choice sets as questions and actual choices as answers to those questions. This strand of the literature has also developed efficient methods for intervening in the data generation process to extract users' knowledge. However, in the recommender problem, the recommender is not allowed to select users' choice sets to elicit their preferences.

\section{Model}\label{sec: model}

\subsection{Environment}

We consider a sequential game that involves a long-lived \emph{recommender} and $T$ short-lived \emph{user}s. Initially, the state of the world $\theta \sim \Unif[-1, 1]$ is drawn. We assume that all users are experts and more knowledgeable than the recommender about the state $\theta$ initially.\footnote{As long as the recommender can identify the set of expert users, she can exclude nonexpert users from the model. In the navigation app example, it should not be difficult for the app to identify the set of local residents who drive cars frequently. Once the recommender trains the system using the data of the experts' decisions, then she can use it to make recommendations to nonexpert users.} Formally, we assume that while users know the realization of $\theta$, the recommender knows only the distribution of $\theta$. Accordingly, the recommender learns about $\theta$ via the data obtained. 

Users arrive sequentially. At the beginning of round $t \in [T] \coloneqq \{1,\ldots, T\}$, user
$t$ arrives with the shared \emph{context} $x_t \sim \Normal$, where $\Normal$ is the standard (i.e., with a zero mean and unit variance) normal distribution.\footnote{We assume that the state $\theta$ and context $x_t$ are one-dimensional because this assumption enables us to write the recommender's estimate as a tractable closed-form formula ($\Ex_t[\theta] = m_t \coloneqq (u_t + l_t)/2$), where $(u_t, l_t)$ is defined in page~\pageref{sentence: l_t u_t defined}). If $\theta$ and $x_t$ are multi-dimensional, then $\Ex_t[\theta]$ is a centroid of a convex polytope defined by $t-1$ faces, which does not have a tractable formula and is generally \#P-hard to compute \citep{Radamacher2007convexpolytope}, while a reasonable approximation is achieved by a random sampling method \citep{bertsimas2004}. We consider the high-level conclusion of this paper does not crucially depend on the dimensionality of the state and contexts.}
The context $x_t$ is public information and observed by both user $t$ and the recommender. The context specifies the user's decision problem. The recommender additionally observes her private information $z_t \sim \Normal$, the realization of which is not disclosed to user $t$. Each user has binary actions available: arm $-1$ and arm $1$.\footnote{Alternatively, we can assume that each user has many actions but all except two are obviously undesirable in each round.}
Without loss of generality, the user's payoff from choosing arm $-1$ is normalized to zero: $r_t(-1) = 0$.\footnote{\label{footnote: normalization} We are not assuming that arm $-1$ is a safe arm, but normalizing the payoff of one of the two arms. To illustrate this, let us start from the following formulation: $r_t(1) = x_t\theta^{(1)} + z_t^{(1)}$ and $r_t(-1) = x_t \theta^{(-1)} + z_t^{(-1)}$. The user chooses arm $1$ if and only if $r_t(1) > r_t(-1)$, i.e., $x_t(\theta^{(1)} - \theta^{(-1)}) + (z_t^{(1)} - z_t^{(-1)})$. By redefining $r_t(-1) \equiv 0$, $\theta = \theta^{(1)} - \theta^{(-1)}$ and $z_t = z_t^{(1)} - z_t^{(-1)}$, the model is reduced to a normalized one, without changing the users' decision problem.} The payoff from choosing arm $1$ is given by
\begin{equation} 
r_{t}(1) = x_t \theta + z_t.
\end{equation}
We refer to $x_t \theta$ as the \emph{static payoff} and $z_t$ as the \emph{dynamic payoff}. These names come from the navigation problem presented as an illustrative example, in which users are assumed to be familiar with the static road map but do not observe dynamic congestion information before they select the route. All the variables, $\theta$, $(x_t)_{t \in [T]}$, $(z_t)_{t \in [T]}$ are drawn independently of each other.

In round $t$, the recommender first selects a \emph{recommendation} $a_t \in A$, where $A$ is the \emph{message space}. For example, if the recommender simply reports the estimated-to-be-better arm, then the message space is equal to the action space: $A = \{-1, 1\}$. Observing the recommendation $a_t$, user $t$ forms a posterior belief about the realization of $z_t$ and chooses an action $b_t \in B = \{-1, 1\}$. User $t$ receives a payoff of $r_t(b_t)$ and leaves the market. The recommender cannot observe users' payoffs.

Technically, by sending a signal, the recommender informs the realization of the dynamic payoff $z_t$, which users cannot observe directly. In round $t$, the recommender initially commits to an \emph{signal function} $\mu_t:\Real \to A$ that maps a dynamic payoff $z_t$ to a message $a_t$. Subsequently, the recommender observes the realization of $z_t$ and mechanically submits a message (recommendation) $a_t = \mu_t(z_t)$. When the recommender chooses the round-$t$ information structure, she can observe the sequences of all contexts $(x_s)_{s=1}^t$, all past dynamic payoffs $(z_s)_{s=1}^{t-1}$, all past messages, $(a_s)_{s=1}^{t-1}$, and all past actions that users took, $(b_s)_{s=1}^{t-1}$. A \emph{policy} is a rule to map the information that the recommender observes ($(z_s,a_s, b_s)_{s=1}^{t-1}$ and $(x_s)_{s=1}^{t}$) to a signal function. 

Receiving a message $a_t = \mu_t(z_t)$, user $t$ forms a posterior belief about $z_t$, and choose an arm that has a better conditional expected payoff. As in the information design literature \citep[e.g.,][]{KG2011}, we assume that the user knows the information structure: User $t$ observes the signal function $\mu_t$.\footnote{The other information is redundant for the user's decision problem, given that user $t$ observes the signal function $\mu_t$.}
Upon observing $a_t$, user $t$ forms his posterior belief about the dynamic payoff $z_t$. User $t$ computes the conditional expected payoff of arm $1$, $\Ex_{z_t}[x_t \theta + z_t|\mu_t, a_t]$, based on his posterior belief. Then, user $t$ selects an arm $b_t \in B \coloneqq \{-1, 1\}$, which is expected to provide a larger payoff: $b_t = 1$ if $\Ex_{z_t}[x_t \theta + z_t|\mu_t, a_t] > 0$ and $b_t = -1$ otherwise.

\subsection{Regret}
Utilitarian \emph{social welfare} is defined as the sum of all users' payoffs: $\sum_{t=1}^T r_t(b_t)$.
However, its absolute value is meaningless because we normalize $r_t(-1) \equiv 0$.\footnote{See also footnote~\ref{footnote: normalization}.}
Instead, we quantify welfare loss in comparison to the first-best scenario, which is invariant to the normalization. We define per-round regret, $\regret$, and (cumulative) regret, $\Regret$, as follows:
\begin{align}
\regret(t) &\coloneqq r_{t}(\ist) - r_{t}(b_t); \\
\Regret(T) &\coloneqq \sum_{t=1}^T \regret(t),
\end{align}
where $\ist \coloneqq \argmax_{b \in \{-1,1\}} r_{t}(b)$ is the superior arm with respect to true payoffs. 
Per-round regret, $\regret(t)$, represents the loss of the current (round-$t$) user due to a suboptimal choice. While user $t$ could enjoy $r_t(\ist)$ if he were to observe $z_t$, his actual payoff is $r_t(b_t)$. Therefore, his loss compared with the (unattainable) first-best case is given by $\regret(t)$. Since $\Regret(T)$ is a summation of $\regret(t)$, $\Regret(T)$ is the difference between the total payoffs from best arms $\sum_{t=1}^T r_t(\ist)$ (which is unattainable) and the actual total payoffs $\sum_{t=1}^T r_t(b_t)$.
The first-best benchmark, $\sum_{t=1}^T r_t(\ist)$, is independent of the policy and users' choices. Thus, the maximization of the total payoffs is equivalent to the minimization of the regret.

If the recommender already knows (or has accurately learned) the state $\theta$, then the recommender would always inform user $t$ of the superior arm, and the user would always obey the recommendation. Therefore, $b_t = \ist$ and $\regret(t) = 0$ would be achieved.
Conversely, if the recommender's belief about the state $\theta$ is inaccurate, then users cannot always select the superior arm.
Therefore, regret also measures the progress of the recommender's learning of $\theta$.

In this paper, we characterize the relationship between the size of the message space $|A|$ and the order of regret $\Regret(T)$. When the message space is a singleton (i.e., $|A| = 1$), the recommender can deliver no information about the dynamic payoff component $z_t$.
Consequently, users suffer from constant welfare loss in each round; therefore, the regret grows linearly in $T$, that is, $\Regret(T) = \Theta(T)$. By contrast, if the message space is a continuum (i.e., $A = \Real$), the recommender can inform each user $t$ of the ``raw data'' about the dynamic payoff $z_t$, i.e., she can send $a_t = z_t$ as a message.
In this case, users can recover true payoffs $r_t(1)$ and select the superior arms for every round. There is no need for the recommender to learn, and the regret of exactly zero is achieved, that is, $\Regret(T) = 0$ for all $T$. Nevertheless, an infinite message space incurs a large communication cost, making it considerably inconvenient. Practically, it is infeasible for real-world recommender systems to disclose all current congestion information.
The above argument indicates that there is a tradeoff between regret and communication complexity (i.e., the size of the message space $|A|$), which remains to be evaluated.
The following sections characterize the regret incurred by small finite message spaces, namely, the cases of binary and ternary message spaces ($|A| = 2, 3$).

\section{Binary Straightforward Policy}\label{sec_binary}

\subsection{Policy}

First, we consider the case of the binary message space, i.e., $A = \{-1, 1\} = B$. We say that a policy is \emph{binary} if it employs a binary message space. We begin with a \emph{straightforward policy} that simply discloses an arm that the recommender estimates to be superior.

The recommender's estimation proceeds as follows.
After observing user $t$'s choice $b_t$, the recommender updates the posterior distribution about $\theta$, characterized by $(l_t, u_t)$, according to Bayes' rule.\label{sentence: l_t u_t defined} 
The recommender's belief at the beginning of round $1$ is the same as the prior belief: $\Unif[-1, 1]$. Due to the property of uniform distributions, the posterior distribution of $\theta$ always belongs to the class of uniform distributions. 
The posterior distribution at the beginning of round $t$ is specified by $\Unif[l_t, u_t]$, where $l_t$ and $u_t$ are the lower and upper bounds, respectively, of the confidence interval at the beginning of round $t$.
Note that the \emph{confidence interval} $[l_t, u_t]$ shrinks over time:
\begin{equation}
    -1 \eqqcolon l_1 \le l_2 \le \cdots \le l_{T-1} \le l_{T} \le \theta \le u_T \le u_{T-1} \le \cdots \le u_2 \le u_1 \coloneqq 1,
\end{equation}
and thus the \emph{width of the confidence interval} $w_t \coloneqq u_t - l_t$ is monotonically decreasing.
In round $t$, the recommender believes that $\theta$ is drawn from the posterior distribution, $\Unif[l_t, u_t]$, and therefore, the estimated payoff from arm $1$ is
\begin{equation} 
 \hatr_{t}(1) \coloneqq \Ex_{\tiltheta_t \sim \Unif[l_t, u_t]}\left[x_t \tiltheta_t\right] + z_t 
 = x_t m_t + z_t,
\end{equation}
where $m_t \coloneqq (l_t + u_t)/2 = \Ex_{\tiltheta_t \sim \Unif[l_t, u_t]}[\tiltheta_t]$.

The straightforward policy recommends arm $1$ if and only if the recommender believes that the expected payoff from arm $1$ is larger than that from arm $-1$, i.e., $\hatr_{t}(1) = x_t m_t + z_t > 0 = \hatr_t(-1)$.\footnote{We ignore equalities of continuous variables that are of measure zero, such as $\hatr_{t}(1) = 0$.} That is, the signal function $\mu_t$ is given by
\begin{equation}
\mu_t(z_t; m_t, x_t) = 
\left\{
\begin{array}{ll} 
1 & \text{ if }x_t m_t + z_t > 0; \\
-1 & \text{ otherwise}.
\end{array}
\right.
\end{equation}

Although the straightforward policy is simple, natural, intuitive, and easy to understand, it is not an optimal binary policy. Indeed, Section~\ref{sec: other binary policies} introduces two binary policies that are substantially more complex and perform better than the straightforward policy. Nevertheless, we discuss the straightforward policy as the main benchmark because its simple and natural structure is desirable in terms of communication complexity. For further discussion, see Section~\ref{sec: other binary policies}.

\subsection{Learning}

From now, we consider users' action choices under the straightforward policy. User $t$'s conditional expected payoff from choosing arm $1$ is given by
\begin{equation}
    \Ex[r_t(1)|\mu_t, a_t] = x_t \theta + Z_t,
\end{equation}
where $Z_t \coloneqq \Ex[z_t|\mu_t, a_t]$.

\begin{figure}
    \centering
    \includegraphics[width = 0.7 \textwidth]{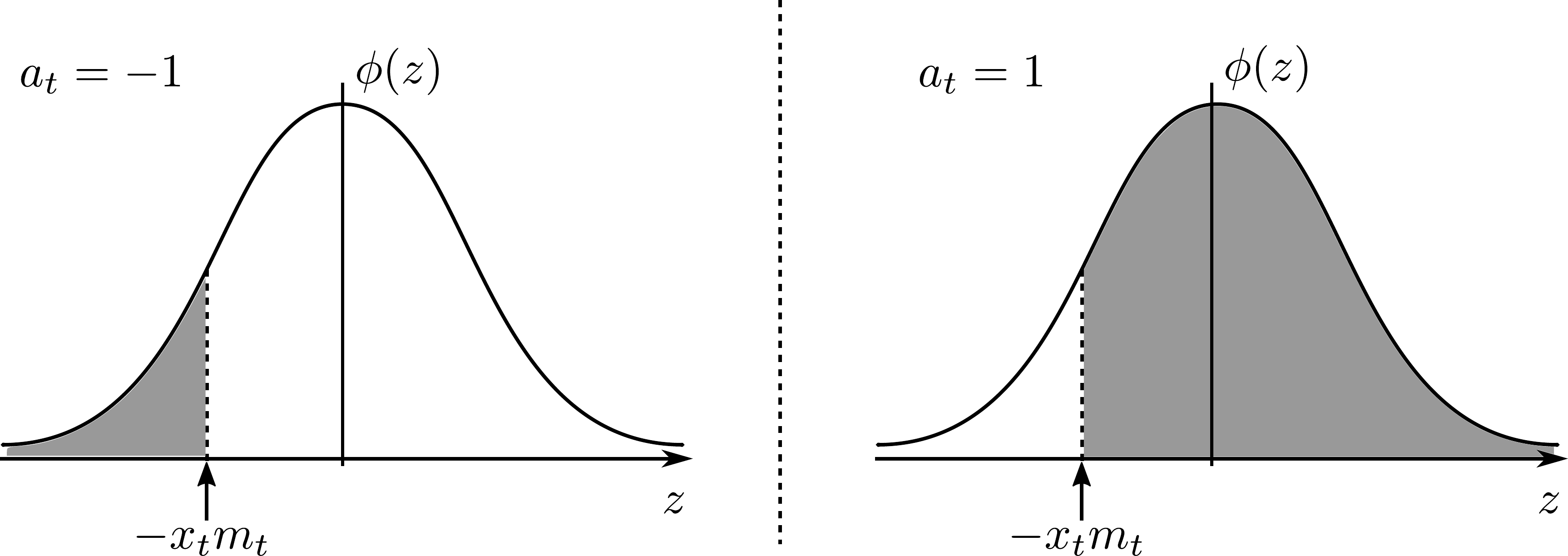}
    \caption{The shape of the posterior distribution of $z_t$ conditional on $a_t=-1$ (left) and $a_t = 1$ (right) being sent under the straightforward policy, when $x_t m_t > 0$.}
    \label{fig:ez_recom}
\end{figure}

The prior distribution of $z_t$ is the standard normal distribution, $\Normal$. In addition, $a_t = 1$ implies $z_t > -x_t m_t$, whereas $a_t = -1$ implies $z_t < -x_t m_t$. Accordingly, the posterior distribution of $z_t$ is always a truncated standard normal distribution. 
Let $\trunkN(\alpha, \beta)$ be the truncated standard normal distribution with support $(\alpha, \beta)$. Then, the posterior distribution of $z_t$ after $a_t = 1$ and $a_t = -1$ are $\trunkN(-x_t m_t, \infty)$ and $\trunkN(-\infty, -x_t m_t)$, respectively. These distributions are illustrated as Figure~\ref{fig:ez_recom}. To summarize, we have
\begin{equation}\label{ineq_ehatzt}
    \Ehatzt \coloneqq \Ex[z_t|\mu_t, a_t] = 
\left\{
\begin{array}{ll} 
\Ex_{z \sim \trunkN(-\infty, -x_t m_t)} [z] & \text{ if }a_t=-1; \\
\Ex_{z \sim \trunkN(-x_t m_t, \infty)} [z] & \text{ if }a_t=1.
\end{array}
\right.
\end{equation}
The arm that user $t$ will choose is as follows:
\begin{equation} \label{eq_user's decision}
b_t = 
\left\{
\begin{array}{ll} 
1 & \text{ if }x_t \theta + Z_t > 0; \\
-1 & \text{ otherwise}.
\end{array}
\right.
\end{equation}

Upon observing the user's decision $b_t$, the recommender updates her confidence interval, $[l_t, u_t]$. When the user chooses $b_t = 1$, the recommender can recognize that $x_t \theta + Z_t > 0$. If $x_t > 0$, then this is equivalent to $\theta > -Z_t/x_t$; and if $x_t < 0$, then this is equivalent to $\theta < -Z_t/x_t$. Using this information, the recommender may be able to shrink the support of the posterior distribution of $\theta$. We can analyze the case of $b_t = -1$ in a similar manner. The belief update rule is as follows:
\begin{equation}\label{ineq_update_left}
    l_{t+1} = 
    \left\{
    \begin{array}{ll} 
    l_t & \text{ if } b_t \cdot \sgn(x_t) < 0; \\
    \max\{l_t, - Z_t/x_t\} & \text{ if } b_t \cdot \sgn(x_t) > 0,
\end{array}
\right.
\end{equation}
\begin{equation}\label{ineq_update_right}
    u_{t+1} = 
    \left\{
    \begin{array}{ll} 
    \min\{u_t, -Z_t/x_t\} & \text{ if } b_t \cdot \sgn(x_t) < 0; \\
    u_t & \text{ if } b_t \cdot \sgn(x_t) > 0,
\end{array}
\right.
\end{equation}
where $\sgn$ is the following signum function:\footnote{Because $x_t = 0$ occurs with probability zero, we ignore such a realization.}
\begin{equation}
    \sgn(x) \coloneqq
    \left\{
    \begin{array}{ll}
    1 & \text{ if } x > 0; \\
    -1 & \text{ if } x < 0.
    \end{array}
    \right.
\end{equation}

\subsection{Failure}

We present our first main theorem, which evaluates the order of total regret under the straightforward policy.

\begin{thm}[Regret Bound of Straightforward Policy]\label{thm_binary}
For the straightforward policy, there exists a $\tilTheta(1)$ (polylogarithmic) function\footnote{$\tilO, \tilOmega$, and $\tilTheta$ are Landau notations that ignore polylogarithmic factors (e.g., $\tilTheta(\sqrt{T}) = (\log T)^c \Theta(\sqrt{T})$ for some $c\in\Real$). We often treat these factors as if they were constant because polylogarithmic factors grow very slowly ($o(N^\epsilon)$ for any exponent $\epsilon > 0$).} $f:\mathbb{Z} \to \Real$ such that 
\begin{equation}
\Ex[\Regret(T)] \ge T/f(T).
\end{equation}
\end{thm}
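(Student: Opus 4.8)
The plan is to prove the nearly-linear lower bound $\Ex[\Regret(T)] = \tilOmega(T)$ by separating two effects and multiplying them: the per-round regret is of order $w_t^2$, while the width $w_t$ of the confidence interval shrinks so slowly that it stays $\tilOmega(1/\log T)$ for every $t \le T$. Granting both, $\Regret(T) = \sum_{t=1}^T \regret(t) \gtrsim \sum_{t=1}^T w_t^2 \gtrsim T/(\log T)^2$, which has the claimed form $T/f(T)$ with $f$ polylogarithmic. Throughout I write $\Ex_t[\cdot]$ for expectation conditional on the recommender's round-$t$ information, and I restrict attention to the generic event $\theta \neq 0$ and to rounds large enough that $[l_t,u_t]$ is bounded away from $0$; these restrictions only weaken the bound I am proving.

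The first and main step is the learning-rate estimate. From the update rules \eqref{ineq_update_left}--\eqref{ineq_update_right}, the interval shrinks in round $t$ if and only if the user's decision threshold $-\Ehatzt/x_t$ lies strictly inside $(l_t,u_t)$; I call such a round \emph{informative}, since it is the only way the recommender gains information. Using the closed form of $\Ehatzt$ in \eqref{ineq_ehatzt} together with inverse-Mills-ratio asymptotics for the truncated normal, the informative threshold evaluates to $m_t + (1+o(1))/(x_t^2 m_t)$; hence it enters $(l_t,u_t) = (m_t - w_t/2,\, m_t + w_t/2)$ only when $|x_t| \gtrsim 1/\sqrt{m_t w_t}$, and this must coincide with the recommender sending the ``rare'' message, which forces $z_t$ into a Gaussian tail whose depth grows with $x_t m_t$. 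Integrating the joint tail of $(x_t,z_t)$ over this region shows that, conditional on the round-$t$ information,
\begin{equation}
\Prob\big(\text{round } t \text{ is informative}\big) \le \exp\big(-\Omega(1/w_t)\big).
\end{equation}

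I would then upgrade this per-round bound to a trajectory-wide lower bound on $w_t$. Because $w_t$ is non-increasing, I work on dyadic scales: driving the width from $2^{-(k-1)}$ to $2^{-k}$ requires at least one informative round, yet while the width lies in that band every round is informative with probability at most $\exp(-c\,2^{k})$, so the time to reach width $2^{-k}$ stochastically dominates a geometric variable with that minuscule success probability. A tail bound at the critical scale $k \approx \log_2 \log T$ then gives $\Prob(w_T \le c'/\log T) \le 1/2$; since $w_t \ge w_T$ for all $t \le T$, with probability at least $1/2$ we have $w_t \ge c'/\log T$ simultaneously for every $t \le T$. For the per-round regret, I would show $\Ex_t[\regret(t)] = \Theta(w_t^2)$ in the confident regime: the truncated-normal correction in $\Ehatzt$ keeps the user's threshold on the same side of $\theta$ as the recommendation, so the user always follows it, and a mistake occurs exactly when $z_t$ falls between the recommender's threshold $-x_t m_t$ and the optimal threshold $-x_t \theta$; the regret magnitude $|x_t\theta + z_t|$ is then at most $|x_t(m_t-\theta)|$, and integrating over $z_t$ and $x_t$ gives $\Ex_t[\regret(t)] = \Theta((m_t-\theta)^2)$. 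Since each informative update splits the interval near its midpoint, $|m_t - \theta|$ is of order $w_t$ along the trajectory, so $\Ex_t[\regret(t)] = \Theta(w_t^2)$. Combining the two steps on the probability-$\tfrac12$ event and taking expectations yields $\Ex[\Regret(T)] \gtrsim T(c'/\log T)^2 = \tilOmega(T)$.

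The hard part will be the learning-rate estimate and its trajectory-wide consequence. Establishing $\Prob(\text{informative}) = \exp(-\Omega(1/w_t))$ needs uniform control of the inverse Mills ratio and of the joint $(x_t,z_t)$ tail across all admissible interval positions, including the transient early regime where $[l_t,u_t]$ still straddles $0$ and learning is comparatively fast; I would treat that regime separately and absorb it into constants. Equally delicate is converting the per-round probabilities into a lower bound on $w_t$ while the shrinkage amounts are themselves random, and certifying that $|m_t - \theta| = \Theta(w_t)$ for a constant fraction of rounds, which requires the joint law of $(m_t-\theta,\, w_t)$ along the trajectory rather than either quantity in isolation.
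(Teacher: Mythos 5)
Your two building blocks are exactly the paper's Lemmas~\ref{lem_reglower_round} and \ref{lem_update} (per-round regret $\Omega((m_t-\theta)^2)$ and update probability $\le \exp(-\Omega(1/w_t))$), but the way you assemble them has two genuine gaps, both of which you flag at the end without closing. First, the dyadic/geometric-domination step for $\Prob(w_T \le c'/\log T) \le 1/2$ does not go through as stated: the informative round that pushes the width below $2^{-k}$ need not occur while the width sits in the band $[2^{-k}, 2^{-(k-1)}]$. It can occur while $w_t$ is still of constant order, where Lemma~\ref{lem_update} gives only a constant upper bound on the update probability, and a single deviation round can shrink the interval by far more than a factor of two (cf.\ Theorem~\ref{lem_deviation}: a deviation forces $w_{t+1} < w_t/2$ with no lower bound). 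Ruling this out requires controlling where the new endpoint $-Z_t/x_t$ lands relative to $\theta$, jointly with the position of the opposite endpoint, over all $T$ rounds --- a different and harder estimate than the one you invoke. Second, even granting $w_t \ge c'/\log T$ for all $t$, the regret lower bound is in $(m_t-\theta)^2$, not $w_t^2$, and your bridge between them --- ``each informative update splits the interval near its midpoint'' --- is false: the update sets an endpoint to $-Z_t/x_t$, and the very condition for an update to occur ($|{-Z_t/x_t}-m_t| \ge \Cezl\min\{1/|x_t|,1/x_t^2\}$ must fail to exceed $w_t/2$) says nothing that forces $\theta$ to end up off-center. If $\theta$ happens to sit near $m_t$, the per-round regret is $o(w_t^2)$ and the sum degrades.

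The paper's proof avoids both problems by lowering the bar: since the target is $T/\polylog(T)$, it suffices to exhibit a bad configuration occurring with probability $\Omega(1/(\log T)^2)$ rather than $1/2$. It explicitly constructs events $\EA$ and $\EB$ in rounds $1$ and $2$ (each of probability $\Theta(1/\log T)$, obtained by forcing $x_1$ and $x_2$ into windows of width $\Theta(1/\log T)$) under which $u_2$ lands within $C_1/(6\log T)$ above $\theta$ and $l_3$ lands between $2C_1/(3\log T)$ and $5C_1/(6\log T)$ below $\theta$. This simultaneously delivers $w_3 = \Theta(1/\log T)$ \emph{and} $|\theta - m_3| = \Theta(1/\log T)$ by construction, so both of your delicate trajectory-wide claims are replaced by a two-round explicit computation; Lemma~\ref{lem_update} then freezes $(l_t,u_t)$ for the remaining rounds with probability $1-1/T$, and the extra $1/(\log T)^2$ probability factor is absorbed into $f(T)$. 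If you want to salvage your route, you would need either to prove the constant-probability uniform lower bound on $w_t$ together with a quantitative off-centeredness statement for $\theta$ along the trajectory, or to retreat to the paper's strategy of engineering a low-probability but analyzable stalling event.
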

Theorem~\ref{thm_binary} shows that the total regret is $\tilOmega(T)$, which implies that users suffer from a large per-round regret even in the long run.

All the formal proofs are presented in Appendix~\ref{sec: proof}. The intuition of Theorem~\ref{thm_binary} is as follows. While each user precisely knows his static payoff $x_t \theta$, he has access to the dynamic payoff $z_t$ only via recommendation. To help the user make the best decision, the recommender must identify which arm is better as a whole. The recommender must therefore learn about the state $\theta$ in order to figure out the value of $r_t(1) = x_t \theta + z_t$ via the users' feedback $b_t$. As the recommender becomes more knowledgeable about $\theta$, users' feedback becomes less informative: Rational users rarely deviate from (moderately) accurate recommendations because the recommender's information advantage (in terms of information about the dynamic payoff term) tends to dominate the estimation error. Consequently, when recommendations are accurate, deviations are rarely observed, and the recommender has few opportunities to improve her estimations.

In the following, we provide two lemmas that characterize the problem and then discuss how we derive Theorem~\ref{thm_binary} from these lemmas.

\begin{lem}[Lower Bound on Regret per Round]\label{lem_reglower_round}
Under the straightforward policy, there exists a universal constant $\Creg > 0$ such that the following inequality holds:\footnote{A universal constant is a value that does not depend on any model parameters.}
\begin{equation}\label{ineq_reglower_round}
\Ex[\regret(t)]
\ge \Creg |\theta - m_t|^2.
\end{equation}
\end{lem}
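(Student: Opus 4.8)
The plan is to lower-bound $\Ex[\regret(t)]$ by its contribution on a single, carefully chosen compact event on which the recommender's \emph{underestimation} of the static payoff provably misleads the user. Fix $\theta$ and the round-$t$ belief, write $\delta \coloneqq \theta - m_t$, and abbreviate the true and estimated static payoffs by $s \coloneqq x_t \theta$ and $\hat{s} \coloneqq x_t m_t$, so that $s - \hat{s} = x_t \delta$. Since arm $-1$ yields $0$ and arm $1$ yields $x_t\theta + z_t$, whenever the user picks the inferior arm the per-round regret equals the payoff gap, so
\begin{equation}
\regret(t) = |x_t\theta + z_t|\,\Ind[b_t \neq \ist].
\end{equation}
By \eqref{ineq_ehatzt} the posterior mean $\Ehatzt$ takes only two values as $z_t$ crosses $-\hat{s}$, so by \eqref{eq_user's decision} the user's action $b_t = \sgn(s + \Ehatzt)$ is piecewise constant in $z_t$, whereas the optimal arm $\ist = \sgn(s + z_t)$ switches at $z_t = -s$.

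First I would isolate the event
\begin{equation}
E \coloneqq \bigl\{\, |x_t| \le \tfrac12,\ \ x_t\delta > 0,\ \ z_t \in (-s,\,-\hat{s})\,\bigr\}.
\end{equation}
On $E$ we have $s > \hat{s}$ (so the interval is nonempty and the recommender underestimates arm $1$) and $z_t < -\hat{s}$, hence the straightforward policy sends $a_t = -1$. Given $a_t = -1$, \eqref{ineq_ehatzt} gives $\Ehatzt = -\phi(\hat{s})/(1-\Phi(\hat{s}))$, so the user chooses $b_t = -1$ precisely when $s < \phi(\hat{s})/(1-\Phi(\hat{s}))$. The crux is to verify this follow condition uniformly on $E$: since $|x_t| \le \tfrac12$ and $|\theta|,|m_t|\le 1$ we have $s \le \tfrac12$ and $\hat{s}\in[-\tfrac12,\tfrac12]$, and bounding the normal pdf and cdf (numerator from below by $\phi(\tfrac12)$, denominator from above by $\Phi(\tfrac12)$) gives $\phi(\hat{s})/(1-\Phi(\hat{s})) \ge \phi(\tfrac12)/\Phi(\tfrac12) > \tfrac12 \ge s$. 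Thus on $E$ the user blindly follows the incorrect recommendation and picks $b_t = -1$, while $z_t > -s$ forces $\ist = 1$; therefore $b_t \neq \ist$ and $\regret(t) = s + z_t > 0$.

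Next I would integrate. Conditioning on $x_t$ with $|x_t|\le \tfrac12$ and $x_t\delta>0$, the inner integral is a linear ramp against a density bounded below by $\phi(\tfrac12)$ on $(-s,-\hat{s})\subseteq[-\tfrac12,\tfrac12]$:
\begin{equation}
\Ex_{z_t}\!\bigl[(s+z_t)\,\Ind(z_t\in(-s,-\hat{s}))\bigr] \ge \phi(\tfrac12)\!\int_{-s}^{-\hat{s}}\!(s+z_t)\,dz_t = \frac{\phi(\tfrac12)}{2}\,(s-\hat{s})^2 = \frac{\phi(\tfrac12)}{2}\,x_t^2\,\delta^2.
\end{equation}
Taking the remaining expectation over $x_t$ (the constraint $x_t\delta>0$ selects one side, which by evenness of $\phi$ contributes $\int_0^{1/2} x^2\phi(x)\,dx$) and using $\regret(t)\ge 0$ to restrict to $E$ yields $\Ex[\regret(t)] \ge \Ex[\regret(t)\,\Ind(E)] \ge \Creg\,\delta^2$ with the universal constant $\Creg = \tfrac12\phi(\tfrac12)\int_0^{1/2}x^2\phi(x)\,dx > 0$, which is exactly \eqref{ineq_reglower_round}.

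The main obstacle is the follow condition $s < \phi(\hat{s})/(1-\Phi(\hat{s}))$: the user inherits the recommender's mistake only when the recommender's information advantage dominates her estimation error, and a careless choice of region would instead let the user deviate and self-correct, destroying the bound. Restricting to $|x_t|\le\tfrac12$ (so that $|s|,|\hat{s}|\le\tfrac12$) is what forces the inverse-Mills-ratio threshold above the static payoff uniformly, guaranteeing the user follows; everything else is a routine ramp estimate. Finally, note the bound is established pointwise in $(\theta,m_t)$ with $|\theta|,|m_t|\le 1$, so it persists after taking the expectation over the history that determines $m_t$.
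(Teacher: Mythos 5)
Your proof is correct and follows essentially the same strategy as the paper's: isolate the event on which the straightforward recommendation names the wrong arm yet the user blindly follows (because the recommender's information advantage about $z_t$ dominates her estimation error), and integrate the payoff gap---which is linear in $|x_t(\theta - m_t)|$ over a $z_t$-interval of the same width---to obtain the quadratic lower bound. The only cosmetic differences are that you work with the underestimation side rather than the overestimation side, verify the ``follow'' condition by an explicit inverse-Mills-ratio bound on the event $|x_t|\le \tfrac12$ instead of invoking the paper's Lemma~\ref{lem_hatz_bound}, and integrate the ramp directly rather than via the layer-cake representation $\Ex[\regret(t)]\ge\int_0^\infty\Prob[\regret(t)\ge l]\,dl$.
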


Since the recommender does not know $\theta$, she substitutes $m_t$ for $\theta$ to determine her recommendation.
The probability that the recommender fails to recommend the superior arm is proportional to $|\theta - m_t|$, and the welfare cost from such an event is also proportional to $|\theta - m_t|$. Accordingly, the per-round expected regret is at the rate of $\Omega(|\theta - m_t|^2)$. Note that, from the perspective of the recommender, the posterior distribution of $\theta$ is $\Unif[l_t, u_t]$, and therefore, the conditional expectation of $|\theta - m_t|^2$ is $\Theta(w_t^2)$.

\begin{lem}[Upper Bound on Probability of Update]\label{lem_update} 
Under the straightforward policy, there exists a universal constant $\Cupdate > 0$ such that,
for all $w_t \le \Cupdate$, 
\begin{equation}
\Prob[(l_{t+1}, u_{t+1}) \ne (l_t, u_t)] \le
\exp\left(-\frac{\Cupdate}{w_t}\right).
\end{equation}
\end{lem}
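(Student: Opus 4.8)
The plan is to first give an exact, $\theta$-free description of the event that an update occurs, and then reduce its probability to a one-dimensional Gaussian tail estimate that is uniform in the midpoint $m_t$.

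First I would observe that, writing $\tau_t \coloneqq -Z_t/x_t$ for the threshold at which user $t$'s conditional expected payoff $x_t\theta + Z_t$ changes sign, the belief-update rules \eqref{ineq_update_left}--\eqref{ineq_update_right} imply that $(l_{t+1},u_{t+1}) \ne (l_t,u_t)$ if and only if $\tau_t \in (l_t,u_t)$: when $\tau_t$ lies strictly inside the interval, whichever arm the user picks pins $\theta$ to one side of $\tau_t$ and strictly shrinks the interval, whereas when $\tau_t \notin (l_t,u_t)$ the user's action is the same for every $\theta \in [l_t,u_t]$ and the $\min/\max$ in \eqref{ineq_update_left}--\eqref{ineq_update_right} leaves the interval unchanged. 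Crucially, this event depends only on $(x_t,z_t,m_t)$ and not on the realized $\theta$, so the probability may be computed over $(x_t,z_t)$ alone, giving $\Prob[(l_{t+1},u_{t+1})\ne(l_t,u_t)] = \Prob[\tau_t \in (l_t,u_t)]$.

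Next I would make $\tau_t$ explicit. By \eqref{ineq_ehatzt} and the formula for the mean of a truncated normal, $Z_t = \phi(s)/\Phi(s)$ when $a_t = 1$ and $Z_t = -\phi(s)/\Phi(-s)$ when $a_t = -1$, where $s \coloneqq x_t m_t$. Writing $g(s)\coloneqq s + \phi(s)/\Phi(s) = \Ex[s+z \mid s+z>0] > 0$, a short computation gives $|\tau_t - m_t| = g(\pm s)/|x_t|$ on $\{a_t = \pm 1\}$, so $\tau_t \in (l_t,u_t)$ (i.e.\ $|\tau_t-m_t| < w_t/2$) becomes $g(s) < |x_t| w_t/2$ when $a_t=1$ and $g(-s) < |x_t|w_t/2$ when $a_t=-1$. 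Since $\Prob[a_t=1\mid x_t] = \Phi(s)$ and $\Prob[a_t=-1\mid x_t]=\Phi(-s)$, and since the substitution $x_t \mapsto -x_t$ swaps the two contributions, the probability symmetrizes to
\begin{equation}
\Prob[\tau_t\in(l_t,u_t)] = 2\,\Ex_{x_t\sim\Normal}\!\left[\Phi(s)\,\Ind\!\left[g(s) < |x_t| w_t/2\right]\right],
\end{equation}
and I may assume $m_t \ge 0$ by the symmetry $\theta\mapsto-\theta$ (which maps $m_t \mapsto -m_t$).

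The core of the argument is then a single Gaussian tail bound, and here the main obstacle is obtaining a rate $e^{-C/w_t}$ that is \emph{uniform} over all $m_t\in[0,1]$: when $m_t$ is bounded away from $0$ the smallness comes from the factor $\Phi(s)$ forcing $s=x_tm_t$ to be very negative, whereas when $m_t$ is small that factor is useless and the smallness must instead come from the Gaussian tail of $x_t$ itself. To reconcile the two regimes I would use the elementary Mills-ratio bound $g(s)\ge \tfrac{1}{2|s|}$ for $|s|\ge 1$: on the event $\{g(s)<|x_t|w_t/2\}$ this yields $x_t^2 m_t w_t > 1$, i.e.\ $x_t^2 > 1/(m_t w_t)$. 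On the same event $\Phi(s)\,\phi(x_t) \le e^{-s^2/2}\phi(x_t) = \tfrac{1}{\sqrt{2\pi}}e^{-x_t^2(1+m_t^2)/2}$ because $s^2 = x_t^2 m_t^2$. Integrating this Gaussian tail beyond $x_t^2 = 1/(m_t w_t)$ produces the exponent
\begin{equation}
-\frac{1}{2w_t}\left(\frac{1}{m_t} + m_t\right) \le -\frac{1}{w_t},
\end{equation}
where the inequality is AM--GM ($\tfrac{1}{m_t}+m_t\ge 2$), which is exactly what makes the bound uniform and worst at $m_t=1$. Finally I would dispatch the leftover regimes---the region $s\ge 0$, and $s\in(-1,0)$ where $g$ is bounded below by a constant---by noting each forces $|x_t|\gtrsim 1/w_t$ and hence contributes only $e^{-\Theta(1/w_t^2)}$, which is negligible against $e^{-1/w_t}$. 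Absorbing constants into the exponent for $w_t$ small enough then gives $\Prob[(l_{t+1},u_{t+1})\ne(l_t,u_t)]\le \exp(-\Cupdate/w_t)$, as claimed. The delicate points to verify are the Mills-ratio inequalities bounding $g$ above and below and the claim that the update event is genuinely independent of $\theta$.
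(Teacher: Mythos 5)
Your proposal is correct and follows essentially the same route as the paper: it reduces the update event to the threshold $-Z_t/x_t$ landing inside $(l_t,u_t)$, lower-bounds the gap $|Z_t + x_t m_t|$ to force $x_t^2 = \Omega(1/w_t)$ on that event, and closes with a Gaussian tail bound on $x_t$. Your inverse-Mills-ratio estimate $g(s) \ge \tfrac{1}{2|s|}$ for $|s|\ge 1$ (together with the constant lower bound on $g$ for $s \ge -1$) is exactly the content of the paper's Lemma~\ref{lem_hatz_bound}, just derived in closed form rather than by the paper's integral comparison; the additional $\Phi(s)$ weighting and AM--GM step are a harmless refinement, since $m_t \le 1$ already makes the plain tail bound uniform.
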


User $t$ compares two factors when making his decision: (i) the recommender's estimation error of the static payoff term $|x_t (\theta - m_t)|$ and (ii) the recommender's information advantage about the dynamic payoff term $z_t$.
When the former term is small, the user blindly obeys the recommendation, and the user's decision does not provide additional information. Because $w_t > |\theta - m_t|$, the former factor is bounded by $|x_t w_t|$. For a user's decision to be informative, $|x_t|$ must be $\Omega(1/w_t)$ (in which case $|x_t (\theta - m_t)|$ exceeds a threshold value). Because $x_t$ follows a normal distribution, the probability of such a context decreases exponentially in $1/w_t$.\footnote{A similar result holds whenever $x_t$ follows a sub-Gaussian distribution, where the probability of observing $x_t$ decays at an exponential rate with respect to $|x_t|$.
Conversely, when the distribution of $x_t$ is heavy-tailed, the conclusion of Lemma~\ref{lem_update} may not hold.}

Lemma~\ref{lem_update} states that the recommender's learning stalls when $w_t$ is moderately small. In particular, if $w_t = 2\Cupdate/(\log T) = \Theta(1/(\log T))$, then the probability of her belief update is $1/T^2$. This implies that no update occurs in the next $T$ rounds with a probability of at least $1 - 1/T$.

We use these lemmas to obtain the total regret bound presented in Theorem~\ref{thm_binary}. First, Lemma~\ref{lem_update} implies that the update of $\theta$ is likely to stall when it reaches $w_t = |\theta-m_t| = \Theta(1/(\log T))$.
Given $|\theta - m_t| = \Theta(1/(\log T))$, Lemma~\ref{lem_reglower_round} implies that the per-round (expected) regret is $\Theta(1/(\log T)^2)$. 
Consequently, the order of total regret is $\Omega(T/(\log T)^2) = \tilde{\Omega}(T)$, implying that users suffer from large per-round regrets even in the long run. Thus, we obtain the regret bound presented as Theorem~\ref{thm_binary}.

\section{Ternary Policy}\label{sec_ternary}

\subsection{Policy}

In Section~\ref{sec_binary}, we demonstrate that the straightforward policy suffers from an approximately linear regret. This section shows that the regret rate improves substantially if we introduce a ternary message space, $A = \{-1, 0, 1\} = B \cup \{0\}$, and incorporate the additional message, ``0'', into the straightforward policy in a simple manner.
The ternary message space allows the recommender to inform users that she is ``on the fence.'' When the recommender is confident in her recommendation, she sends either $a_t = -1$ or $a_t = 1$. If the recommender predicts that the user should be approximately indifferent to the choice between the two arms, then she sends $a_t = 0$ instead.

Specifically, we introduce a sequence of parameters $(\eps_t)_{t=1}^T$, where $\eps_t > 0$ for all $t \in [T]$. This specifies whether the recommender is confident in her prediction. If $\hatr_t(1) > \eps_t$, then the recommender is confident about the superiority of arm $1$, and therefore recommends arm $1$: $a_t = 1$. Conversely, if $\hatr_t(1) < - \eps_t$, then the recommender is confident about the superiority of arm $-1$, and therefore recommends arm $-1$: $a_t = -1$. In the third case, i.e., $-\eps_t < \hatr_t(1) < \eps_t$, the recommender states honestly that she is on the fence; she sends the message $a_t = 0$, implying that she predicts similar payoffs for arms $1$ and $-1$. This policy is summarized as follows:
\begin{equation}
\mu_t(z_t; m_t, x_t, \epsilon_t) = 
\left\{
\begin{array}{ll} 
1 & \text{ if }x_t m_t + z_t > \eps_t; \\
0 & \text{ if } \eps_t > x_t m_t + z_t > - \eps_t; \\
-1 & \text{ if }x_t m_t + z_t < -\eps_t.
\end{array}
\right.
\end{equation}
Because this paper does not discuss any other policy that employs the ternary message space, we refer to this as the \emph{ternary policy}.

User $t$'s posterior belief about $z_t$ is given by (i) $z_t \sim \trunkN(-\infty, -x_t m_t-\eps_t)$ given $a_t = -1$; (ii) $z_t \sim \trunkN(-x_t m_t-\eps_t, -x_t m_t+\eps_t)$ given $a_t = 0$; and (iii) $z_t \sim \trunkN(-x_t m_t+\eps_t, \infty)$ given $a_t = 1$. Accordingly, the conditional expectation of $z_t$ with respect to the posterior distribution is formulated as follows (and illustrated in Figure~\ref{fig:ez_recom_three}).
\begin{equation}\label{ineq_ehatzt_ternary}
    \Ehatzt \coloneqq \Ex[z_t|\mu_t, a_t] = 
\left\{
\begin{array}{ll} 
\Ex_{z \sim \trunkN(-\infty, -x_t m_t-\eps_t)} [z] & \text{ if }a_t=-1; \\
\Ex_{z \sim \trunkN(-x_t m_t-\eps_t, -x_t m_t+\eps_t)} [z] & \text{ if }a_t=0; \\
\Ex_{z \sim \trunkN(-x_t m_t+\eps_t, \infty)} [z] & \text{ if }a_t=1.
\end{array}
\right.
\end{equation}
Given the new specifications of $a_t$ and $Z_t$, the user's decision rule for choosing $b_t$ (given in Eq.~\eqref{eq_user's decision}) and the belief update rule for deciding $(l_{t+1}, u_{t+1})$ (given in Eq.~\eqref{ineq_update_left} and \eqref{ineq_update_right}) do not change.

\begin{figure}
	\includegraphics[width = \textwidth]{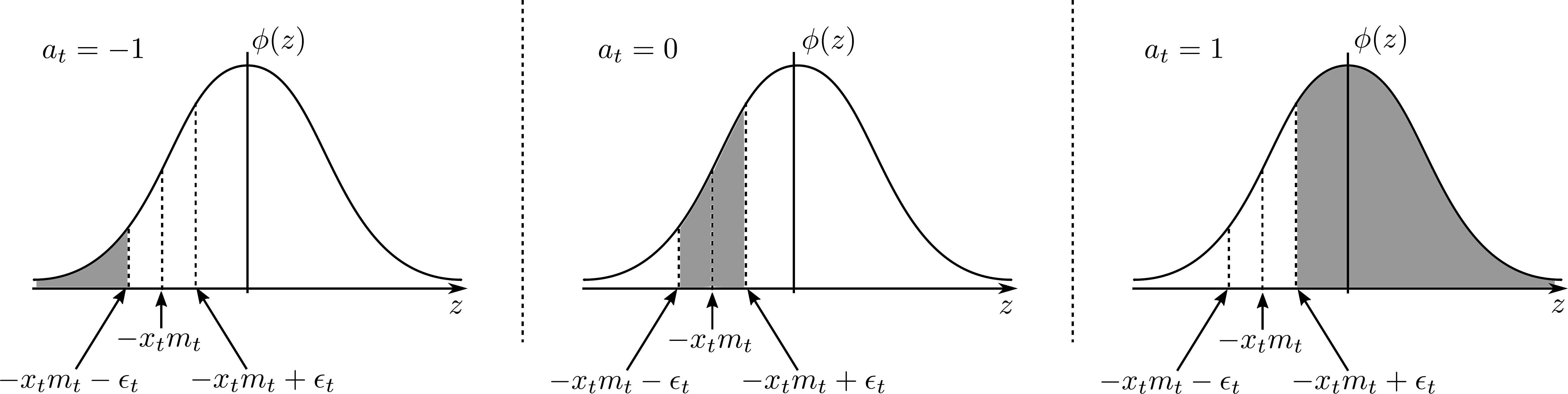}
	\caption{The shape of the posterior distribution of $z_t$ conditional on $a_t=-1$ (left), $a_t = 0$, and $a_t = 1$ (right) being sent under the ternary policy, when $x_t m_t > 0$.}
	\label{fig:ez_recom_three}
\end{figure}

\subsection{Success of Ternary Policy}

The following theorem characterizes the total regret achieved by the ternary policy.
\begin{thm}[Regret Bound of Ternary Policy]\label{thm_three}
Let $\eps_t = \Ceps w_t$ where $\Ceps > 0$ is an arbitrary constant. 
Then, under the ternary policy, there exists a constant $\Cternary>0$ that depends only on $\Ceps$ and with which the regret is bounded as:
\begin{equation}
\Ex[\Regret(T)] \le \Cternary.
\end{equation}
\end{thm}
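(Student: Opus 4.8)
The plan is to combine two ingredients: a per-round upper bound on regret of order $w_t^2$, and a geometric contraction of the confidence-interval width $w_t$ driven by the ``on-the-fence'' message. Granting these, I would decompose the horizon by the scale of $w_t$ and sum a geometric series. Writing $\EF_t$ for the history available when the round-$t$ signal function is chosen, I aim to show (i) $\Ex[\regret(t)\mid\EF_t]\le C_1 w_t^2$, and (ii) that in each round there is probability at least $c\,w_t$, over the fresh draws of $x_t,z_t$, that the Bayesian update contracts the interval by a constant factor, $w_{t+1}\le\rho w_t$ with $\rho<1$. Partitioning the rounds into epochs on which $w_t\in(\rho^{k+1}w_1,\rho^k w_1]$ and writing $W_k\coloneqq\rho^k w_1$, the expected length of epoch $k$ is $O(1/W_k)$ (a geometric waiting time with per-round success probability $\Theta(W_k)$), so its regret contribution is at most $O(1/W_k)\cdot C_1 W_k^2=O(W_k)$. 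Summing over $k$ gives $\Ex[\Regret(T)]\le\sum_k O(W_k)=O\!\big(\sum_k\rho^k w_1\big)=O(1)$, with the implied constant depending only on $\Ceps$; this is the desired $\Cternary$.

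For step (i) I would decompose the per-round regret by the message sent, using that regret is incurred only when the user's arm differs from $\ist$, in which case it equals $|x_t\theta+z_t|$. Conditional on $a_t=0$ the user's posterior on $z_t$ is the narrow truncation $\trunkN(-x_tm_t-\eps_t,-x_tm_t+\eps_t)$, so his estimate $Z_t$ satisfies $|Z_t-z_t|\le 2\eps_t$; hence a mistake forces $|x_t\theta+z_t|\le 2\eps_t$ and the per-round regret is at most $2\eps_t$. Since $\Prob[a_t=0\mid\EF_t]=O(\eps_t)$, the on-the-fence contribution is $O(\eps_t^2)=O(w_t^2)$; a slightly finer integration over $(x_t,\theta)$, splitting according to whether $|x_t|\le 2\Ceps$ so as to keep the $g\coloneqq x_t(m_t-\theta)$ integral bounded, confirms this order. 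For $a_t=\pm1$ the recommender is confident, and as in the straightforward-policy analysis a mistake requires the scaled estimation error $|x_t(\theta-m_t)|$ to exceed $\eps_t$, an event whose probability and regret magnitude again integrate to $O(w_t^2)$. Summing the three cases yields (i).

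Step (ii) is the heart of the argument and, I expect, the main obstacle. The key computation is that the posterior mean after the on-the-fence message sits essentially at the band midpoint: writing $s=z_t+x_tm_t$, which is $\Normal(x_tm_t,1)$ truncated to $(-\eps_t,\eps_t)$, a second-order expansion of the truncated mean gives $Z_t+x_tm_t=\Ex[s\mid|s|<\eps_t]=\tfrac{1}{3}x_tm_t\eps_t^2+O(\eps_t^4)$. Dividing by $x_t$, the cut point in the update rule \eqref{ineq_update_left}--\eqref{ineq_update_right} is $-Z_t/x_t=m_t-O(|m_t|\eps_t^2)=m_t+O(w_t^2)$, i.e.\ within $O(w_t^2)$ of the interval's midpoint and, crucially, independent of $x_t$ (the factor $1/x_t$ cancels). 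Because the update always retains the half of $[l_t,u_t]$ that contains $\theta$, a cut this close to $m_t$ retains at most a fraction $\tfrac12+O(w_t)$ of the width, giving $w_{t+1}\le\rho w_t$ with $\rho<1$ once $w_t$ is below a constant threshold. It remains to show the message is frequent enough: for fixed $x_t$, $\Prob[a_t=0\mid\EF_t]=\Phi(\eps_t-x_tm_t)-\Phi(-\eps_t-x_tm_t)\approx 2\eps_t\phi(x_tm_t)$, which integrates to $\Theta(\eps_t)=\Theta(w_t)$ after restricting to the constant-probability event $|x_t|\le C$ where $\phi(x_tm_t)$ is bounded below; this delivers the contraction probability $c\,w_t$.

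The delicate points are concentrated in step (ii): controlling the second-order correction of the truncated-normal mean uniformly in $(x_t,m_t)$, verifying that the cut point genuinely lies strictly inside $[l_t,u_t]$ so that a real contraction occurs despite the $O(w_t^2)$ offset, and handling the initial rounds in which $w_t$ is not yet small. For the last point, only finitely many rounds occur before $w_t$ enters the clean regime, and they contribute $O(1)$ regret, so the constant $\Cternary$ can absorb them after $\Ceps$ (and hence the threshold and $\rho$) are chosen appropriately.
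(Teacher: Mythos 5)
Your proposal is correct and follows essentially the same route as the paper: a per-round regret bound of $O(w_t^2)$ (Lemma~\ref{lem_regeps}), a probability-$\Theta(w_t)$ geometric contraction of the confidence interval triggered by the on-the-fence message (Lemmas~\ref{lem_probunknown} and~\ref{lem_unknown_shrink}), and an epoch decomposition summed as a geometric series. The only divergence is inside the contraction step, where the paper sidesteps your delicate second-order truncated-normal-mean expansion by conditioning on $\{x_t \ge 3\Ceps\}$ and using the trivial bound $|Z_t + x_t m_t| \le \eps_t$, which already places the cut point within $w_t/3$ of the midpoint $m_t$ and yields a $5/6$ contraction directly.
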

Under the ternary policy, the expected regret is $O(1)$, which is the best possible regret order and is often celebrated in theoretical computer science. This result contrasts with Theorem~\ref{thm_binary}, which indicates that the straightforward policy suffers from $\tilOmega(T)$ regret. Theorem~\ref{thm_three} implies that a slight expansion of the message space drastically improves the regret rate.

The ternary message space benefits users in two ways. First, it increases the current user's payoff by enabling more informative signaling about $z_t$.
Because a more informed conditional expectation of $z_t$ closely approximates the realization, a larger message space enables users to make better decision.\footnote{Since the ternary policy does not dominate the straightforward policy in terms of the Blackwell informativeness criterion \citep{blackwell1953equivalent}, the ternary policy may provide a smaller expected payoff for current users. This happens if $\eps_t$ is too large. Meanwhile, Section~\ref{sec_sim} demonstrates that for $\Ceps = 1/4$, all users receive larger expected payoffs under the ternary policy. Alternatively, we can consider a slightly different ternary policy that dominates the straightforward policy. For example, by defining $\mu_t(z_t) = 1$ if $x_t m_t + z_t > 0$, $\mu_t(z_t) = 0$ if $x_t m_t + z_t \in (- \epsilon_t, 0)$, and $\mu_t(z_t) = -1$ if $x_t m_t + z_t < - \epsilon_t$, the alternative ternary policy informationally dominates the straightforward policy and, therefore, provides the current user with a larger expected payoff than that offered by the straightforward policy, regardless of the choice of $\eps_t$.}
However, the first effect alone does not improve the order of regret.

The second effect is critical, positioning it as the basis of the proof of Theorem~\ref{thm_three}. Under the ternary policy, the learning rate improves drastically. This is because users' actions after they receive $a_t = 0$ are very informative for the recommender's belief update. This is because the ``on-the-fence'' message $a_t = 0$ is sent when the recommender is ``unconfident'' about the better arm, and a user's action tells the answer to the recommender. This effect allows us to prove that the recommender's confidence interval shrinks geometrically, leading the per-round regret to diminish exponentially. The following key lemma illustrates this fact.

\begin{lem}[Geometric Update]\label{lem_unknown_shrink}
Let $\eps_t = \Ceps w_t$ for $\Ceps > 0$. Then, under the ternary policy, there exists a constant $\Cgeomup>0$ that depends only on $\Ceps$ and with which the following inequality holds:
\begin{equation}\label{ineq_unknownrate}
\Prob\left[\left.w_{t+1} \le \frac{5}{6} w_t\right|a_t=0\right] \ge \Cgeomup.
\end{equation}
\end{lem}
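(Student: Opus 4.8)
The plan is to reduce the geometric shrinkage to the location of the user's \emph{indifference threshold} for $\theta$ inside the current interval $[l_t,u_t]$. By Eq.~\eqref{eq_user's decision}, user $t$ chooses $b_t=1$ exactly when $x_t\theta+Z_t>0$; writing $c\coloneqq -Z_t/x_t$, the action $b_t$ therefore reveals on which side of $c$ the true state lies (with orientation given by $\sgn(x_t)$), and the update rules~\eqref{ineq_update_left}--\eqref{ineq_update_right} move the relevant endpoint of $[l_t,u_t]$ to $c$ whenever $c\in(l_t,u_t)$. First I would show that if $c$ falls in the middle third, $|c-m_t|\le w_t/3$, then $w_{t+1}\le\tfrac{5}{6}w_t$ \emph{regardless} of the realized $\theta$: the surviving interval is either $[l_t,c]$ or $[c,u_t]$, and since $m_t=(l_t+u_t)/2$ and $w_t=u_t-l_t$, both widths are at most $w_t/3+w_t/2=\tfrac{5}{6}w_t$.

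The crux is a deterministic bound placing $c$ in this middle third whenever $|x_t|$ is not too small. The key observation is that, conditional on $a_t=0$, the ternary signal function forces $x_t m_t+z_t\in(-\eps_t,\eps_t)$; taking posterior expectations, $Z_t+x_t m_t=\Ex[\,z_t+x_t m_t\mid a_t=0\,]\in(-\eps_t,\eps_t)$, so $|Z_t+x_t m_t|<\eps_t=\Ceps w_t$. Dividing by $|x_t|$,
\begin{equation}
|c-m_t|=\frac{|Z_t+x_t m_t|}{|x_t|}<\frac{\Ceps w_t}{|x_t|}\le\frac{w_t}{3}\qquad\text{whenever }|x_t|\ge 3\Ceps .
\end{equation}
Together with the first step, the event $\{|x_t|\ge 3\Ceps\}$ implies $w_{t+1}\le\tfrac56 w_t$, so it remains only to prove $\Prob[\,|x_t|\ge 3\Ceps\mid a_t=0\,]\ge\Cgeomup$.

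For this conditional probability I would exploit that $x_t,z_t$ are independent standard normals, so, writing $\phi,\Phi$ for the standard normal density and cdf, $\Prob[a_t=0\mid x_t=x]=\Phi(\eps_t-xm_t)-\Phi(-\eps_t-xm_t)$. I would lower-bound the numerator by restricting $x$ to the bounded annulus $3\Ceps\le|x|\le 3\Ceps+1$ (total length $2$): there $\phi(x)\ge\phi(3\Ceps+1)$, and since $|xm_t|\le 3\Ceps+1$ (as $|m_t|\le1$) and $\eps_t\le 2\Ceps$ (as $w_t\le 2$), the inner probability is at least $2\eps_t\,\phi(5\Ceps+1)$. I would upper-bound the denominator by $\Prob[a_t=0]\le 2\eps_t\,\phi(0)$, using that the conditioning band has length $2\eps_t$ and density at most $\phi(0)$. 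The common factor $\eps_t$ cancels, leaving $\Cgeomup\coloneqq 2\sqrt{2\pi}\,\phi(3\Ceps+1)\,\phi(5\Ceps+1)>0$, which depends only on $\Ceps$.

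The main obstacle is making this last bound uniform over all $m_t\in[-1,1]$ and all widths $w_t\in(0,2]$: both numerator and denominator vanish as $\eps_t\to0$, so one must isolate the common $\Theta(\eps_t)$ scaling of the conditioning event before comparing, which is exactly what the annulus argument accomplishes. The apparent blow-up of $c=-Z_t/x_t$ near $x_t=0$ is harmless, since those realizations are excluded by $|x_t|\ge 3\Ceps$ and fall into the complementary event; one should only verify that $\eps_t\le 2\Ceps$ and $|m_t|\le1$ are used nowhere more sharply than above, so that $\Cgeomup$ genuinely depends on $\Ceps$ alone.
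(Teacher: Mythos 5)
Your proof is correct and follows essentially the same route as the paper's: restrict to the event $\{|x_t|\ge 3\Ceps\}$, use $|Z_t+x_t m_t|<\eps_t$ to place the new endpoint $-Z_t/x_t$ within $w_t/3$ of the midpoint $m_t$ (hence $w_{t+1}\le\tfrac{5}{6}w_t$), and then show that this event has constant probability conditional on $a_t=0$ by matching the $\Theta(\eps_t)$ scaling of numerator and denominator. The only cosmetic differences are that you integrate $x_t$ over a bounded annulus rather than a half-line and re-derive the upper bound $\Prob[a_t=0]\le 2\eps_t\phi(0)$ directly instead of invoking Lemma~\ref{lem_probunknown}.
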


The intuition for the geometric gain after $a_t = 0$ is as follows. When $\eps_t \approx 0$, upon observing $a_t = 0$, user $t$ can accurately figure out the realization of the dynamic payoff term: $z_t \approx - x_t m_t$. Given this, the user chooses $b_t = 1$ if $x_t \theta + z_t \approx x_t (\theta - m_t) > 0$ and $b_t = -1$ otherwise; in other words, by observing $b_t$, the recommender can identify whether or not $\theta > m_t$. Since $m_t$ is the median of the confidence interval $[l_t, u_t]$, this observation (approximately) halves the width of the confidence interval.

While a smaller $\eps_t$ results in a larger update after $a_t = 0$ is sent, we cannot set $\eps_t = 0$ because in that case the probability of sending $a_t = 0$ becomes zero. The policy parameter $\eps_t$ must be chosen to balance this trade-off. Lemma~\ref{lem_unknown_shrink} shows that $\eps_t = \Ceps w_t$ (for any $\Ceps > 0$) is an appropriate choice in the sense that it achieves a constant per-round probability of geometric updates. Accordingly, the ternary policy shrinks the confidence interval exponentially in the total number of rounds in which $a_t = 0$ is sent.

We introduce two more lemmas to illustrate the proof sketch of Theorem~\ref{thm_three}. The second lemma, Lemma~\ref{lem_probunknown}, computes the probability that $a_t = 0$ is sent.

\begin{lem}[Probability of $a_t=0$]\label{lem_probunknown}
Under the ternary policy, there exist universal constants $\CfenceL, \CfenceU > 0$ such that the following equality holds:\footnote{``OtF'' stands for ``on the fence.''}
\begin{equation}
\CfenceL \eps_t \le \Prob[a_t = 0] \le \CfenceU \eps_t,
\end{equation}
\end{lem}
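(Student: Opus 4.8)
The plan is to reduce the event $\{a_t = 0\}$ to a one-dimensional Gaussian small-ball probability and then apply elementary pointwise bounds on the normal density. Condition on the history available at the start of round $t$, so that $m_t$ and $\eps_t = \Ceps w_t$ are fixed constants while the fresh draws $x_t$ and $z_t$ are independent standard normals. By the definition of the ternary policy, $a_t = 0$ holds exactly when $-\eps_t < x_t m_t + z_t < \eps_t$. The key observation is that, conditional on the history, the statistic $S_t \coloneqq x_t m_t + z_t$ is a centered Gaussian with variance $\sigma_t^2 \coloneqq 1 + m_t^2$, since $x_t m_t$ and $z_t$ are independent with variances $m_t^2$ and $1$. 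Hence $\Prob[a_t = 0] = \Prob[|S_t| < \eps_t]$, and the task is to sandwich this small-ball probability between constant multiples of $\eps_t$.

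First I would record the two facts that pin down the scale $\sigma_t$ and the range of $\eps_t$. Because the confidence interval satisfies $[l_t, u_t] \subseteq [-1,1]$, its midpoint obeys $|m_t| \le 1$, so $1 \le \sigma_t^2 = 1 + m_t^2 \le 2$; and because $w_t \le w_1 = 2$, we have $\eps_t = \Ceps w_t \le 2\Ceps$. Writing the probability explicitly as
\begin{equation}
\Prob[a_t = 0] = \int_{-\eps_t}^{\eps_t} \frac{1}{\sigma_t \sqrt{2\pi}} \exp\left(-\frac{s^2}{2\sigma_t^2}\right) \, ds,
\end{equation}
both bounds follow by controlling the integrand on $(-\eps_t, \eps_t)$. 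For the upper bound, the integrand is at most its value at $s=0$, namely $(\sigma_t\sqrt{2\pi})^{-1} \le (\sqrt{2\pi})^{-1}$ using $\sigma_t \ge 1$; multiplying by the interval length $2\eps_t$ gives $\Prob[a_t = 0] \le \sqrt{2/\pi}\,\eps_t$, so $\CfenceU = \sqrt{2/\pi}$ suffices. For the lower bound, on $(-\eps_t,\eps_t)$ the integrand is at least $(\sigma_t\sqrt{2\pi})^{-1}\exp(-\eps_t^2/(2\sigma_t^2))$; using $\sigma_t \le \sqrt{2}$ and $\eps_t^2/(2\sigma_t^2) \le \eps_t^2/2 \le 2\Ceps^2$, multiplying by $2\eps_t$ yields $\Prob[a_t = 0] \ge \CfenceL\,\eps_t$ with $\CfenceL = \tfrac{1}{\sqrt{\pi}}\exp(-2\Ceps^2)$.

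The computation is routine once the reformulation is in place; there is no serious analytical obstacle. The only genuine care points are organizational: one must set up the conditioning correctly, noting that $m_t$ and $\eps_t$ are measurable with respect to the round-$t$ history and therefore independent of the fresh pair $(x_t, z_t)$, so that the collapse of the event to $\{|S_t| < \eps_t\}$ with $S_t$ a single centered Gaussian is legitimate. The two-sided estimate is then immediate because a Gaussian density is bounded above everywhere and bounded away from zero on any fixed bounded interval. I would also flag that the lower-bound constant degrades as $\Ceps$ grows through the factor $e^{-2\Ceps^2}$, so strictly speaking $\CfenceL$ depends on the policy parameter $\Ceps$ rather than being fully universal; this is consistent with the convention used for $\Cternary$ and $\Cgeomup$ elsewhere, where constants are permitted to depend on the fixed choice of $\Ceps$.
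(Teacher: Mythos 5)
Your proposal is correct and follows essentially the same route as the paper: conditioning on the history, observing that $x_t m_t + z_t \sim \Normal(0, 1+m_t^2)$ with $1 \le 1+m_t^2 \le 2$, and sandwiching the small-ball probability by pointwise bounds on the Gaussian density over $(-\eps_t,\eps_t)$. Your closing remark that the lower-bound constant inherits a dependence on $\Ceps$ through $e^{-2\Ceps^2}$ is a fair observation that the paper's terser $\Theta(\eps_t)$ argument glosses over, but it does not affect how the lemma is used.
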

Lemma~\ref{lem_probunknown} states that the probability that $a_t = 0$ is recommended is linear in $\eps_t$. This result immediately follows from the fact that (i) $z_t$ follows a standard normal distribution, and (ii) $a_t = 0$ is sent when $z_t \in (- x_t m_t - \eps_t, - x_t m_t + \eps_t)$.

The third lemma, Lemma~\ref{lem_regeps}, bounds the per-round regret, $\regret(t)$, using a quadratic function of the policy parameter, $\eps_t$, and the width of the confidence interval, $w_t$.

\begin{lem}[Upper Bound on Regret per Round]\label{lem_regeps}
Under the ternary policy with $\eps_t = \Ceps w_t$, there exists a constant $\Cregt > 0$ that only depends on $\Ceps$ such that the following inequality holds:
\begin{equation}
\Ex[\regret(t)]
\le \Cregt w_t^2.
\end{equation}
\end{lem}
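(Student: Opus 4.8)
The plan is to fix the context $x_t$, reduce $\Ex[\regret(t)\mid x_t]$ to a one-dimensional integral over the dynamic payoff $z_t$, and then average over $x_t \sim \Normal$. Write $y^* \coloneqq -x_t\theta$, so that arm $1$ is truly superior exactly when $z_t > y^*$ and a mistaken choice costs precisely $|x_t\theta + z_t| = |z_t - y^*|$; thus $\regret(t) = |z_t - y^*|\,\Ind(b_t \neq \ist)$. The structural fact I would exploit is that, for fixed $x_t$, the user plays a monotone threshold rule in $z_t$: the message $a_t$ is nondecreasing in $z_t$, and the posterior mean $Z_t$ is increasing across messages ($Z_t^{(-1)} < Z_t^{(0)} < Z_t^{(1)}$), so the user chooses $b_t = 1$ precisely when $z_t$ exceeds a switch point $z^{\mathrm{sw}}$ that sits at a message boundary, i.e. $z^{\mathrm{sw}} \in \{-x_t m_t - \eps_t,\, -x_t m_t + \eps_t,\, \pm\infty\}$. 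Since the truth switches at $y^*$, user and truth disagree exactly on the interval between $z^{\mathrm{sw}}$ and $y^*$, on which $|z_t - y^*| \le |z^{\mathrm{sw}} - y^*|$.

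First I would dispose of the case of a finite switch point. There the disagreement interval has length $|z^{\mathrm{sw}} - y^*|$, the integrand $|z_t - y^*|$ is linear with $y^*$ at one endpoint, and the standard normal density $\phi$ is bounded by $1/\sqrt{2\pi}$, so $\Ex[\regret(t)\mid x_t] \le |z^{\mathrm{sw}} - y^*|^2/(2\sqrt{2\pi})$. A finite $z^{\mathrm{sw}}$ lies within $\eps_t$ of $-x_t m_t$, while $y^* = -x_t m_t - x_t(\theta - m_t)$ lies within $|x_t|\,|\theta - m_t|$ of $-x_t m_t$, so the triangle inequality gives $|z^{\mathrm{sw}} - y^*| \le \eps_t + |x_t|\,|\theta - m_t| \le (\Ceps + |x_t|/2)\,w_t$, using $\eps_t = \Ceps w_t$ and $|\theta - m_t| \le w_t/2$. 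Squaring and taking $\Ex_{x_t}[(\Ceps + |x_t|/2)^2] = \Ceps^2 + \Ceps\sqrt{2/\pi} + 1/4$, a constant depending only on $\Ceps$, yields the claimed $O(w_t^2)$ bound. Pleasingly, this one estimate absorbs all three messages at once --- including the on-the-fence region --- so Lemma~\ref{lem_probunknown} is not needed for it.

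The main obstacle is the remaining case $z^{\mathrm{sw}} = \pm\infty$, where the user never switches --- say he always plays $b_t = -1$ because even the most optimistic posterior mean satisfies $Z_t^{(1)} \le y^*$. Then the disagreement region is an unbounded tail $(y^*, \infty)$ and the bound through $|z^{\mathrm{sw}} - y^*|$ is vacuous, so I would instead use the tail estimate $\int_{y^*}^\infty (z - y^*)\phi(z)\,dz \le \phi(y^*)$ and show this regime is Gaussian-suppressed. Writing $\Phi$ for the normal CDF and $c \coloneqq -x_t m_t + \eps_t$ for the truncation point of message $1$, the inverse-Mills identity $Z_t^{(1)} = \phi(c)/\Phi(-c)$ turns the defining inequality $y^* \ge Z_t^{(1)}$ into $|x_t(\theta - m_t)| - \eps_t \ge Z_t^{(1)} - c \ge 0$. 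Together with $|\theta - m_t| \le w_t/2$, this forces $|x_t| = \Omega(1/\sqrt{w_t})$: when the Mills excess $Z_t^{(1)} - c$ is bounded below (e.g. $c \lesssim 0$) one gets $|x_t| = \Omega(1/w_t)$ directly; and when $c$ is large one has both $|x_t| \gtrsim c$ and $Z_t^{(1)} - c \gtrsim 1/c$, so $|x_t| \ge 2(Z_t^{(1)} - c)/w_t \gtrsim 1/(c\, w_t) \gtrsim 1/(|x_t|\, w_t)$, again giving $|x_t| = \Omega(1/\sqrt{w_t})$. The Gaussian measure of such contexts is $e^{-\Omega(1/w_t)} = o(w_t^2)$, so this case contributes negligibly. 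Adding the finite- and infinite-switch estimates, together with the symmetric analysis when the user is frozen at $b_t = 1$, gives $\Ex[\regret(t)] \le \Cregt w_t^2$ with $\Cregt$ depending only on $\Ceps$. I expect the infinite-switch bookkeeping --- partitioning the $(x_t, z_t)$-plane and invoking the Mills-ratio asymptotics uniformly in $c$ --- to be the most delicate part.
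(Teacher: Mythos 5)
Your proposal is correct, and it takes a genuinely different route from the paper's. The paper decomposes the regret by message: the event $\{a_t = 0\}$ is handled by multiplying $\Prob[a_t=0] = O(\eps_t)$ (Lemma~\ref{lem_probunknown}) by a per-event regret bound of order $\eps_t + w_t$, while the event $\{b_t \ne \ist,\, a_t \ne 0\}$ is delegated to a separate lemma (Lemma~\ref{lem_three_mistake}) that splits it further into ``the recommender misidentifies the better arm'' ($\ist \ne a_t$, bounded by the measure of $z_t$ between $-x_t\theta$ and $-x_tm_t$) and ``the user deviates from a confident recommendation'' ($a_t \ne b_t$), the latter shown to force $|x_t|^2 \ge \Cezl/w_t$ via Lemma~\ref{lem_hatz_bound_ternary} and hence to be exponentially rare. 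You instead condition on $x_t$, observe that the user's choice is a monotone threshold rule in $z_t$ whose switch point sits at a message boundary (or at $\pm\infty$), and integrate $|z_t - y^*|\phi(z_t)$ directly over the single disagreement interval. Your finite-switch estimate $\Ex[(\Ceps + |x_t|/2)^2]\,w_t^2/(2\sqrt{2\pi})$ unifies the paper's first two sub-cases into one computation with an explicit constant, and it handles the $|x_t|$-dependence of the per-round loss more transparently than the paper's $\Prob[a_t=0](\eps_t + w_t)$ step, which elides a factor of $|x_t|$ that must be averaged out. Your infinite-switch case is exactly the event that the user deviates from a definite recommendation, and your Mills-ratio argument that it forces $|x_t| = \Omega(1/\sqrt{w_t})$ (hence probability $e^{-\Omega(1/w_t)} = O(w_t^2)$) re-derives the content of Lemma~\ref{lem_hatz_bound_ternary} from scratch; that is the one place where the paper's modular structure saves work, and, as you anticipate, the uniform-in-$c$ bookkeeping there is the only part of your argument that still needs to be written out carefully. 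Both routes yield the claimed $O(w_t^2)$ bound with a constant depending only on $\Ceps$.
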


When an arm is recommended (i.e., when $a_t \neq 0$), then we can apply essentially the same analysis as the straightforward policy to derive the per-round expected regret of $O(w_t^2)$. The message $a_t = 0$ is sent with probability $\Theta(\eps_t)$ (by Lemma~\ref{lem_probunknown}). Since $a_t = 0$ is sent only when the utility is (approximately) indifferent between two arms, the per-round regret is bounded by $\eps_t + w_t$ in this case. Hence, the per-round expected regret for this case is $O(\max\{\eps_t,w_t\}^2)$. When we choose $\eps_t = \Ceps w_t$, then the per-round regret becomes quadratic in $w_t$.

The proof outline of Theorem~\ref{thm_three} is as follows. By Lemma~\ref{lem_probunknown}, the probability of $a_t = 0$ is $\Theta(\eps_t) = \Theta(w_t)$. Together with Lemma~\ref{lem_unknown_shrink}, it follows that, in round $t$, with probability $\Theta(w_t)$, the width of confidence interval $w_t$ shrinks geometrically to $w_{t+1} = (5/6) w_t$ or smaller. This leads an exponential reduction of the confidence interval to the total number of users to which the recommender has sent $a_t = 0$. Finally, when $\eps_t = \Theta(w_t)$, Lemma~\ref{lem_regeps} ensures that the per-round regret is $O(w_t^2)$. Let us refer to an interval between two geometric intervals as an \emph{epoch}. Since a geometric update occurs with probability $\Theta(w_t)$, the expected number of rounds contained in one epoch is $\Theta(1/w_t)$. The regret incurred per round is $\Theta(w_t^2)$; thus, the total regret incurred in one epoch is $\Theta(w_t^2 \times 1/w_t) = \Theta(w_t)$. Accordingly, the expected regret associated with each epoch is bounded by a geometric sequence whose common ratio is $5/6 < 1$. The total regret is the sum of the regret from all the epochs. Accordingly, the total regret is bounded by the sum of a geometric series, which converges to a constant.

\section{Other Binary Policies}\label{sec: other binary policies}

As aforementioned, although the straightforward policy is simple, natural, and practical, it is not an optimal binary policy. An optimal binary policy may perform better, and its performance could be comparable with the ternary policy. Nevertheless, the construction of the optimal policy, which requires a $T$-step look ahead, is computationally intractable for large $T$. We instead present two other binary policies, the myopic policy, and the EvE (exploration versus exploitation) policy, to demonstrate how the performance of the straightforward policy could be improved while maintaining the binary message space. We also discuss the shortcomings of such policies.

The two binary policies considered in this section can be characterized by a threshold parameter, $\rho_t$. Specifically, the policy decides the message according to the following criterion.
\begin{equation}
    \mu_t(z_t; \rho_t) = 
    \left\{
    \begin{array}{ll} 
    1 & \text{ if } z_t > \rho_t; \\
    -1 & \text{ otherwise}.
    \end{array}
    \right.
\end{equation}
Note that the straightforward policy also belongs to this policy class, where the threshold parameter is fixed to $\rho_t^{\mathrm{st}} \coloneqq - x_t m_t$ for all $t$.

\subsection{Myopic Policy}\label{subsec: exploitative policy}

\subsubsection{Definition}

We first analyze whether and how the recommender can improve the current user's payoff by fully exploiting the recommender's current information. For simplicity, we focus on the case of $x_t > 0$. The analysis for the case of $x_t < 0$ is similar, while we need to flip some inequalities appearing in the calculation process. Let $V(\rho_t; x_t, l_t, u_t)$ be the current user's expected payoff, where the expectation is taken with respect to the recommender's current information, $(x_t, l_t, u_t)$:
\begin{align}
    V(\rho_t; x_t, u_t, l_t) & = \Ex_{\tiltheta \sim \Unif[l_t, u_t], z_t \sim \Normal}\left[
    \Ind\{b_t = 1\}\left(x_t \tilde{\theta} + z_t\right)\right] \\
    & = \frac{1}{u_t - l_t}\left[
    \begin{gathered}
    \int_{\min\left\{\max\left\{-\frac{\Ex[z'_t | z'_t < \rho_t]}{x_t}, l_t\right\}, u_t\right\}}^{u_t}\int_{-\infty}^{\rho_t} \left(x_t \theta + z_t \right)\phi(z_t)  dz_t d\theta\\
    +\int_{\min\left\{\max\left\{-\frac{\Ex[z'_t | z'_t > \rho_t]}{x_t}, l_t\right\}, u_t\right\}}^{u_t}\int_{\rho_t}^{\infty} \left(x_t \theta + z_t \right)\phi(z_t)  dz_t d\theta
    \end{gathered}
    \right].\label{eq: myopic user payoff}
\end{align}
According to the recommender's (Bayesian) posterior belief, the state $\theta$ is distributed according to $\Unif[l_t, u_t]$. If the state $\theta$ is so large that the user's expected payoff from arm $1$ is larger than zero, then the user chooses arm $1$ and receives a payoff of $x_t \theta + z_t$. Otherwise, the user chooses arm $-1$ and receives a zero payoff, which does not appear in the formula \eqref{eq: myopic user payoff}. When $a_t = 1$ is recommended, the user knows that $z_t > \rho_t$, and the user chooses arm $1$ if and only if
\begin{equation}
    x_t \theta + \mathbb{E}[z'_t| z'_t > \rho_t] > 0,
\end{equation}
or equivalently,
\begin{equation}
    \theta > - \frac{\mathbb{E}[z'_t| z'_t > \rho_t]}{x_t}.
\end{equation}
Similarly, when $a_t = -1$ is recommended, the user chooses arm $1$ if and only if
\begin{equation}
    \theta > - \frac{\mathbb{E}[z'_t| z'_t < \rho_t]}{x_t}.
\end{equation}
The minimum and maximum appearing in the interval of integration is for letting the threshold within the belief support, $[l_t, u_t]$. The formula \eqref{eq: myopic user payoff} is obtained by specifying the region of $\theta$ under which $b_t = 1$ will be taken. The \emph{myopic policy} maximizes $V$, i.e., $\rho^{\mathrm{myopic}}_t \coloneqq \argmax_{\rho_t} V(\rho_t; x_t, l_t, u_t)$.

\subsubsection{Characterization and Regret Rate}

Using direct calculation, we can derive the functional form of $V$.
\begin{empheq}[left = {V(\rho_t) = \empheqlbrace}]{alignat = 2}
    &\frac{1}{2}\frac{1}{u_t - l_t}\left[x_t u_t^2 + \frac{1}{x_t}\frac{(\phi^*)^2}{\Phi^* (1 - \Phi^*)} \right] & \quad & \text{if } x_t l_t < - \frac{\phi^*}{1 - \Phi^*}<\frac{\phi^*}{\Phi^*}< x_t u_t, \tag{Case 1}\\
    &\begin{aligned}
    &x_t m_t (1 - \Phi^*) + \phi^* \\
    &+ \frac{1}{2(u_t - l_t)} \left[x_t u_t^2 \Phi^* + \frac{(\phi^*)^2}{\Phi^* x_t} - 2 \phi^* u_t \right]
    \end{aligned}
    & \quad & \text{if } - \frac{\phi^*}{1 - \Phi^*}< x_t l_t < \frac{\phi^*}{\Phi^*} < x_t u_t,  \tag{Case 2}\\
    & \begin{aligned}
    &x_t m_t (1 - \Phi^*) + \phi^* \\
    & + \frac{1}{2(u_t - l_t)} \left[
    \begin{aligned}
    & x_t l_t^2 (1 - \Phi^*) \\
    & +\frac{(\phi^*)^2}{(1 - \Phi^*) x_t} + 2 \phi^* l_t
    \end{aligned}
    \right]
    \end{aligned}
    & \quad & \text{if }  x_t l_t < -\frac{\phi^*}{1 - \Phi^*} < x_t u_t < \frac{\phi^*}{\Phi^*}, \tag{Case 3}\\
    & x_t m_t (1 - \Phi^*) + \phi^* & \quad & \text{if } -\frac{\phi^*}{1 - \Phi^*} < x_t l_t <  x_t u_t < \frac{\phi^*}{\Phi^*}, \tag{Case 4}\\
    &x_t m_t & \quad & \text{if }  \frac{\phi^*}{\Phi^*} < x_t l_t, \tag{Case 5}\\
    &0 & \quad & \text{if } x_t u_t < - \frac{\phi^*}{1 - \Phi^*}, \tag{Case 6}
\end{empheq}
where $\phi^* = \phi(\rho_t)$ and $\Phi^* = \Phi(\rho_t)$. Note that the above formula is derived by assuming $x_t > 0$, and we have a slightly different formula if $x_t < 0$. We can obtain the optimizer, $\rho_t^{\mathrm{myopic}}$, by numerically maximizing the $V$ function.

Depending on the integration intervals that appear in \eqref{eq: myopic user payoff}, the $V$ function takes different forms. The intervals of integration are determined by comparing the following four terms: (i) $x_t l_t$, (ii) $x_t u_t$, (iii) $-\Ex[z'_t | z'_t > \rho_t] = - \phi^*/(1 - \Phi^*)$, and (iv) $-\Ex[z'_t | z'_t < \rho_t] = \phi^*/\Phi^*$. 
Term (ii) is always larger than term (i), and term (iv) is always larger than term (iii).
Accordingly, there are $4!/(2! \times 2!) = 6$ cases in total.

In Cases~5 and 6, the user always (i.e., for any values of $\theta \in [l_t, u_t]$) chooses arms $1$ and $-1$ respectively, regardless of the recommendation $a_t$. In these cases, the recommendation is totally useless for the user's decision-making, and therefore, such a choice of the threshold parameter $\rho_t$ is always suboptimal.

In Cases~1 and 2, the user may deviate and choose $b_t = -1$ when $a_t = 1$ is recommended. In Cases~1 and 3, the user may deviate and choose $b_t = 1$ when $a_t = -1$ is recommended. In Case~4, the user always follows the recommendation. Depending on $(x_t, l_t, u_t)$, an optimal threshold $\rho_t$ could exist in each of the four cases.

If the optimal solution belongs to either Case~2 or 3, then it cannot be represented in a tractable closed-form formula. By contrast, when the optimal $\rho_t$ belongs to Case~1 or 4, then it takes a simple form. When the optimal solution belongs to Case~1, then $\rho_t = 0$ must be the case, and when the optimal solution belongs to Case~4, then $\rho_t = - x_t m_t$ must be the case. These facts can be easily verified by checking the first-order condition for optimality.

\begin{figure}[t!]
    \centering
    \includegraphics[width = 0.8 \textwidth]{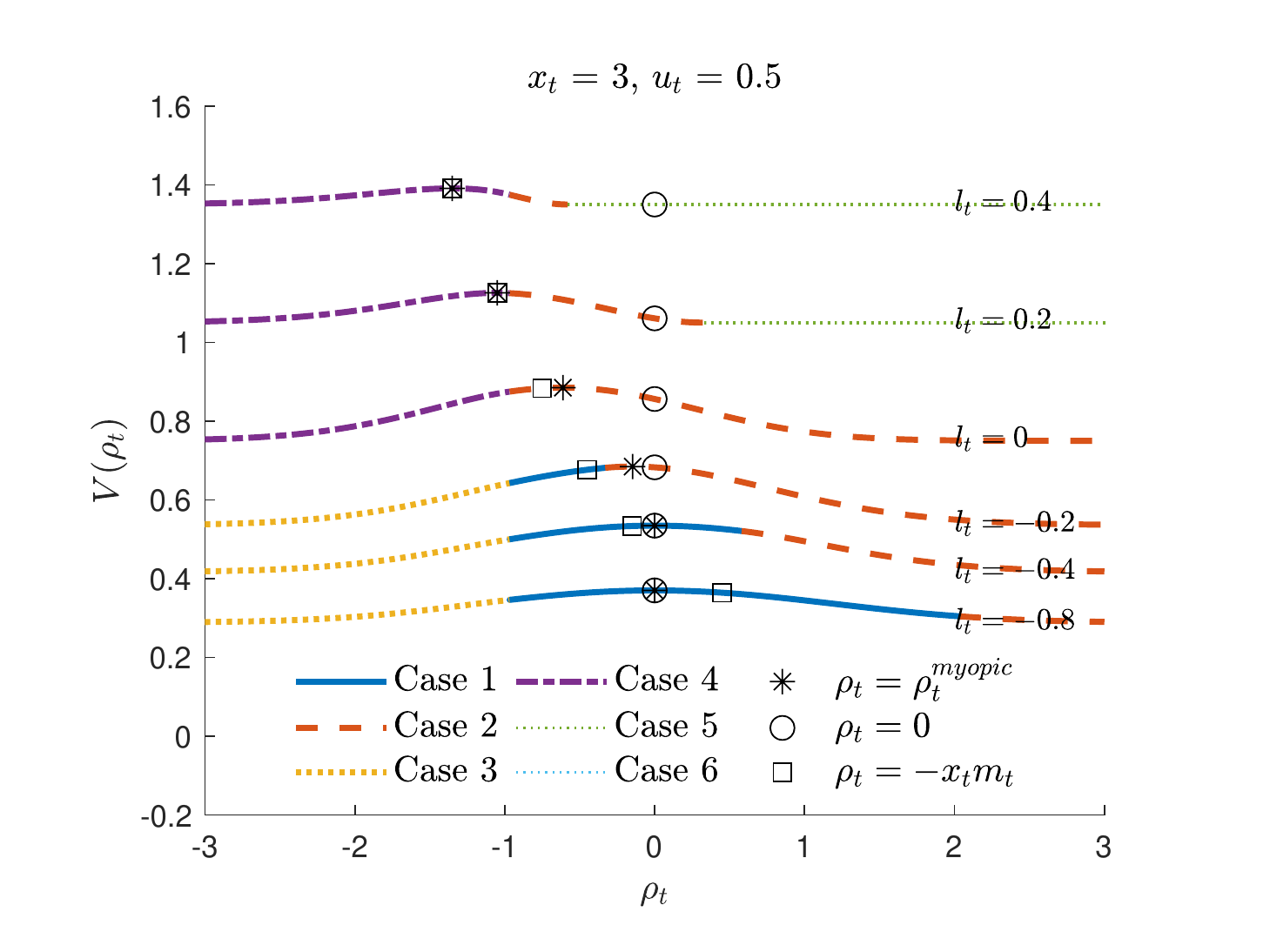}
    \caption{The relationship between the threshold parameter $\rho_t$ and the current user's expected payoff $V(\rho_t)$. We fix $(x_t, u_t)$ to $(3.0, 0.5)$, and plot lines for $l_t \in \{ -0.8, -0.4, 0.2, 0, 0.2, 0.4 \}$. Threshold $\rho$ belonging to different cases are drawn in different colors and styles. The markers $*$, $\circ$, and $\square$ shows the optimal value, the value at $\rho_t = 0$, and the value at $\rho_t = -x_t m_t$, respectively.}
    \label{fig:myopic_optimal}
\end{figure}

Figure~\ref{fig:myopic_optimal} shows the structure of the myopic policy. We fix $(x_t, u_t)$ at $(3.0, 0.5)$, and plot $V$ varying $l_t$. When the confidence interval is wide ($l_t = -0.8$ and $-0.4$), the optimal solution belongs to Case~1, and it is optimal to choose $\rho_t = 0$. In contrast, when the confidence interval is narrow ($l_t = 0.2$ and $0.4)$, the optimal solution belongs to Case~4, and it is optimal to choose $\rho_t = - x_t m_t$. For the intermediate case ($l_t = - 0.2$ and $0$), the myopically optimal policy is also intermediate: The optimal $\rho_t$ belongs to Case~2, and $\rho_t^{\mathrm{myopic}}$ is also in between $\rho_t = 0$ and $\rho_t = - x_t m_t$.

\subsubsection{Connection to the Straightforward Policy}

When the optimal solution belongs to Case~4, the myopic policy matches the straightforward policy. Figure~\ref{fig:myopic_optimal} suggests that this is likely to occur if the confidence interval is small. The following theorem formally demonstrates that when the confidence interval $[l_t, u_t]$ is sufficiently small, these two policies are identical unless $|x_t|$ is very large.

\begin{thm}[Equivalence under Narrow Confidence Intervals]\label{thm_binary_shortopt}
For all $\delta > 0$, for all $(l_t, u_t)$ such that $u_t - l_t \eqqcolon w_t < \delta^2/4 \sqrt{\pi \log(1/\delta)}$, we have
\begin{equation}
    \Prob_{x_t \sim \Normal}\left[\left.\rho^{\mathrm{myopic}}_t = -x_t m_t \right| l_t, u_t\right] \ge 1-\delta.
\end{equation}
\end{thm}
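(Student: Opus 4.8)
The plan is to reduce the event $\{\rho_t^{\mathrm{myopic}} = -x_t m_t\}$ to the event that the maximizer of $V$ lies in Case~4, and then to bound the probability (over $x_t\sim\Normal$) that it does not. By the symmetry of the payoff $x_t\theta+z_t$ under $(x_t,\theta)\mapsto(-x_t,-\theta)$, which maps $-x_tm_t$ to itself, I would assume $x_t>0$ throughout and treat $x_t<0$ identically. Recall from the case analysis that $-x_tm_t$ is the critical point of the Case~4 expression $x_tm_t(1-\Phi(\rho_t))+\phi(\rho_t)$, so it suffices to show that (i) for small $w_t$, the point $\rho_t=-x_tm_t$ lies in the interior of Case~4 with probability at least $1-\delta$, and (ii) whenever it does, it is the \emph{global} maximizer of $V$.

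For step (i), writing $l_t=m_t-w_t/2$ and $u_t=m_t+w_t/2$ and using $\Ex[z\mid z>a]=\phi(a)/(1-\Phi(a))$ and $\Ex[z\mid z<a]=-\phi(a)/\Phi(a)$, the two Case~4 membership conditions evaluated at $\rho_t=-x_tm_t$ (namely $x_tl_t>-\Ex[z_t\mid z_t>-x_tm_t]$ and $x_tu_t<-\Ex[z_t\mid z_t<-x_tm_t]$, i.e.\ ``the user obeys the straightforward recommendation for every $\theta\in[l_t,u_t]$'') become $\tfrac12 x_tw_t < s+\lambda(-s)$ and $\tfrac12 x_tw_t<\lambda(s)-s$, where $s\coloneqq x_tm_t$ and $\lambda(r)\coloneqq \phi(r)/(1-\Phi(r))$ is the inverse Mills ratio. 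A short monotonicity computation shows their minimum equals $H(|s|)$, where $H(r)\coloneqq\lambda(r)-r$ is the positive, strictly decreasing mean-residual-life of the standard normal; so the joint condition is exactly $\tfrac12 x_tw_t<H(|x_tm_t|)$. Since $|m_t|\le1$ gives $|x_tm_t|\le x_t$ and $H$ is decreasing, it then suffices to verify $\tfrac12 x_tw_t<H(x_t)$.

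For step (ii), I would upgrade the local maximum to a global one by a single-crossing argument on $V'$. Inside Case~4 one has $V'(\rho_t)=-\phi(\rho_t)(x_tm_t+\rho_t)$, which is positive for $\rho_t<-x_tm_t$ and negative for $\rho_t>-x_tm_t$; I would then check that this sign pattern is preserved as $\rho_t$ crosses into the adjacent Cases~2 and~3, so that the piecewise $V'$ crosses zero only once and no competing peak in Cases~1--3 can exceed $V(-x_tm_t)$. I expect this to be the main obstacle: it amounts to ruling out that deliberately inducing informative deviations (the regime of Cases~1--3) raises the current user's expected payoff above the obedient value, and Cases~2 and~3 have no closed form, so the sign of the piecewise derivative must be controlled by hand.

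Finally, to turn (i) into the stated probability bound, I would take the threshold $c\asymp\sqrt{\log(1/\delta)}$ (concretely $c=\sqrt{2\log(1/\delta)}$), so that the Gaussian tail bound $\Prob[\,|x_t|>c\,]\le\sqrt{2/\pi}\,e^{-c^2/2}/c=\delta/\sqrt{\pi\log(1/\delta)}\le\delta$ holds. For $x_t\le c$ the sufficient condition $\tfrac12 x_tw_t<H(x_t)$ is tightest at $x_t=c$, where $H(c)\approx 1/c$; substituting the hypothesis $w_t<\delta^2/(4\sqrt{\pi\log(1/\delta)})$ gives $\tfrac12 c\,w_t<\delta^2/(4\sqrt{2\pi})$, which is far below $H(c)\approx 1/\sqrt{2\log(1/\delta)}$ for small $\delta$, so the condition holds with ample room over the whole range $x_t\in(0,c]$. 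Hence $\{\,|x_t|\le c\,\}\subseteq\{\rho_t^{\mathrm{myopic}}=-x_tm_t\}$, and the claimed bound of $1-\delta$ follows. The stated $w_t$-threshold is thus a clean (deliberately loose) sufficient choice rather than a tight one, which is why the delicate work is concentrated in step (ii) rather than in the final calibration.
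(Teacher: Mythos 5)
Your reduction is sound in outline and your steps (i) and (iii) match what the paper actually does: the calibration via the Gaussian tail at $c=\sqrt{2\log(1/\delta)}$ and the check that the hypothesis on $w_t$ leaves ample slack are exactly the paper's Eq.~\eqref{ineq_deltawt}, and your computation that Case-4 membership of $\rho_t=-x_tm_t$ reduces to $\tfrac12 x_t w_t < H(|x_tm_t|)$ with $H$ the (decreasing) normal mean-residual-life is correct. The problem is step (ii), which you yourself flag as ``the main obstacle'' and then do not carry out. Establishing that the local maximizer inside Case~4 is the \emph{global} maximizer of $V$ is the entire mathematical content of the theorem --- without it you have only shown that $-x_tm_t$ is a critical point on one piece of a piecewise-defined function --- and the single-crossing route you sketch (controlling the sign of $V'$ as $\rho_t$ passes through Cases~2 and~3, where $V$ has no closed form) is left as a hope rather than an argument. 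As written, the proposal does not prove the theorem.

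It is worth noting that the paper's proof avoids this obstacle entirely by a different mechanism, which you may want to adopt. Rather than analyzing $V'$ across case boundaries, it observes that if a threshold $\rho_t$ induces a deviation at \emph{some} $\theta^{\mathrm{d}}\in[l_t,u_t]$, then since every $\theta$ in the support satisfies $|\theta-\theta^{\mathrm{d}}|\le w_t$, the value $V(\rho_t)$ can exceed the relevant constant-action baseline ($0$ when the user defects to arm $-1$, and $x_t\tilde\theta$ when the user defects to arm $1$) by at most $|x_t|w_t$; meanwhile the obedient value at $\rho_t^{\mathrm{st}}=-x_tm_t$ exceeds that same baseline by at least $\frac{1}{2\sqrt{2\pi}}e^{-(x_tm_t)^2}$, and the hypothesis on $w_t$ together with $|x_t|\le\sqrt{2\log(1/\delta)}$ makes this margin dominate $|x_t|w_t$. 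This one comparison disposes of all of Cases~1--3 simultaneously, with no need to understand the shape of $V$ there. If you insist on the derivative/single-crossing route, you would need to supply explicit sign control of the piecewise $V'$ in Cases~2 and~3, which is substantially harder than the paper's direct payoff comparison.
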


The intuition of Theorem~\ref{thm_binary_shortopt} can be described as follows. When $w_t$ is small and $|x_t|$ is not very large, the recommender's estimation about $r_t(1) = x_t \theta + z_t$ is accurate, and therefore, the recommender can correctly figure out the better arm with a large probability. In such a case, the recommender should communicate which arm is estimated to be better, and the user should follow the recommendation. Therefore, the myopic policy matches the straightforward policy.

Theorem~\ref{thm_binary_shortopt} is important for our argument for two reasons. First, Theorem~\ref{thm_binary_shortopt} characterizes the asymptotic efficiency of the straightforward policy. That is, while the straightforward policy may differ from the myopic policy at first, the two policies align in the long run, indicating that the straightforward policy asymptotically maximizes the current user's payoff. In this sense, the straightforward policy is asymptotically optimal in terms of exploitation. Second, given that the myopic policy and the straightforward policy are asymptotically identical, their regret rates are also identical. Accordingly, the myopic policy also fails to learn the state $\theta$ precisely even in the long run and incurs $\tilTheta(T)$ regret. 

\subsubsection{Drawbacks}

This section’s analyses have demonstrated that the straightforward policy is sometimes suboptimal for the current user. In particular, when the recommender is not confident about $\theta$, the recommender should simply communicate the sign of $z_t$, rather than communicating the estimated better arm. When the recommender has moderate confidence, the optimal threshold falls between the two. By changing the structure of the recommendation depending on the width of the confidence interval, the recommender can improve current users' payoffs.

Nevertheless, the myopic policy is difficult to implement. Unlike the straightforward policy, the interpretation of the message $a_t$ changes over time. For example, in the car navigation context, when the confidence interval is wide, $a_t$ communicates the sign of $z_t$, which is interpreted as ``which road is more vacant''; meanwhile, when the confidence interval is narrow, $a_t$ communicates the sign of $x_t m_t + z_t$, which is interpreted as ``which road the recommender estimates to be better.'' Furthermore, the message's interpretation shifts continuously in the intermediate case.
Therefore, while we have considered binary policies because a binary message space to evaluate the regret achieved by minimal communication, the myopic policy requires complex communication. We contend that the ternary policy should be easier for users than the myopic policy in practice.

Moreover, Section~\ref{sec_sim} shows that the myopic policy and the straightforward policy perform very similarly. More importantly, the ternary policy greatly outperforms these two.

\subsection{Exploration versus Exploitation (EvE) Policy}\label{subsec: exploratory policy}

\subsubsection{Definition}

Parallel to the multi-armed bandit problem, the recommender faces the tradeoff between the acquisition of new information (called ``exploration'') and optimization of her decision based on current information (called ``exploitation''). The straightforward policy and myopic policy only consider the current user's payoff, and therefore, tend towards exploitation.

In this section, we study the EvE policy. This policy initially explores the information on  state $\theta$, ignoring the current user's expected payoff. That is, when the width of the confidence interval, $w_t$, is larger than a threshold, $1/\sqrt{T}$, the recommender attempts to maximize the recommender's own information gain. Subsequently, the recommender starts to exploit the acquired information, i.e., adopts the straightforward policy.\footnote{Since the confidence interval is already narrow, the straightforward policy is approximately optimal in terms of exploitation (Theorem~\ref{thm_binary_shortopt}).} Since the per-round regret from the straightforward policy is $\Theta(w_t^2) = \Theta(1/T)$, the total regret from the exploitation phase is bounded by a constant. Accordingly, the total regret rate is characterized by the learning speed in the exploration phase. Such policies often outperform myopic policies in multi-armed bandit problems, where decision makers need to consider the tradeoff between exploration and exploitation.

The formal construction of the exploration phase is as follows. We design a policy such that user $t$'s action $b_t$ discloses whether $\theta > m_t$ or not. To this end, we define the threshold function $c: \Real \to \Real$ as follows. The value of $c(0)$ is defined arbitrarily.\footnote{We need to define $c(0)$ because $m_t = 0$ in the first round. Once the confidence interval is updated, $x_t m_t = 0$ subsequently occurs with probability zero, meaning the definition of $c(0)$ does not impact the long-run performance of the EvE policy.} For $y > 0$, $c(y)$ is defined as a unique scalar that satisfies
\begin{equation}\label{eq: threshold function c(y) definition}
    \left(\Ex[z_t | z_t < c(y)] =\right) \quad -\frac{\phi(c(y))}{\Phi(c(y))} = - y.
\end{equation}
Such $c(y)$ exists because as $c(y)$ moves from $-\infty$ to $\infty$, the left hand side of \eqref{eq: threshold function c(y) definition} moves from $-\infty$ to $0$. Furthermore, the solution is unique because the left hand side of $\eqref{eq: threshold function c(y) definition}$ is increasing in $c(y)$. For $y < 0$, $c(y)$ is defined as a unique scalar that satisfies
    \begin{equation}
        \left(\Ex[z_t | z_t > c(y)] = \right) \quad \frac{\phi(c(y))}{1 - \Phi(c(y))} = -y.
    \end{equation}
The existence and uniqueness of $c(y)$ for $y < 0$ can be shown in the same manner. The shape of the threshold function $c$ appears in Figure~\ref{fig: cfunction}.

\begin{figure}[t!]
    \centering
    \includegraphics[width = 0.5 \textwidth]{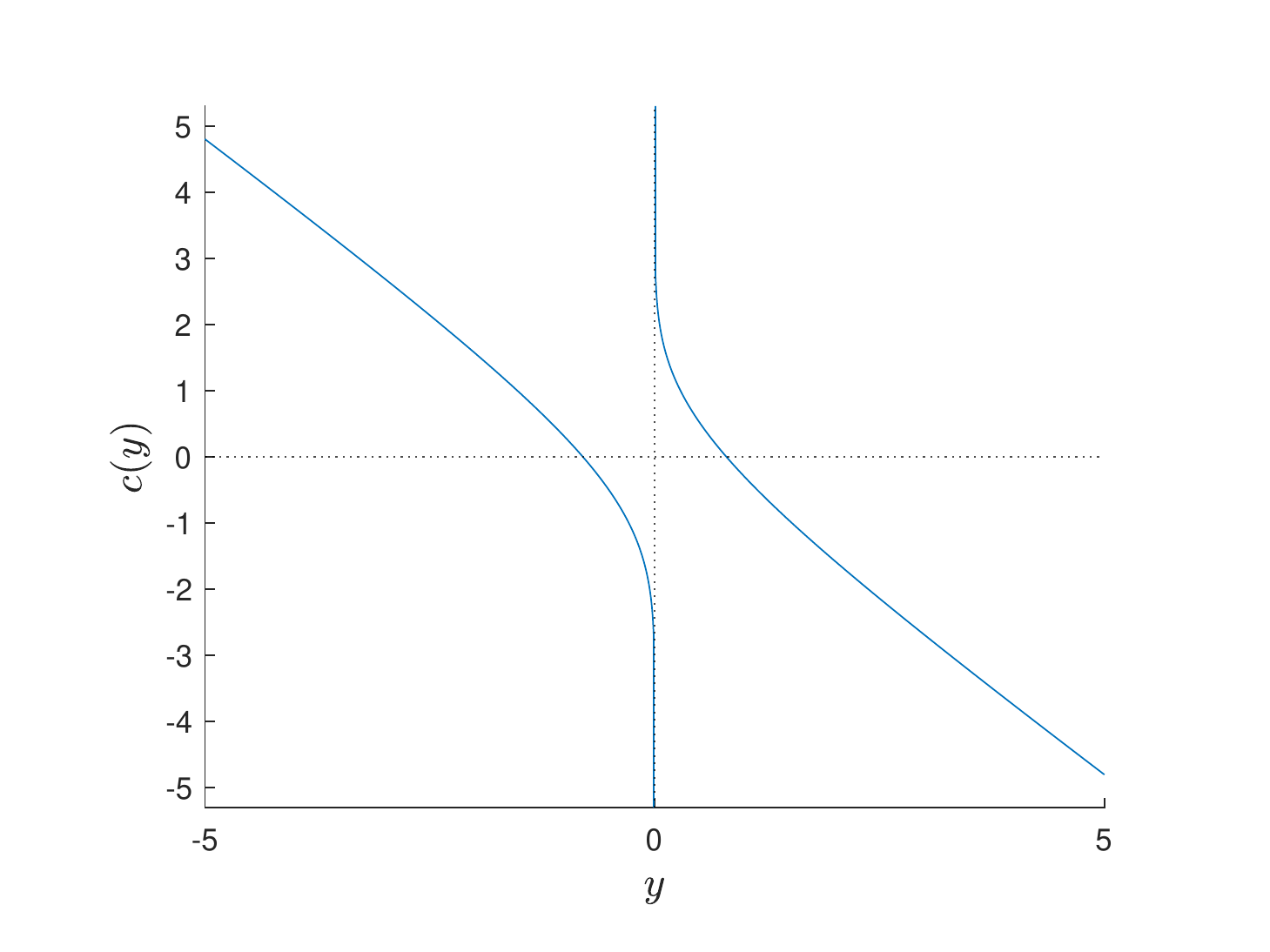}
    \caption{The shape of the threshold function $c$. $\lim_{y \uparrow 0} c(y) = -\infty$, $\lim_{y \downarrow 0} c(y) = + \infty$, and the value of $c(0)$ is defined arbitrarily.}
    \label{fig: cfunction}
\end{figure}

We define the EvE policy as follows:
\begin{equation}
    \rho_t^{\mathrm{EvE}} \coloneqq \begin{dcases}
    c(x_t m_t) & \text{ if } w_t > \frac{1}{\sqrt{T}} \hspace{1em} \text{ (Exploration Phase)},\\
    -x_t m_t & \text{ otherwise.} \hspace{1em} \text{ (Exploitation Phase)}.
    \end{dcases}
\end{equation}

\subsubsection{Regret Rate}

For instruction, we focus on the case of $x_t m_t > 0$.
Suppose that the recommender employs $\rho_t = c(x_t m_t)$.
If $z_t < c(x_t m_t)$ is the case, message $a_t = -1$ is sent, and then user $t$'s expected payoff from arm $1$ is $r_t(1) = x_t \theta + \Ex[z'_t | z'_t < c(x_t m_t)] = x_t(\theta - m_t)$. Since $x_t$ is known, by observing the user's choice, the recommender can identify whether $\theta > m_t$ or not. By observing the user's choice $b_t$, the recommender can halve the confidence interval: From $[l_t, u_t]$ to either $[l_t, m_t]$ or $[m_t, u_t]$. Furthermore, $z_t < c(x_t m_t)$ occurs with a constant probability for each round. Accordingly, under the threshold policy with $\rho_t = c(x_t m_t)$, the confidence interval shrinks geometrically. Therefore, it takes only $O(\log T)$ rounds to reach $w_t < 1/\sqrt{T}$.

The following theorem demonstrates that the EvE policy achieves $O(\log T)$ expected regret.
\begin{thm}[Regret Bound of EvE]
\label{thm: EvE regret bound}
Under the EvE policy, there exists a universal constant $C_{\mathrm{EvE}} > 0$ such that the regret is bounded as:
\begin{equation}
    \Ex[\Regret(T)] \le C_{\mathrm{EvE}} \log T.
\end{equation}
\end{thm}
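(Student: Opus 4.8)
The plan is to split the time horizon into the exploration phase and the exploitation phase and bound the regret contributed by each separately. The overall target is $O(\log T)$, and the structure mirrors the informal argument already given: the exploration phase lasts only $O(\log T)$ rounds because the confidence interval shrinks geometrically, and the exploitation phase contributes only $O(1)$ regret because by then $w_t < 1/\sqrt{T}$ and the straightforward policy incurs per-round regret $\Theta(w_t^2) = O(1/T)$.

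First I would control the \emph{exploration phase}. The key observation, already sketched in the text, is that when $\rho_t = c(x_t m_t)$, observing $b_t$ reveals whether $\theta > m_t$, which bisects the confidence interval exactly: $w_{t+1} = w_t/2$. This deterministic halving occurs whenever the informative message is sent (e.g., $z_t < c(x_t m_t)$ when $x_t m_t > 0$), and the text asserts this happens with \emph{constant} probability per round. So the first step is to prove this constant-probability claim: I would show that $\Prob[z_t < c(x_t m_t)]$ (and the symmetric event for $x_t m_t < 0$) is bounded below by a universal constant $p_0 > 0$, uniformly over the realized values of $x_t m_t$. This requires care because $c(x_t m_t)$ depends on $x_t m_t$, which can be large; I would use the definition \eqref{eq: threshold function c(y) definition} together with standard Gaussian tail/Mills-ratio estimates to verify that the truncation threshold stays in a range where the relevant tail mass is bounded away from zero. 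I expect this to be the \textbf{main obstacle}: one must confirm that $c(\cdot)$ does not push the cut point into a region where the informative event becomes vanishingly rare as $|x_t m_t| \to \infty$.

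Given a per-round success probability of at least $p_0$, the number of halvings needed to drive $w_t$ from its initial value $2$ down to $1/\sqrt{T}$ is $O(\log T)$, and by a standard geometric/stochastic-domination argument the expected number of exploration rounds is $O(\log T / p_0) = O(\log T)$. During exploration I bound the per-round regret crudely by a universal constant (each $\regret(t)$ is $O(1)$ in expectation, which I can get from a rough bound on $\Ex[|r_t(1)|] = \Ex[|x_t\theta + z_t|]$ since $x_t, z_t$ are standard normal and $|\theta| \le 1$). Multiplying the $O(\log T)$ expected exploration rounds by the $O(1)$ per-round bound yields $O(\log T)$ regret for this phase.

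Finally I would handle the \emph{exploitation phase}. Once $w_t \le 1/\sqrt{T}$, the policy switches to $\rho_t = -x_t m_t$, i.e., the straightforward policy, and $w_t$ is monotonically nonincreasing, so it stays below $1/\sqrt{T}$. I would invoke Lemma~\ref{lem_regeps}'s straightforward-policy analogue (the $O(w_t^2)$ per-round bound established for $a_t \ne 0$ cases, which applies verbatim here since the straightforward policy never sends $0$) to get $\Ex[\regret(t)] = O(w_t^2) = O(1/T)$ for each exploitation round. Summing over at most $T$ such rounds gives $O(T \cdot 1/T) = O(1)$. Adding the two phases, $\Ex[\Regret(T)] = O(\log T) + O(1) = O(\log T)$, which is the claimed bound. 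The only subtlety to double-check is that the transition time is itself a random variable; I would handle this by a clean conditioning argument, bounding exploration regret and exploitation regret on the event that the switch occurs at each possible round and taking expectations, so that the $O(\log T)$ exploration bound holds in expectation regardless of when the phase boundary is crossed.
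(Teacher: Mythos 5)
Your proposal follows essentially the same route as the paper: a geometric-shrinkage lemma for the exploration phase giving $O(\log T)$ expected exploration rounds with $O(1)$ per-round regret, plus an $O(w_t^2)=O(1/T)$ per-round bound over at most $T$ exploitation rounds, exactly as in the paper's Lemma~\ref{lem: Eve Geometric Update} and the subsequent argument. One correction on the step you rightly flag as the main obstacle: the lower bound you aim for --- $\Prob[z_t < c(x_t m_t)] \ge p_0$ \emph{uniformly over the realized values of} $x_t m_t$ --- is false. For $y \to +\infty$ the defining relation $\phi(c(y))/\Phi(c(y)) = y$ forces $c(y) \approx -y \to -\infty$, so $\Prob[z_t < c(y)] \to 0$; no Mills-ratio argument will rescue a uniform bound. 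The paper instead lower-bounds the \emph{unconditional} probability by restricting $x_t$ to a bounded set: with probability $\Phi(1)-\Phi(0)$ we have $x_t \in (0,1)$, hence $|x_t m_t| \le |m_t| \le 1$, hence $c(x_t m_t) > -x_t m_t \ge -1$, and then $z_t < -1$ (probability $\Phi(-1)$) suffices to trigger the halving. Substituting that integration-over-$x_t$ step for your uniform claim, the rest of your argument (including the conditioning on the random phase-transition time) goes through and matches the paper's proof.
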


The EvE policy spends the first $O(\log T)$ rounds for learning and subsequently adopts the straightforward policy. It incurs $O(\log T)$ regret for the exploration phase, and the regret from the exploitation phase is bounded by a constant. The EvE policy outperforms the straightforward policy in terms of the regret rate.

\begin{remark}
The construction of the exploration phase of the EvE policy resembles the optimal information design in a Bayesian persuasion model \citep{KG2011}. \cite{KG2011} consider a sender-receiver model, where the sender submits information about the state, and the receiver takes an action to maximize his payoff that depends on the state and action. \cite{KG2011} show that the sender can maximize her own payoff by obscuring the state information conveyed to the receiver, and in some cases, the optimal information design makes the receiver indifferent between multiple actions. In the exploration phase of the EvE policy, the recommender (sender) also attempts to make the user (receiver) indifferent between the two arms with respect to the recommender's best knowledge (i.e., believing $\theta = m_t$), by setting $\rho_t = c(x_t m_t)$. However, its purpose differs, with the recommender in our model is attempting to extract more information from the user rather than induce a recommender-preferred action.
\end{remark}

\subsubsection{Drawbacks}

Although the EvE policy achieves a sublinear regret rate without expanding the message space, several drawbacks limit its practical desirability.

First, the EvE policy sacrifices the utility of early users. This feature produces an unfair welfare distribution across users. Furthermore, although this paper does not model the outside option, when users find that the recommender is not really helping them but trying to take advantage of their knowledge, they may quit using the recommender system. If this is the case, the recommender fails to extract users' knowledge in the exploration phase.

Second, the EvE policy needs detailed information about the environment. To compute the threshold function $c$, the recommender needs detailed knowledge about the distribution of $z_t$. Furthermore, to optimize the length of the exploration phase, the recommender should know the total number of users, $T$. The EvE policy cannot be used if the recommender has no such detailed knowledge. By contrast, the straightforward policy and ternary policy can be implemented even in the absence of such information.

Third, although the EvE policy outperforms other binary policies, allowing a ternary message space makes achieving an even better regret rate easy.
Recall that the simple ternary policy considered in Section~\ref{sec_ternary} has a regret rate of $O(1)$, even though its construction does not fully account for the tradeoff between exploration and exploitation. Given that the ternary policy does not suffer from the disadvantages associated with the EvE policy, we have no reason to implement such a policy. In Section~\ref{sec_sim}, we further demonstrate that the ternary policy substantially outperforms the EvE policy.

\section{Simulations}\label{sec_sim}

This section provides the simulation results. 
For each path, we draw $\theta$ from $\Unif[-1, 1]$ and $x_t, z_t$ from i.i.d. standard normal distribution for each of the $T = 10,000$ rounds. For the ternary policy, we choose $\Ceps = 1/4$ as the algorithm parameter.\footnote{The simulation code is available at \url{https://github.com/jkomiyama/deviationbasedlearning}.}

\subsection{Regret Growth}

\begin{figure}[t!]
    \centering
    \begin{minipage}[t]{0.48\textwidth}
         \centering
         \includegraphics[width=\textwidth]{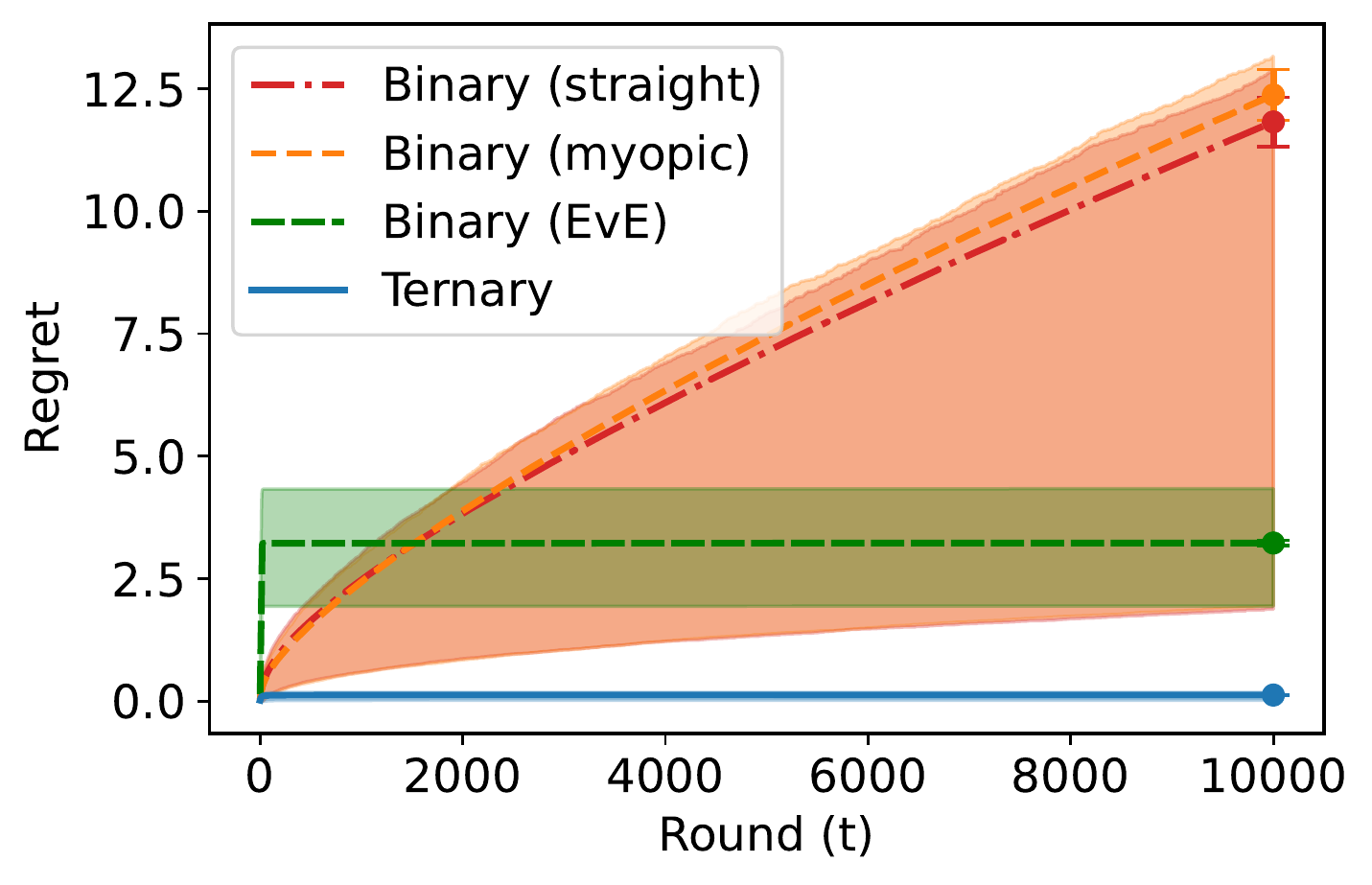}
        (a) Comparison of the four policies
    \end{minipage}
    \hspace{0.02\textwidth}
    \begin{minipage}[t]{0.48\textwidth}
         \centering
         \includegraphics[width=\textwidth]{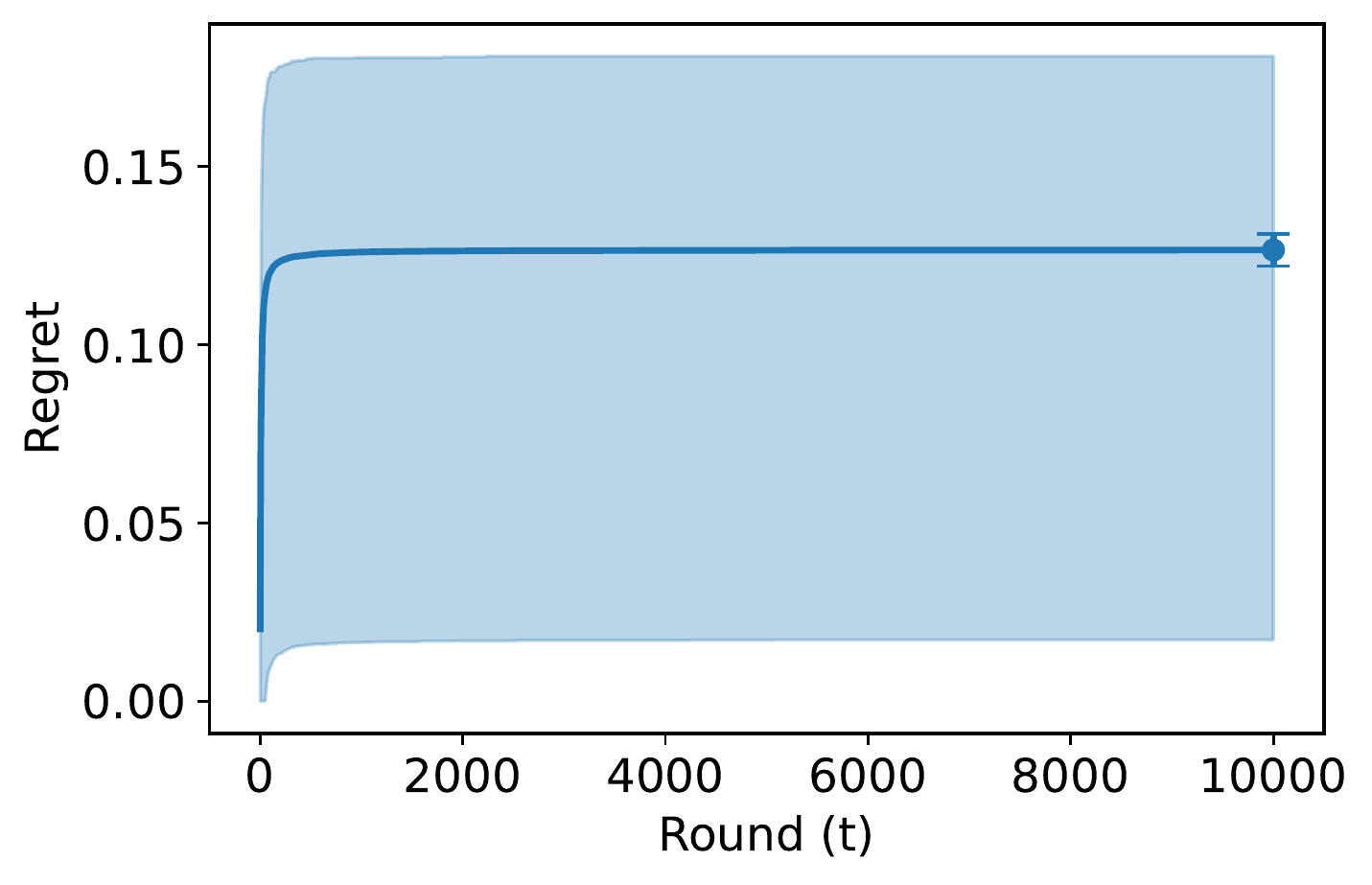}
        (b) Enlarged figure
    \end{minipage}
    \caption{The evolution of cumulative regret $\Regret(t)$ under the straightforward policy, myopic policy, EvE policy, and ternary policy. Panel (b) is an enlarged view of Panel (a).}
    \label{fig:regret}
\end{figure}

We plot the cumulative regret, $\Regret(t)$, in Figure~\ref{fig:regret}. For all graphs in this section, the lines are averages over $\Runnum$ trials, and the shaded areas cover between $25$ and $75$ percentiles. The whiskers drawn at the final round ($T = 10,000$) represent the two-sigma confidence intervals of the average values.

As Theorem~\ref{thm_binary} has proven, under the straightforward policy, the cumulative regret grows almost linearly, and users suffer from a large regret in the long run. The myopic policy behaves similarly to (but very slightly worse than) the straightforward policy. By contrast, the EvE policy initially explores the value of $\theta$ by sacrificing early users and produces approximately no regret subsequently. Consequently, in the last period ($T = 10,000$), the EvE policy performs substantially better than the other two binary policies: The EvE policy incurs regret of 3.23 on average, whereas the straightforward policy and myopic policy incur 11.82 and 12.37.  Furthermore, if the problem continued beyond round $10,000$, we could expect the performance difference to become larger and larger.

Despite the ternary policy's simple construction, it performs remarkably better than these three binary policies.
Similar to the EvE policy, regret grows rapidly in the beginning. However, regret growth terminates much earlier than under the EvE policy, and subsequently, cumulative regret does not grow. Because its regret is proven to be bounded by a constant (Theorem~\ref{thm_three}), it is guaranteed that regret would not grow even when $T$ is extremely large. At $T = 10,000$, the ternary policy only incurs regret of 0.127, which is roughly $1/100$ of the regret incurred by the straightforward policy.

Interestingly, the performance difference between the ternary and EvE policies is large, despite the EvE policy learning the state at an exponential rate. This is because the EvE policy does not optimally select the ``timing to learn.'' 
The EvE policy sacrifices payoffs of all users arriving during the exploration phase, completely ignoring their situations. That is, each user is not informed of the better arm even when one arm appears much better than the other for him (i.e., $|x_t m_t + z_t | \gg 0$). Such users suffer from large per-round regret, meaning regret grows very rapidly during the exploration phase. By contrast, the ternary policy attempts to extract information from the user only when the current user is estimated to be indifferent between two arms (i.e., $|x_t m_t + z_t| \approx 0$), making the per-round regret much smaller. While the ternary policy learns more slowly than the EvE policy, this feature does not deteriorate the regret. If the two arms exhibit different performances for the current user, the recommender need not collect the information at that moment. That is, although the recommender's knowledge is not yet precise, the recommender can confidently recommend the better arm for such an ``easy case.'' The ternary policy attempts to acquire information when (and only when) the information is necessary for identifying the better arm. Because the EvE policy ignores this aspect, it is inefficient.

\subsection{Exploration}\label{subsec: sim long term}

\begin{figure}[t!]
    \centering
    \includegraphics[width = 0.6 \textwidth]{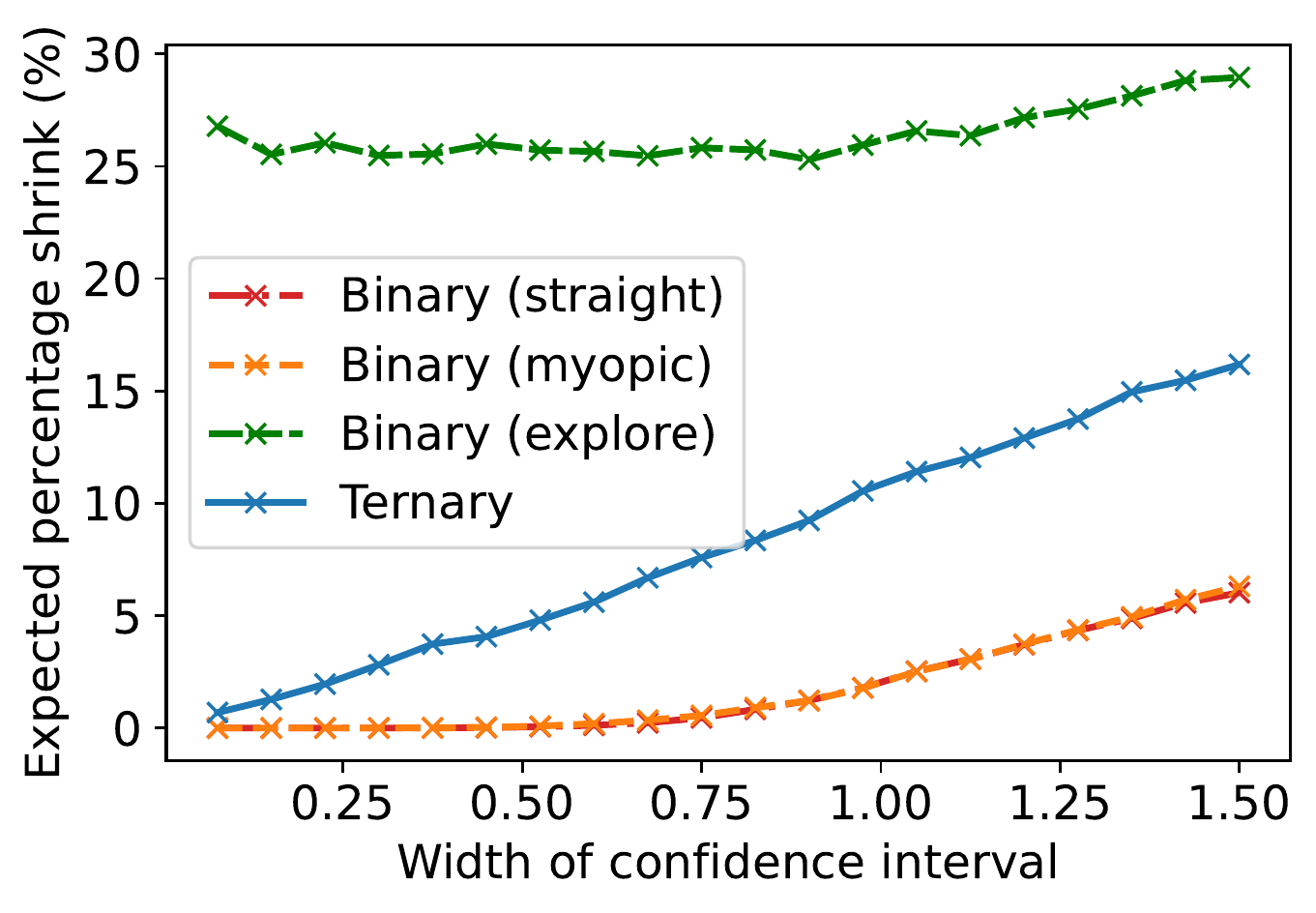}
    \caption{The per-round expected percentage shrink, $\Ex_{\theta \sim \Unif[l_t, u_t]}\left[(w_{t} - w_{t+1})/{w_t}\right]$ under the straightforward policy, the myopic policy, the exploration phase of the EvE policy, and the ternary policy. The performances of the straightforward policy and myopic policy are nearly identical. $u_t$ is fixed at $0.5$, and we vary the width of the confidence interval, $w_t \coloneqq u_t - l_t$, between $0.0$ and $1.5$.}
    \label{fig:single_long}
\end{figure}

This section analyzes each policy's performance in terms of exploration and exploitation of information. First, we depict exploration by showing the learning rate. We measure it by the per-round percentage shrink of the confidence interval, defined as $(w_{t} - w_{t+1})/w_{t}$. If the shrink is large, then the policy learns the state $\theta$ rapidly, and the narrowed confidence interval benefits all subsequent users.

Figure~\ref{fig:single_long} shows the per-round expected percentage shrink, defined by
\begin{equation}
    \Ex_{x_t,z_t \sim \Normal, \theta \sim \Unif[l_t, u_t]}\left[\frac{w_{t} - w_{t+1}}{w_t}\right] \times 100.
\end{equation}
The expectation is replaced with an empirical average of $\Nsingle$ trials. $u_t$ is fixed at $0.5$, and we vary the width of the confidence interval, $w_t \coloneqq u_t - l_t$, between $0.0$ and $1.5$. We plot the performances of the straightforward policy, the myopic policy, the exploration phase of the EvE policy, and the ternary policy.

As Theorem~\ref{thm_binary_shortopt} indicates, the straightforward policy and the myopic policy often generate the same threshold $\rho_t$, rendering their performances almost identical. As anticipated from Theorem~\ref{thm_binary}, 
these two policies perform the worst. While the straightforward policy can acquire some information when $w_t$ is large, the expected shrink diminishes rapidly as $w_t$ becomes small (Lemma~\ref{lem_update} implies that its rate is exponential to $- 1/w_t$). When $w_t = 0.75$, the straightforward policy can shrink the confidence interval only by 0.4\%, and when $w_t = 0.3$, the shrink becomes zero, i.e., literally no information was gained in the 10,000 trials.
(Even for large $w_t$, the EvE policy and the ternary policy substantially outperform the straightforward policy.)

By contrast, the ternary policy effectively shrinks the confidence interval for any $w_t$. 
When $w_t = 0.75$, the ternary policy shrinks the confidence interval by 7.6\%, and even when $w_t = 0.075$, the percentage shrink is 0.7\%.
Therefore, as predicted by theory, the expected percentage shrink is linear in $w_t$. That is, with probability $\Theta(w_t)$, the ternary policy sends $a_t = 0$ (Lemma~\ref{lem_probunknown}), and then, the confidence interval shrinks at a constant percentage (Lemma~\ref{lem_unknown_shrink}). 

The EvE policy's exploration phase is specialized for exploration, and therefore, its expected percentage shrink is much larger than the other policies, and it shrinks the confidence interval by 25\% for any value of $w_t$. The constant information gain is achieved because the policy sends a message ($a_t = -1$ if $x_t m_t > 0$ and $a_t = 1$ if $x_t m_t < 0$) for exploration with a constant probability, and the confidence interval halves every time such a message is sent.

\begin{remark}
In Appendix~\ref{sec: learning}, we analyze how the confidence intervals are updated. Under the straightforward policy, most information gain happens when users deviate from recommendations, while such updates are less frequent. By contrast, under the ternary policy, virtually all information gains happen when $a_t = 0$ is sent.
\end{remark}

\subsection{Exploitation}\label{subsec: sim short term}

\begin{figure}[t!]
    \centering
    \begin{minipage}[t]{0.48\textwidth}
         \centering
         \includegraphics[width=\textwidth]{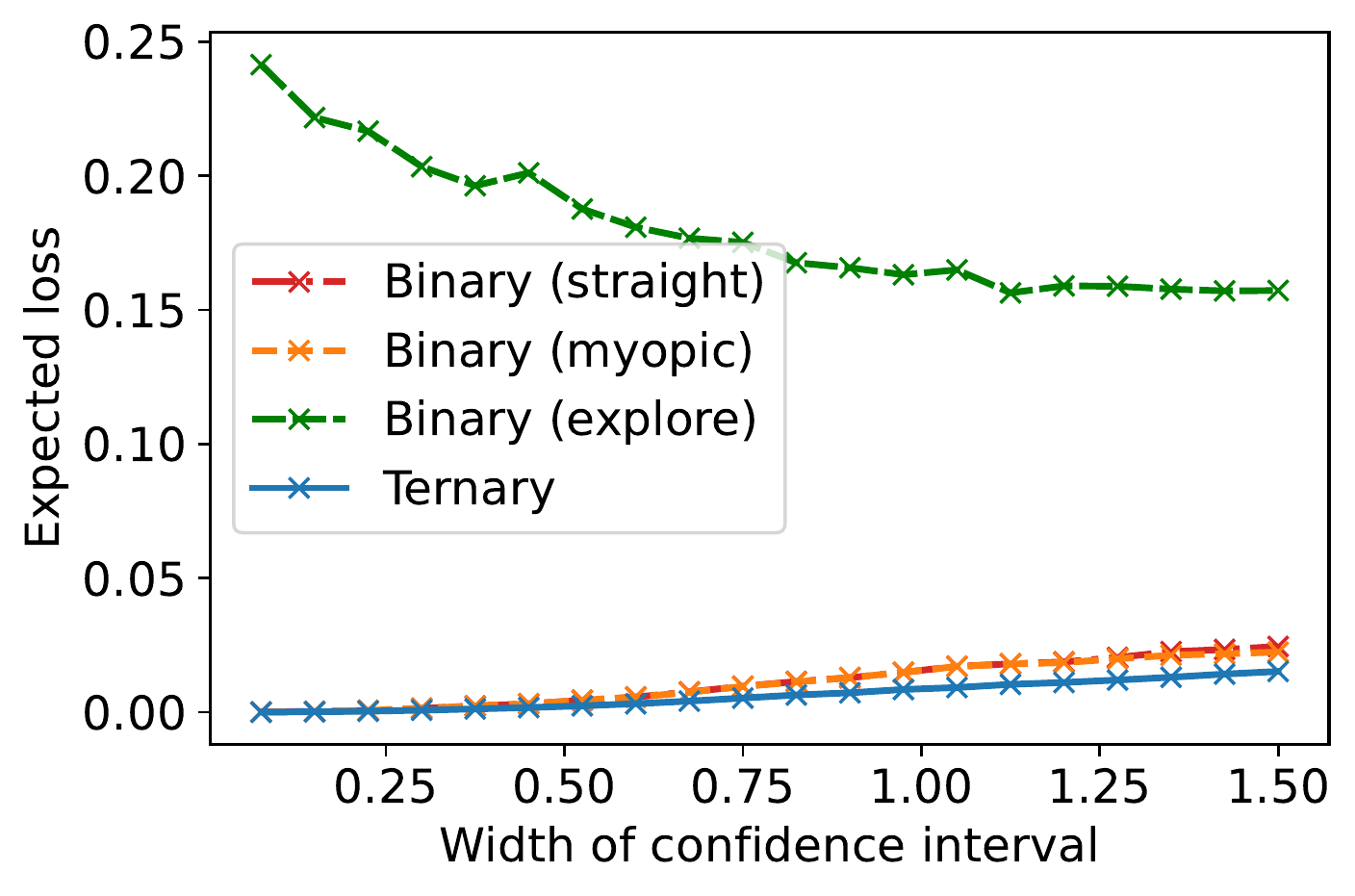}
        (a) Comparison of the four policies
    \end{minipage}
    \hspace{0.02\textwidth}
    \begin{minipage}[t]{0.48\textwidth}
         \centering
         \includegraphics[width=\textwidth]{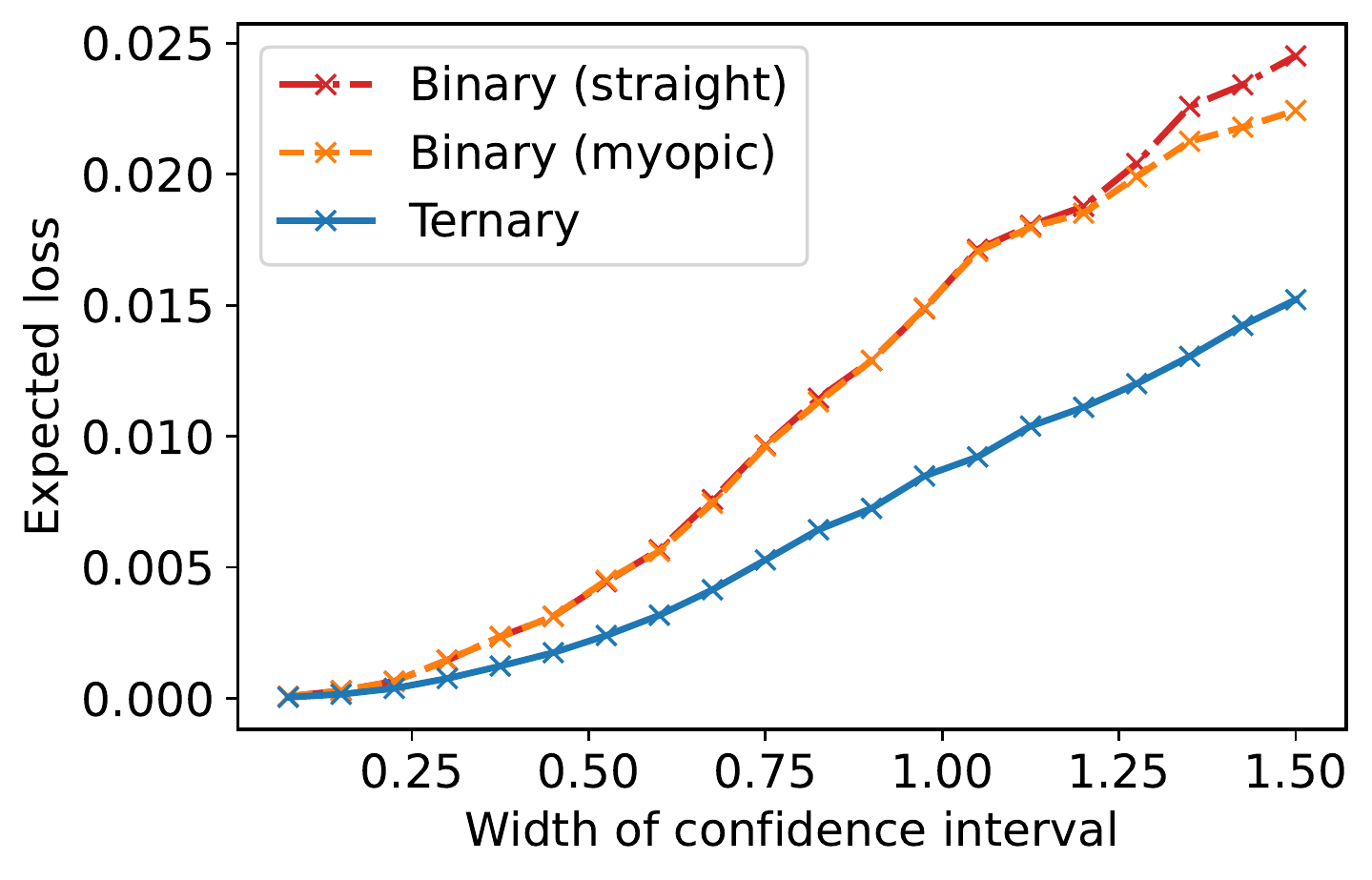}
        (b) Enlarged figure
    \end{minipage}
    \caption{The per-round expected regret $\Ex_{\theta \sim \Unif[l_t, u_t]}[\regret(t)]$ under the straightforward policy, the myopic policy the exploration phase of the EvE policy, and the ternary policy. $u_t$ is fixed at $0.5$, and we vary the width of the confidence interval, $w_t \coloneqq u_t - l_t$, between $0.0$ and $1.5$. Panel (b) is an enlarged view of Panel (a).}
    \label{fig:single_short}
\end{figure}

Next, we compare each policy's ability to exploit the recommender's current knowledge about $\theta$. This ability is measured simply using the per-round regret $\regret(t)$, which represents the size of the current user's payoff. Figure~\ref{fig:single_short} shows the per-round expected regret,
\begin{equation}
    \Ex_{x_t,z_t \sim \Normal, \theta \sim \Unif[l_t, u_t]}\left[\regret(t)\right],
\end{equation}
of the straightforward policy, the exploration phase of the EvE policy, and the ternary policy. As in Section~\ref{subsec: sim long term}, we fix $l_t = - 0.5$ and vary the value of $u_t$ between $-0.5$ and $0.5$. The vertical axis represents $w_t = u_t - l_t$.

Panel (a) shows that the exploration phase of the EvE policy incurs a substantial per-round regret. Because it is not designed to reduce the current user's regret, even when the confidence interval is small, the per-round regret does not diminish. Accordingly, this policy incurs a large regret even if it is only used for a short period of time.

Because the other three policies incur much smaller regret, we show an enlarged view of Panel (a) as Panel (b). The difference between the straightforward policy and the myopic policy is very small, while the myopic policy performs slightly better when the confidence interval is wide.

Panel (b) demonstrates that, under this simulation setting (c.f., $\Ceps = 1/4$), the per-round regret of the ternary policy is about $1/2$ to $2/3$ of the per-round regret of the myopic policy.
Considering that the (cumulative) regret of the ternary policy is $1/100$ of the regret of the myopic policy, the improvement in myopic payoffs is relatively small. This fact articulates that welfare gain from increasing the message space is gained mostly from the improvement in learning rate.

Nevertheless, from a fairness perspective, the improvement in myopic payoffs is very important. The myopic policy maximizes the current user's payoff for every round among all binary policies. Panel (b) demonstrates that, for an appropriate choice of the algorithm parameter, the ternary policy provides even better payoffs for any $w_t$. In this sense, the ternary policy sacrifices no user, in contrast to the EvE policy.

\section{Concluding Remarks}\label{sec: conclusion}

In this paper, we propose deviation-based learning, a novel approach to training recommender systems. In contrast to traditional rating-based learning, we do not assume observability of payoffs (or noisy signals of them). Instead, our approach extracts users' expert knowledge from the data about choices given recommendations. The deviation-based learning is effective when (i) payoffs are unobservable, (ii) many users are knowledgeable experts, and (iii) the recommender can easily identify the set of experts.

Our analysis reveals that the size of the message space is crucial for the efficiency of deviation-based learning. Using a stylized model with two arms, we demonstrated that a binary message space and straightforward policy result in a large welfare loss. After the recommender is trained to some extent, users start to follow her recommendations blindly, and users' decisions are uninformative in terms of advancing the recommender's learning. This effect significantly slows down learning, and the total regret grows almost linearly with the number of users. While we can improve the regret rate by developing more sophisticated binary policies (such as the myopic policy and the EvE policy), a much simpler and more effective solution is to increase the size of the message space. 
Employing a ternary message space allows the recommender to communicate that she predicts that two arms will produce similar payoffs.
User's choices after receiving such a message are extremely useful for the recommender's learning. Thus, making such messages available accelerates learning drastically. Under the ternary policy, total regret is bounded by a constant, and in round $10,000$, the ternary policy only incurs $1/100$ of the regret incurred by the straightforward policy.

While it is not explicitly modeled in this paper, the optimal policy choice should also depend on the magnitude of the communication cost. If the communication cost is extremely large, the recommender would abandon communication and choose $|A| = 1$, and if it is zero, the recommender would choose $|A| = \infty$ to achieve the first-best choices from the beginning. However, our analysis suggests that, for a wide range of ``moderately large'' communication costs, the ternary policy should be an (approximately) optimal choice, because it is extremely more efficient than the binary policies, while the communication cost is (nearly) minimal.

Our analysis of the binary policy suggests one further useful caveat: The recommender should not use the rate at which users follow recommendations as a key performance indicator. When the recommender has an information advantage, a user may follow a recommendation blindly even when the recommendation does not fully incorporate his own information and preference. 
This means that using this performance indicator may inadvertently engender a large welfare loss.

Although we believe that the insight obtained from our stylized model will be useful in general environments (given that the intuitions of our theorems do not rely on the assumptions made for the sake of simplicity), more comprehensive and exhaustive analyses are necessary for practical applications. In practice, observable contexts ($x_t$) are often multi-dimensional. Furthermore, users' payoffs are rarely linear in the parameter ($\theta$), and their functional forms may be unknown ex ante, requiring that the recommender adopt a nonparametric approach. Future studies could investigate deviation-based learning in more complex environments.

\bibliographystyle{ecta}
\bibliography{references.bib}

\pagebreak

\appendix

\part*{Appendix}

\section{Source of Learning}\label{sec: learning}

This section investigates how the width of the confidence interval, $w_t$, is updated. First, we focus on when and how frequently updates occur. We count the occurrence of belief updates, i.e., the number of rounds such that $w_{t+1} < w_t$. Among all rounds in which updates occur, (i) the set of rounds in which the user followed the recommendation is denoted by $\Follow$ (obedience), (ii) the set of rounds in which the user deviated from the recommendation is denoted by $\Deviate$ (deviation), and (iii) the set of rounds in which the recommender did not recommend a particular action is denoted by $\Fence$ (on the fence). Formally, we define
\begin{align}
\Follow(t) &\coloneqq  \{s \in [t]: w_{s+1} < w_s \text{ and } a_s = b_s\}; \\
\Deviate(t) &\coloneqq  \{s \in [t]: w_{s+1} < w_s, a_s\neq 0 \text{ and } a_s \neq b_s\};\\
\Fence(t) & \coloneqq  \{s \in [t]: w_{s+1} < w_s \text{ and } a_s = 0\}.
\end{align}
Note that $|\Fence| = 0$ for the case of the straightforward policy since $a_t = 0$ is never sent.

\begin{figure}[t!]
    \centering
    \begin{minipage}[t]{0.48\textwidth}
         \centering
         \includegraphics[width=\textwidth]{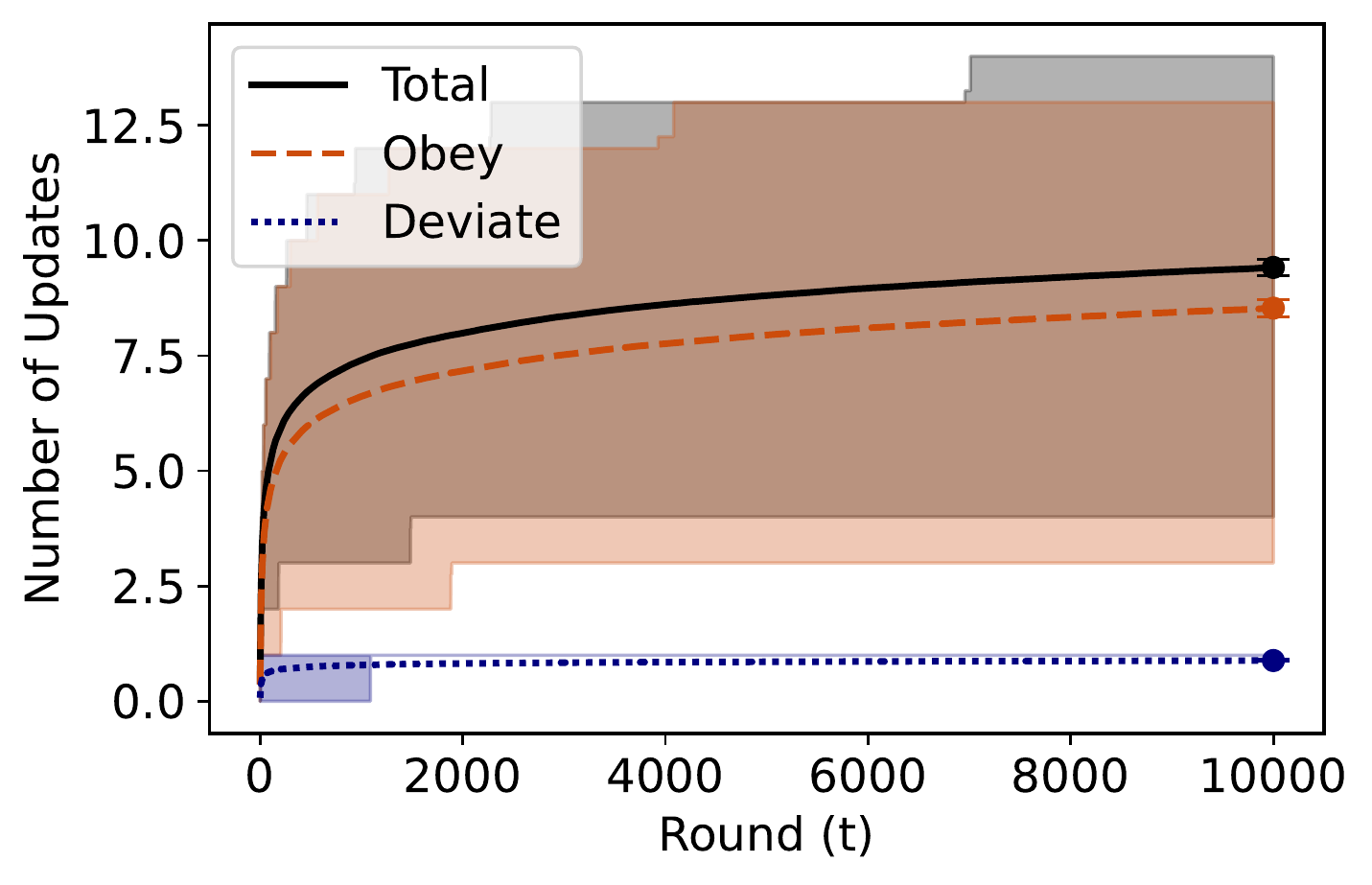}
        (a) Straightforward policy
    \end{minipage}
    \hspace{0.02\textwidth}
    \begin{minipage}[t]{0.48\textwidth}
         \centering
         \includegraphics[width=\textwidth]{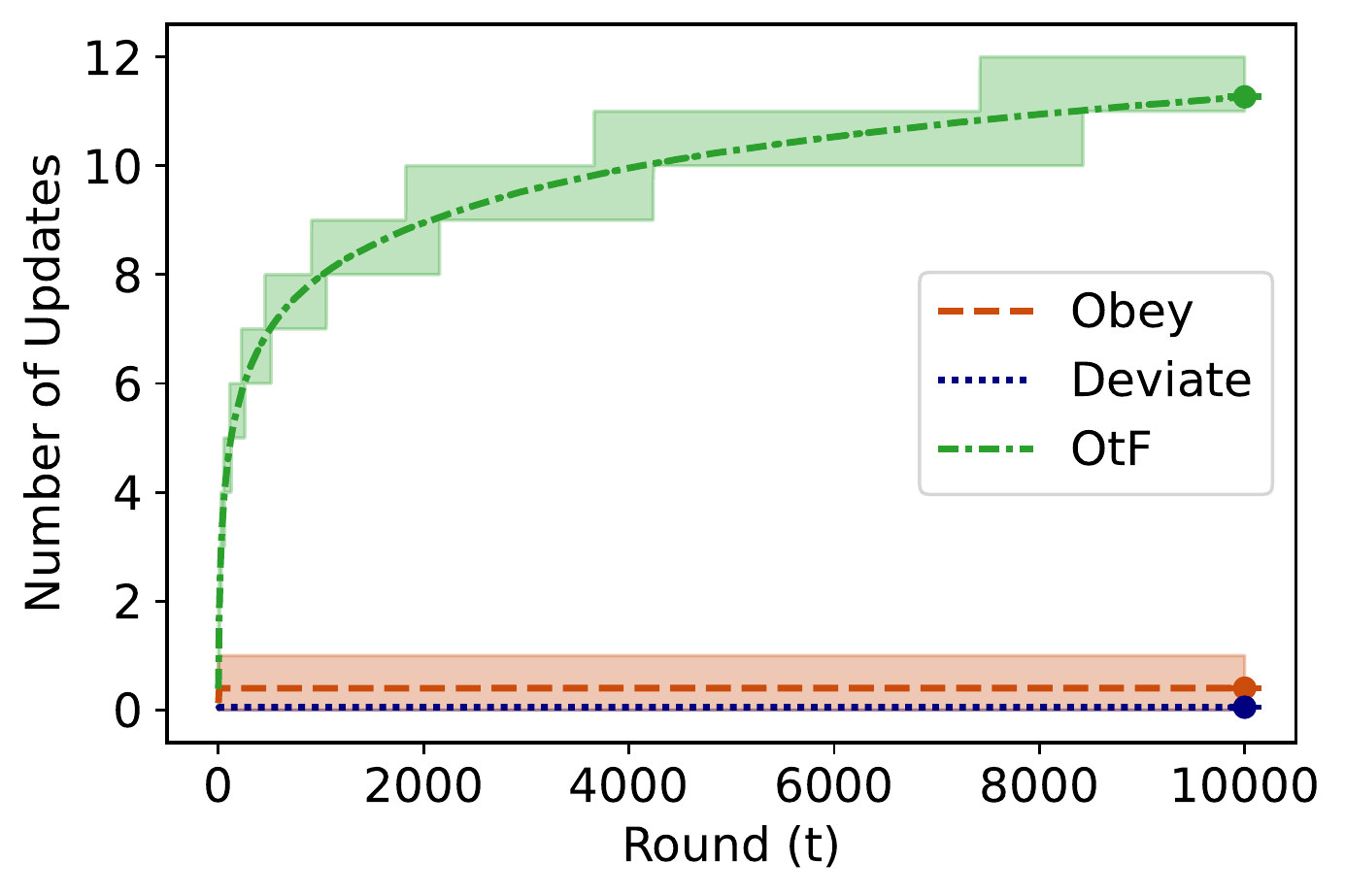}
        (b) Ternary policy
    \end{minipage}
    \caption{The growth of the number of updates. For Panel (b), we observe $|\Follow(T)| = |\Deviate(T)| \approx 0$; thus, the total number of updates until round $t$ is approximately equal to $|\Fence(t)|$.}
    \label{fig: update_num}
\end{figure}

Figure~\ref{fig: update_num} plots the number of updates, $|\Follow(t)|$, $|\Deviate(t)|$, $|\Fence(t)|$, and their total, $|\Follow(t)| + |\Deviate(t)| + |\Fence(t)|$. Panel (a) displays the results of the straightforward policy. The updates by obedience occur more often than the updates by deviation. Panel (b) shows the results of the ternary policy. Since the recommender recommends an arm only if she is confident about it, users follow the recommendation blindly whenever an arm is recommended; therefore, an update occurs only if the recommender confesses that she is on the fence. The opportunities for her learning are mostly concentrated at the beginning, but updates occur occasionally even in the later stages. On average, updates occur more frequently than in the case of the straightforward policy.

Next, we evaluate the total amount of information acquired from each recommendation. We measure the \emph{accuracy} of the estimation in round $t$ by 
\begin{equation}\label{ineq_accuracy}
\ACC(t) \coloneqq -\log \left(w_{t+1}/2\right).
\end{equation} 
The value $w_{t+1}$ is the width of the confidence interval after the round-$t$ update. Note that $w_1 = u_1 - l_1 = 1 - (-1) = 2$, and therefore, $\ACC(0)$ is normalized to zero.

We define the \emph{accuracy gain} from each recommendation as follows:
\begin{align}
\ACCFollow(t) &\coloneqq  -\sum_{s \in \Follow} \log(w_{s+1}/w_{s});\\
\ACCDeviate(t) &\coloneqq  -\sum_{s \in \Deviate} \log(w_{s+1}/w_{s});\\
\ACCFence(t) &\coloneqq -\sum_{s \in \Fence} \log(w_{s+1}/w_{s}).
\end{align}
Note that it is always the case that $\ACC(t) = \ACCFollow(t) + \ACCDeviate(t) + \ACCFence(t)$.

\begin{figure}[t!]
    \centering
    \begin{minipage}[t]{0.48\textwidth}
         \centering
         \includegraphics[width=\textwidth]{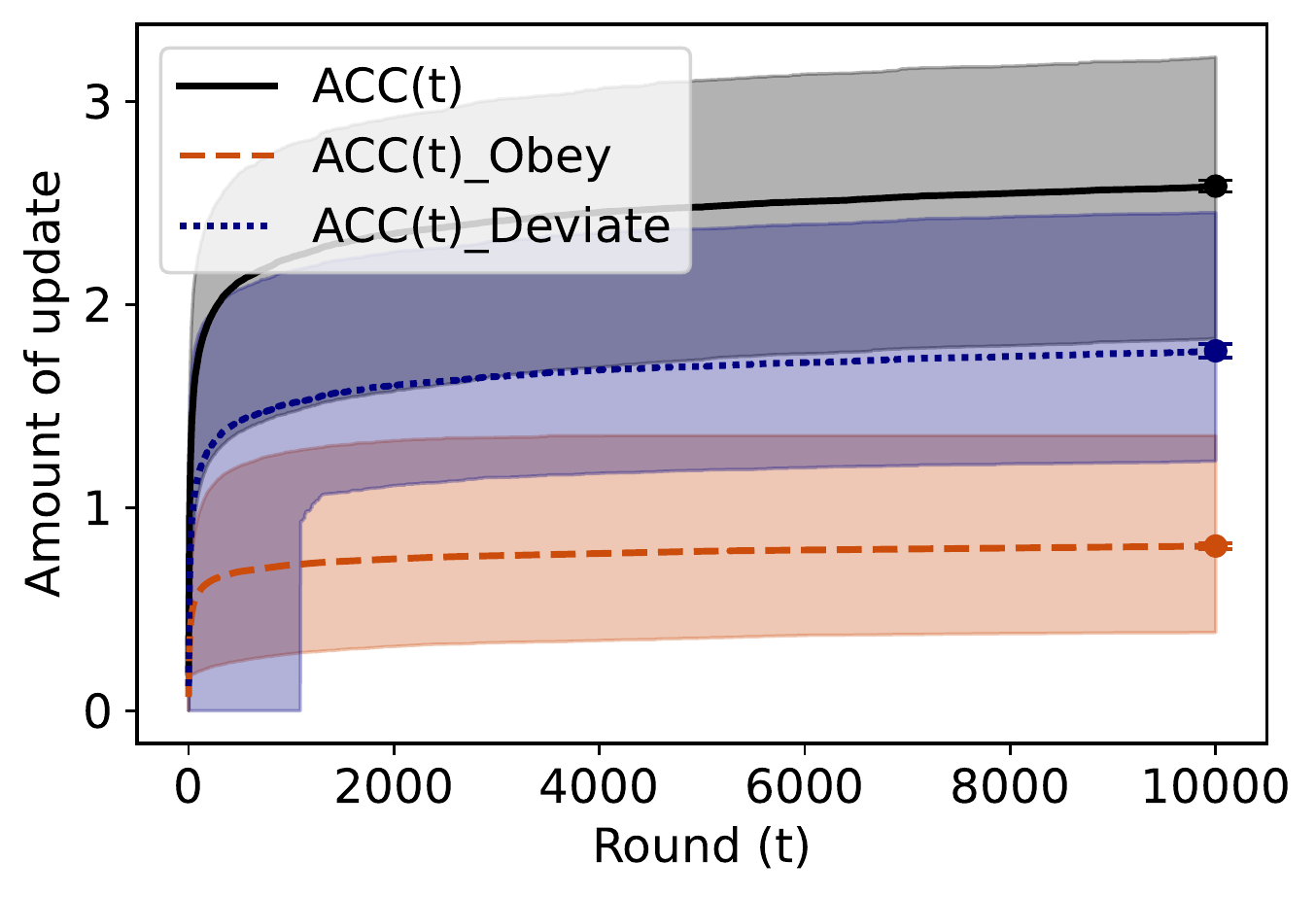}
        (a) Straightforward policy
         \label{fig:update2}
    \end{minipage}
    \hspace{0.02\textwidth}
    \begin{minipage}[t]{0.48\textwidth}
         \centering
         \includegraphics[width=\textwidth]{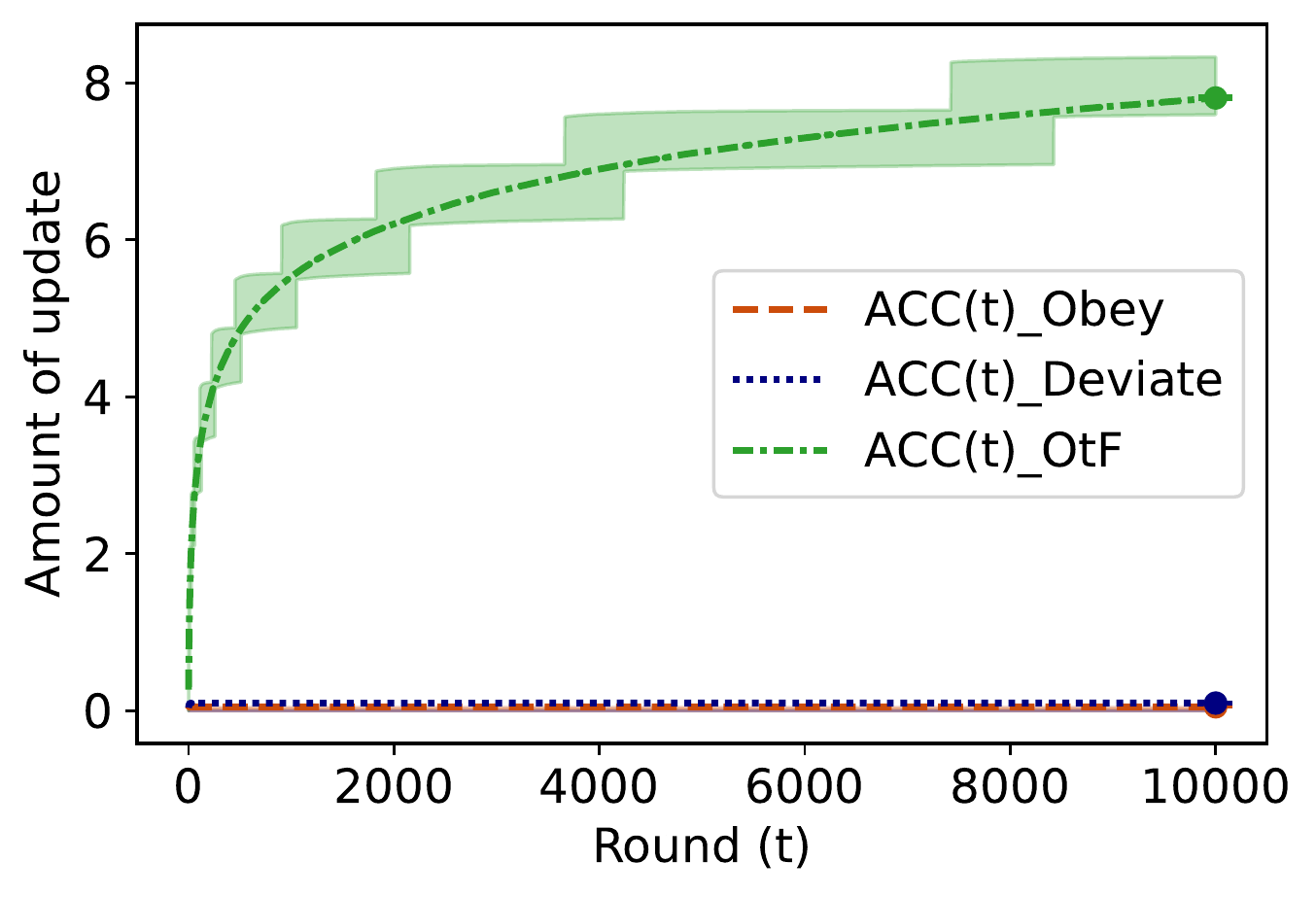}
        (b) Ternary policy
         \label{fig:update3}
    \end{minipage}
    \caption{The breakdown of accuracy gains. For Panel (b), almost all the accuracy gains are from $a_t = 0$; thus, $\ACC(t) \approx \ACCFence(t)$.}
    \label{fig: update}
\end{figure}

Figure~\ref{fig: update} shows the accuracy gain from each recommendation under the straightforward policy and ternary policy. As illustrated in Panel (a), under the ternary policy, learning from obedience occurs more frequently than learning from deviations. Nevertheless, Panel (a) shows that the recommender acquires more information from deviations than obedience. This is because once a deviation occurs, it is much more informative than obedience. 

The following theorem elucidates the informativeness of deviations under the straightforward policy.

\begin{thm}[Informativeness]
\label{lem_deviation}
Under the straightforward policy, if $a_t \ne b_t$, then 
\begin{equation}\label{ineq_wtsize_follow}
w_{t+1} < \frac{1}{2} w_t.
\end{equation}
Conversely, if $a_t = b_t$, then
\begin{equation}\label{ineq_wtsize_oppose}
w_{t+1} > \frac{1}{2} w_t.
\end{equation}
\end{thm}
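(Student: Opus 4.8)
The plan is to reduce both inequalities to a single geometric fact about where the update bisects the current confidence interval. Under the straightforward policy the belief update \eqref{ineq_update_left}--\eqref{ineq_update_right} always cuts $[l_t,u_t]$ at the single point $\theta^\ast \coloneqq -Z_t/x_t$ and keeps the subinterval on the same side of $\theta^\ast$ as the true state $\theta$. Hence $w_{t+1}$ is determined by (i) the location of the cut point $\theta^\ast$ relative to the midpoint $m_t=(l_t+u_t)/2$, and (ii) which of the two resulting pieces is retained. I would prove that deviation forces the retained piece to be the \emph{shorter} half (width $<w_t/2$) and obedience forces it to be the \emph{longer} half (width $>w_t/2$), by pinning down (i) and (ii) separately and then combining them with a short sign bookkeeping over $\sgn(x_t)$ and $a_t$.

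For step (i), the crux is a truncated-normal mean inequality. Since $a_t=1$ is sent exactly when $z_t>-x_t m_t$, the posterior $\trunkN(-x_t m_t,\infty)$ has mean $Z_t>-x_t m_t$, while $a_t=-1$ gives $Z_t<-x_t m_t$; equivalently $\sgn(x_t m_t+Z_t)=a_t$. Writing $\theta^\ast-m_t=-(x_t m_t+Z_t)/x_t$ then yields the clean identity
\begin{equation}
\sgn(\theta^\ast-m_t)=-a_t\,\sgn(x_t).
\end{equation}
Thus $\theta^\ast<m_t$ precisely when $a_t=\sgn(x_t)$, and $\theta^\ast>m_t$ precisely when $a_t=-\sgn(x_t)$.

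For step (ii), the user's rule $b_t=1\iff x_t\theta+Z_t>0$ rearranges to $b_t\,\sgn(x_t)=\sgn(\theta-\theta^\ast)$, so $b_t$ reveals on which side of $\theta^\ast$ the true $\theta$ lies, and the update keeps the piece of $[l_t,u_t]$ containing $\theta$. A one-line computation shows the retained width is below $w_t/2$ iff the discarded piece is the longer one, i.e. iff $\theta$ and $m_t$ lie on \emph{opposite} sides of $\theta^\ast$, and above $w_t/2$ iff they lie on the \emph{same} side. Combining $b_t\sgn(x_t)=\sgn(\theta-\theta^\ast)$ with the identity from step (i) gives that $\theta$ and $m_t$ lie on the same side of $\theta^\ast$ exactly when $b_t=a_t$; hence obedience retains the longer half ($w_{t+1}>w_t/2$) and deviation retains the shorter half ($w_{t+1}<w_t/2$), establishing \eqref{ineq_wtsize_oppose} and \eqref{ineq_wtsize_follow}.

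I expect step (i), the truncated-normal inequality locating $\theta^\ast$ on the correct side of $m_t$, to be the conceptual heart; everything after it is sign bookkeeping. Two routine points I would address to make the strict inequalities rigorous: the ``no-update'' event in which $\theta^\ast\notin(l_t,u_t)$, where one checks directly from the identity in step (i) that only obedience can occur (so $w_{t+1}=w_t>w_t/2$, consistent with \eqref{ineq_wtsize_oppose}); and the measure-zero ties $\theta=\theta^\ast$, $x_t=0$, and $x_t m_t+z_t=0$, which I would discard under the paper's standing convention of ignoring probability-zero coincidences of continuous variables.
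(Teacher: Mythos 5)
Your proposal is correct and is essentially the paper's own argument repackaged: the paper runs the same computation as four explicit cases over $(a_t,b_t)$, each resting on exactly your two facts --- that the cut point $-Z_t/x_t$ lies strictly on the $a_t$-determined side of $m_t$ (the truncated-normal mean inequality) and that $b_t$ determines which piece of $[l_t,u_t]$ survives. Your unified sign-identity bookkeeping, including the explicit check that the no-update event forces $a_t=b_t$, is a clean equivalent presentation rather than a different route.
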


When a user deviates from the recommendation (i.e., when $a_t \neq b_t$), the width of the recommender's confidence interval will be at least halved. Since the recommender has an information advantage about $z_t$, a deviation occurs only if the recommender significantly misestimates the static payoff component. Accordingly, upon observing a deviation, the recommender can update her belief about $\theta$ by a large amount. In contrast, when a user obeys the recommendation (i.e., when $a_t = b_t$), the decrease of $w_t$ is bounded. Note that, when users are obedient, it is frequently the case that no update occurs and so $w_{t+1} = w_t$. This is the case if the recommender's error $|x_t (\theta - m_t)|$ is small, and therefore, the user follows the recommendation blindly, given any $\theta \in [l_t, u_t]$.

Panel (b) of Figure~\ref{fig: update} reveals that, under the ternary recommendation, almost all the accuracy gains are obtained when the recommender signals are on the fence. For any stage, the learning rate is higher than that under the straightforward policy, and the difference is quantitatively large. In round $10,000$, the average accuracy under the ternary policy becomes larger than that under the straightforward policy by (roughly) six points, which implies that $w_{T}$ under the ternary policy is $e^6 \approx 403$ times smaller than $w_{T}$ under the straightforward policy.

\section{Proofs}\label{sec: proof}

\subsection{Proof of Theorem \ref{thm_binary}}

\begin{proof}[Proof of Theorem \ref{thm_binary}]

Let $C_1 = (1/2)\Cupdate$ and 
\begin{equation}
\EZ(t) := \left\{w_t \le \frac{C_1}{\log T}, |\theta - m_t| \ge \frac{C_1}{2 \log T} \right\}.
\end{equation}
In the following, we first show the following inequality. 
\begin{claim}
\begin{equation}
\Prob\left[\EZ(3)\right] \ge \frac{C_2}{\polylog(T)} \label{ineq_ezprob} 
\end{equation}
for some constant $C_2>0$.
\end{claim}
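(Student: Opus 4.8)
The plan is to exhibit an explicit positive-probability configuration of the first two rounds on which $\EZ(3)$ holds, and to lower-bound its probability directly. The geometry to aim for is this: after the round-$1$ and round-$2$ updates, the confidence interval $[l_3,u_3]$ should have shrunk to width $w_3=\Theta(1/\log T)$ (small enough that Lemma~\ref{lem_update} forces learning to stall, which is exactly why the threshold $C_1/\log T=\Cupdate/(2\log T)$ appears) while $\theta$ sits in the outer part of $[l_3,u_3]$, a distance $\Omega(1/\log T)$ from the midpoint $m_3$. Since a single update moves only one endpoint, reaching a narrow interval requires two cuts, one from each side, whose cut points nearly coincide. The crucial observation is that this can be arranged with \emph{moderate} (constant-order) contexts $x_1,x_2$, so the target event is \emph{not} exponentially rare.

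Concretely, I would condition on the sign/message pattern $x_1>0$, $z_1<0$ (hence $a_1=-1$ and $Z_1=\Ex[z\mid z<0]=-\zeta$, writing $\zeta\coloneqq\sqrt{2/\pi}$), $x_2>0$, and $z_2<-x_2 m_2$ (hence $a_2=-1$); each of these has constant probability. In round $1$, since $m_1=0$, the cut point is $p_1\coloneqq -Z_1/x_1=\zeta/x_1$, and conditioning on $\theta<p_1$ the choice $b_1=\sgn(\theta-p_1)=-1$ triggers the update $u_2=p_1$, $l_2=-1$, so $m_2=(p_1-1)/2$. In round $2$, with $a_2=-1$ the cut point is $p_2\coloneqq -Z_2/x_2=\phi(\kappa x_2)/\bigl(x_2\Phi(\kappa x_2)\bigr)$, where $\kappa=(1-p_1)/2$; conditioning on $\theta>p_2$ gives $b_2=1$ and the update $l_3=p_2$, $u_3=p_1$. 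Thus $[l_3,u_3]=[p_2,p_1]$ and $w_3=p_1-p_2$, provided $\theta\in(p_2,p_1)$ — which is simultaneously what makes both cuts land on the correct sides and what places $\theta$ inside the final interval.

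It then remains to choose the magnitudes. I would fix $x_1$ to a constant-length interval contained in $(\zeta,\infty)$, so that $p_1$ lies in a fixed subinterval of $(0,1)$, then restrict $x_2$ to the thin band around the (unique, moderate) solution $x_2^\ast$ of $p_2(x_2)=p_1$ for which $w_3=p_1-p_2(x_2)\in[\tfrac12 C_1/\log T,\;C_1/\log T]$, and finally place $\theta$ in the upper quarter of $(p_2,p_1)$, so that $w_3\le C_1/\log T$ while $|\theta-m_3|\ge w_3/4=\Omega(1/\log T)$; this is the configuration in $\EZ(3)$. Because $\theta$, $x_1$, $z_1$, $x_2$, $z_2$ are independent, the probability factorizes: the sign/message events contribute constants, the $x_2$-band and the $\theta$-range each contribute $\Theta(1/\log T)$ (the $\theta$-density being $1/2$), so $\Prob[\EZ(3)]=\Omega\bigl(1/(\log T)^2\bigr)\ge C_2/\polylog(T)$.

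The main obstacle is the measure estimate for the $x_2$-band. I must show that $p_2(\cdot)$ is strictly monotone with derivative bounded away from $0$ and $\infty$ on the relevant range, and that $x_2^\ast$ is of constant order (so it carries constant Gaussian density); then the preimage of a target interval of length $\Theta(1/\log T)$ again has measure $\Theta(1/\log T)$. Since $p_2(x_2)=\phi(\kappa x_2)/\bigl(x_2\Phi(\kappa x_2)\bigr)$ decreases from $+\infty$ to $0$ as $x_2$ runs over $(0,\infty)$ and passes through the range $[\tfrac12\zeta,\zeta]$ at moderate $x_2$, this reduces to elementary but careful monotonicity and derivative bounds on the truncated-normal conditional-mean function $c\mapsto\phi(c)/\Phi(c)$; everything else is routine bookkeeping of independent events.
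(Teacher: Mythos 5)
Your construction is correct and is essentially the paper's own argument: both proofs exhibit the same two-round event (an obedient round with $a_1=b_1=-1$ that pulls $u_2$ down to the cut point $-Z_1/x_1$, followed by a deviation $a_2=-1$, $b_2=1$ that lifts $l_3$ to $-Z_2/x_2$, with the two cut points within $\Theta(1/\log T)$ of each other and $\theta$ sitting off-center between them), and both bound its probability below by $\Omega(1/(\log T)^2)$ using the constant-order density of the relevant Gaussian bands. The only difference is bookkeeping: the paper charges one $\Theta(1/\log T)$ factor to each of the context bands for $x_1$ and $x_2$ while holding $\theta$ at a generic constant, whereas you charge them to the $x_2$-band and to the $\theta$-interval --- an immaterial reparametrization of the same event.
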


\begin{proof}

Let
\begin{align} 
\EA &:= \left\{ u_2 \le \theta + \frac{C_1}{6\log T} \right\},\\
\EB &:= \left\{ \theta - \frac{5C_1}{6\log T} \le l_3 \le \theta - \frac{2C_1}{3\log T} \right\}.
\end{align}
Note that $\EA \cap \EB \subseteq \EZ(3)$. 

\begin{figure}
	\centering
	\includegraphics[width = 0.6\textwidth]{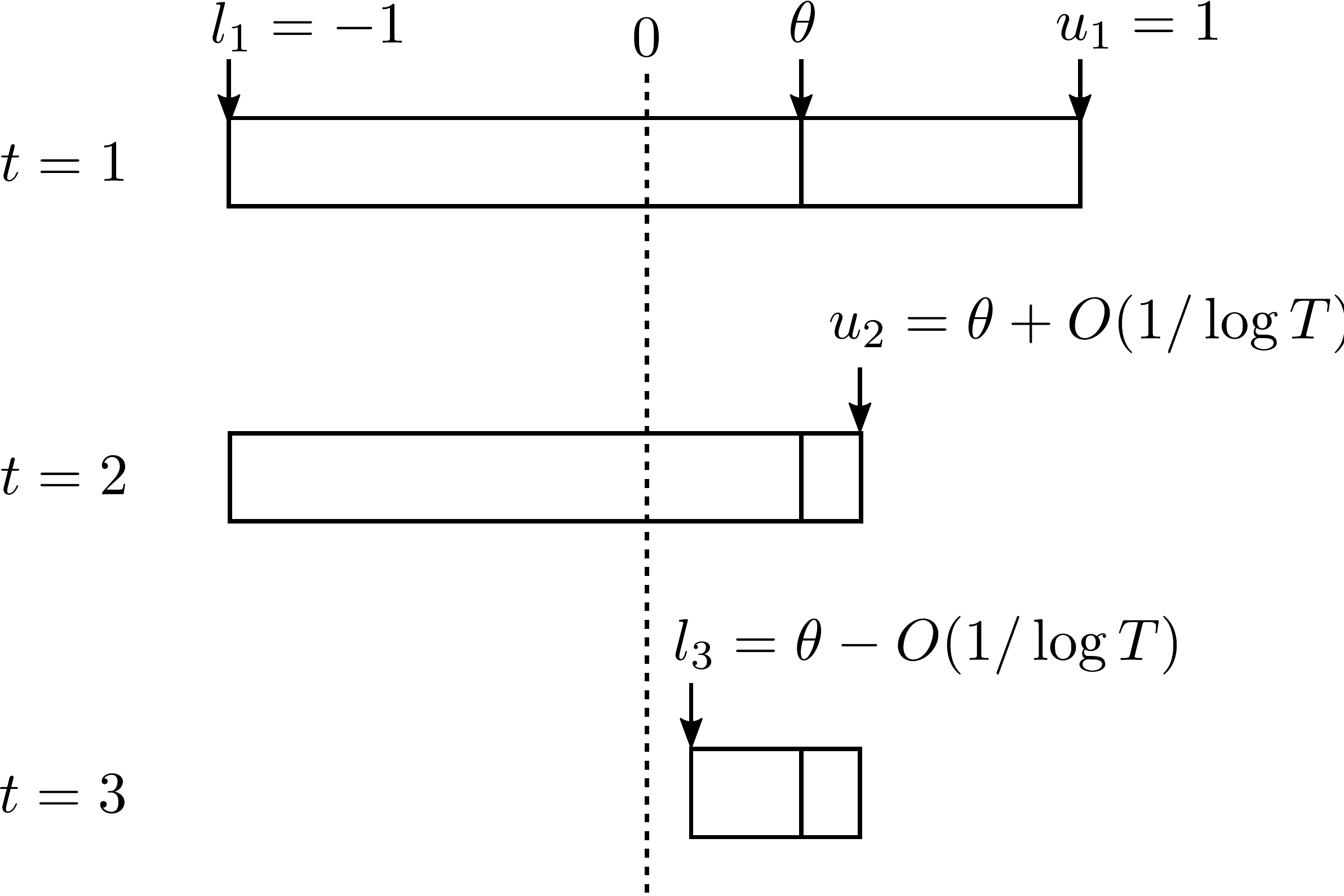}
	\caption{Illustration of $(l_t, u_t)_{t=1,2,3}$ in the instance of Theorem \ref{thm_binary}. Here, $w_3 = O(1/\log T)$ holds, which implies that $(l_t, u_t)$ has very small chance of being updated again.}
	\label{fig:ez_recom_reglower}
\end{figure}

In order to evaluate the probability that $\EZ(3)$ occurs, in the following, we evaluate the probability that $\EA$ and $\EB$ occur, assuming $\theta > 2C_1/(\log T)$ (which occurs with probability $\Theta(1)$ for sufficiently large $T$).

\begin{claim}
$\Prob[\EA] =  \Theta\left(1/(\log T)\right) $.
\end{claim}

\begin{proof}
Recall that $(l_1,u_1,m_1)=(-1,1,0)$. 
Let $\sigma_1 = \int_0^\infty 2 \phi(x) dx$, which is equal to $-\Ehatz{1}$ given $b_1 = -1$. 
Under $a_1 = -1$, $\Ehatz{1} = -\sigma_1$. 
Let
\begin{equation} 
\EA' := \{z_1 < 0\} \cap \left\{\frac{\sigma_1}{\theta + \frac{C_1}{6 \log T}} \le x_1 \le \frac{\sigma_1}{\theta } \right\}.
\end{equation}
In the following, we show that $\EA'$ implies $\EA$ and $\Prob[\EA'] = \Omega(1/\log T)$.
\begin{align}
\EA' 
&= \EA' \cap \left\{a_1 = -1\right\} 
\text{\ \ \ (by $x_1 m_1 + z_1 = z_1 < 0$)} \\
&= \EA' \cap \left\{a_1 = -1, b_1 = -1\right\} 
\text{\ \ \ (by $x_1 \theta + \Ehatz{1} = x_1 \theta - \sigma_1 < 0$)} \\ 
&= \EA' \cap \left\{a_1 = -1, b_1 = -1, u_2 \le \theta + \frac{C_1}{6\log T}\right\} 
\text{\ \ \ (by Eq.~\eqref{ineq_update_right})} \\ 
&\subseteq \EA. \label{ineq_aaprime}
\end{align}

Therefore,
\begin{align}
\Prob[\EA] 
&\ge \Prob[\EA'] \text{\ \ \ (by Eq.~\eqref{ineq_aaprime})}\\
&= \Prob[ z_1 < 0 ] \times
\Prob\left[ \frac{\sigma_1}{\theta + \frac{C_1}{6 \log T}} \le x_1 \le \frac{\sigma_1}{\theta} \right] \\
&= \frac{1}{2} \Prob\left[ \frac{\sigma_1}{\theta + \frac{C_1}{6 \log T}} \le x_1 \le \frac{\sigma_1}{\theta} \right] \\
&= \frac{1}{2} \int_{ 
\frac{\sigma_1}{\theta + \frac{C_1}{6 \log T}} 
}^{
\frac{\sigma_1}{\theta}
} \phi(x) dx \\
&\ge \frac{\sigma_1}{2\theta^2} \frac{C_1}{6\log T}  \times \frac{1}{\sqrt{2\pi}}\exp\left( - \frac{\sigma_1^2}{2\theta^2} \right) 
\text{\ \ \ (for $\theta \ge C_1/(6\log T)$)}
\\
&= \Theta\left(\frac{1}{\log T}\right) \text{\ \ \ (since $\sigma_1, C_1, \theta = \Theta(1)$)}.
\end{align}
\end{proof}

\begin{claim}
The probability $\Prob[\EB|\EA] = \Theta\left(1/(\log T)\right)$.
\end{claim}

\begin{proof}
Let
\begin{equation}
\EB' = \left\{ x_2 > 0 \right\} \cap \left\{ x_2 m_2 + z_2 < 0 \right\} \cap \left\{ \frac{-\Ehatz{2}}{\theta - \frac{2C_1}{3\log T} } \le x_2 \le  \frac{-\Ehatz{2}}{\theta - \frac{5C_1}{6\log T} } \right\}.
\end{equation}
We have
\begin{align}
\EB' 
&= \EB' \cap \left\{a_2 = -1\right\} 
\text{\ \ \ (by $x_2 m_2 + z_2 < 0$)} \\
&= \EB' \cap \left\{a_2 = -1, b_2 = 1\right\} 
\text{\ \ \ (by $x_2 \theta + \Ehatz{2} > 2C_1/(3 \log T) > 0$)} \\
&= \EB' \cap \left\{a_2 = -1, b_2 = 1, \theta - \frac{2C_1}{3\log T} \le l_3 \le \theta - \frac{5C_1}{6\log T}\right\} 
\text{\ \ \ (by Eq.~\eqref{ineq_update_left})} \\
&\subseteq \EB.
\end{align}
Furthermore, by using the fact that $-\Ehatz{2} \in (0, \sigma_1) = \Theta(1)$ and $\sigma_1 = \Theta(1)$, we have the following under $\{l_2 <0, x_2 m_2 < 0\}$:
\begin{align} 
\Prob[\EB|\EA] 
&\ge \Prob[\EB'|\EA] \\ 
&= \Prob\left[
\frac{-\Ehatz{2}}{\theta - \frac{2C_1}{3\log T} } \le x_2 \le  \frac{-\Ehatz{2}}{\theta - \frac{5C_1}{6\log T} },
x_2 m_2 + z_2 < 0
\right]\\ 
&\ge \Prob\left[
\frac{-\Ehatz{2}}{\theta - \frac{2C_1}{3\log T} } \le x_2 \le  \frac{-\Ehatz{2}}{\theta - \frac{5C_1}{6\log T} }
\right] \times \frac{1}{2} \\ 
&\text{\ \ \ \ \ (by $x_2 m_2 \le 0$)} \\
&\ge \frac{C_1}{6\log T} \frac{-\Ehatz{2}}{2\theta^2} \phi\left(\frac{-2\Ehatz{2}}{\theta}\right) \times \frac{1}{2} \\ 
&\text{\ \ \ \ \ (for $\theta \ge 2 \times \frac{5C_1}{6\log T}$)} \\
&= \Theta\left(\frac{1}{\log T}\right). \text{\ \ \ (since $\Ehatz{2}, C_1, \theta = \Theta(1)$)}
\end{align}
\end{proof}

Combining these claims, we have
\begin{equation}\label{eq: ez eval}
\Prob[\EZ(3)] \ge \Prob[\EA \cap \EB] = \Prob[\EA]\times \Prob[\EB|\EA] = \Omega\left(\frac{1}{\log T} \times \frac{1}{\log T} \right) = \Omega\left(\frac{1}{(\log T)^2}\right),
\end{equation}
as desired.
\end{proof}

Note that, by Lemma~\ref{lem_update}, $\EZ(3)$ implies $\EZ(4) \cap \EZ(5) \cap \dots \cap \EZ(T)$ with probability at least $1- 1/T$. It follows that $\Ex[\Regret(T)]$ is bounded as 
\begin{align}
\Ex[\Regret(T)] 
&\ge \Ex\left[\sum_{t=3}^T\regret(t)\Biggl|\EZ(3)\right] \Omega\left(\frac{1}{(\log T)^2}\right) \text{\ \ \ (by Eq.~\eqref{eq: ez eval})}\\
&\ge \left(1 - \frac{1}{T}\right)\Ex\left[\sum_{t=3}^T\regret(t)\Biggl|\bigcap_{t=3}^T \EZ(t)\right]
\Omega\left(\frac{1}{(\log T)^2}\right)\\
&\text{\ \ \ \ \ (by Lemma \ref{lem_update} and construction of $\EZ(3)$)}\\
&= \Theta(1) \times \Omega\left(\frac{T}{(\log T)^2}\right) \times 
\Omega\left(\frac{1}{(\log T)^2}\right)\\
&\text{\ \ \ \ \ (by Lemma \ref{lem_reglower_round}, $\EZ(t)$ implies $\Ex[\regret(t)] = \Omega(w_t^2) = \Omega(1/(\log T)^2)$)}\\
&=\Omega\left(\frac{T}{\polylog(T)}\right).
\end{align}

\end{proof} 

\subsection{Lemma \ref{lem_hatz_bound}}

We prove a lemma that is useful to prove Lemmas~\ref{lem_reglower_round} and \ref{lem_update}.

\begin{figure}
\vspace{6em}
	\centering
	\includegraphics[width = 0.8 \textwidth]{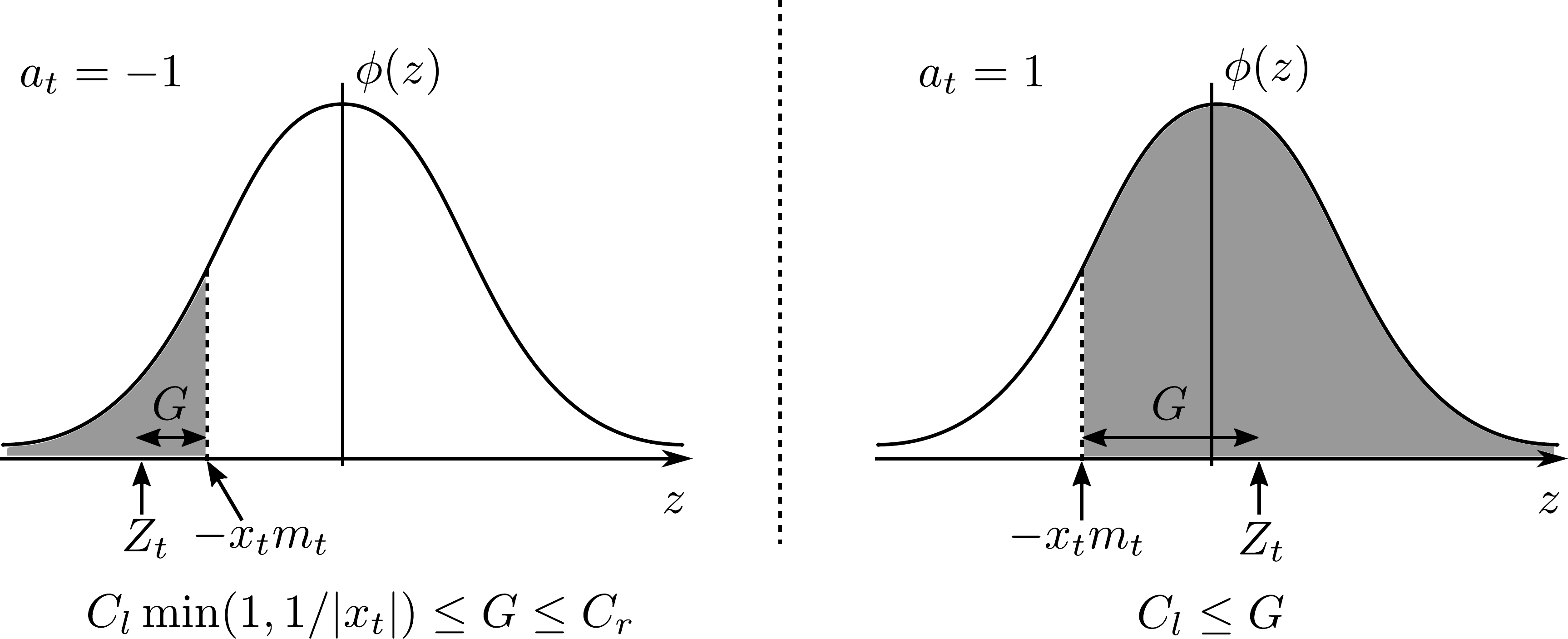}
	\caption{Illustration of Lemma \ref{lem_hatz_bound} when $x_t m_t > 0$. The lemma bounds $G := a_t( \Ehatzt + x_t m_t)$. The left figure corresponds to Eq.~\eqref{ineq_leftboth}, whereas the right figure corresponds to Eq.~\eqref{ineq_rightlower}. }
	\label{fig:ez_recom_gap}
\end{figure}

\begin{lem}[Gap between $Z_t$ and $-x_t m_t$: Binary Case]\label{lem_hatz_bound}
There exist universal constants $\Cezl, \Cezu > 0$ such that the following inequalities hold.
\begin{enumerate}
    \item If $\sgn(x_t m_t)a_t < 0$, then
    \begin{equation}\label{ineq_leftboth} 
    \Cezl\min\{1, 1/|x_t|\} < a_t( \Ehatzt + x_t m_t) < \Cezu.
    \end{equation}
    \item If $\sgn(x_t m_t)a_t > 0$, then
    \begin{equation} \label{ineq_rightlower}
    \Cezl < a_t(\Ehatzt+x_t m_t).
    \end{equation}
\end{enumerate}

\end{lem}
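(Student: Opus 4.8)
The plan is to reduce both inequalities to the study of a single one-dimensional function and then read off the bounds from its monotonicity and tail behavior. Write $\tau \coloneqq -x_t m_t$, so that under the straightforward policy $a_t = 1$ exactly when $z_t > \tau$ and $a_t = -1$ exactly when $z_t < \tau$. Using the truncated-normal means in Eq.~\eqref{ineq_ehatzt}, namely $\Ex_{z \sim \trunkN(-\infty,\tau)}[z] = -\phi(\tau)/\Phi(\tau)$ and $\Ex_{z \sim \trunkN(\tau,\infty)}[z] = \phi(\tau)/(1-\Phi(\tau))$, I would compute $\Ehatzt + x_t m_t = \Ehatzt - \tau$ in each case. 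Setting $h(s) \coloneqq \phi(s)/\Phi(s) + s$ and using the reflection identity $\phi(s)/(1-\Phi(s)) - s = h(-s)$ (which follows from $\phi(-s)=\phi(s)$ and $\Phi(-s)=1-\Phi(s)$), a short calculation collapses both cases into the single clean identity
\begin{equation}
a_t\big(\Ehatzt + x_t m_t\big) = h\big(a_t x_t m_t\big).
\end{equation}
Thus the whole lemma becomes a question about the values of $h$ on the negative half-line (Case 1, where $a_t x_t m_t < 0$) versus the positive half-line (Case 2, where $a_t x_t m_t > 0$).

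Next I would establish three properties of $h$. First, $h$ is strictly increasing: differentiating gives $h'(s) = 1 - s\,\phi(s)/\Phi(s) - \big(\phi(s)/\Phi(s)\big)^2$, which is exactly the variance of a standard normal truncated to $(-\infty, s)$ and hence strictly positive. Second, $h(0) = \sqrt{2/\pi} > 0$. Third, the tail estimate $|s|\,h(s) \to 1$ as $s \to -\infty$, obtained from the Mills-ratio expansion $\phi(s)/\Phi(s) = -s - 1/s + o(1/s)$. Combining these, the function $g(s) \coloneqq (1+|s|)\,h(s)$ is continuous and strictly positive on $(-\infty, 0]$ with $\lim_{s\to-\infty} g(s) = 1$, so $\inf_{s \le 0} g(s) =: c > 0$; equivalently $h(s) \ge c/(1+|s|)$ for all $s \le 0$.

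With these in hand the cases are immediate. In Case 1 the argument $s = a_t x_t m_t$ is negative, so monotonicity gives the upper bound $h(s) < h(0) = \sqrt{2/\pi} \eqqcolon \Cezu$, while the uniform lower bound together with $|s| = |x_t m_t| \le |x_t|$ (because $|m_t| \le 1$) yields $h(s) \ge c/(1+|s|) \ge c/(1+|x_t|) \ge (c/2)\min\{1, 1/|x_t|\}$, giving $\Cezl = c/2$. In Case 2 the argument is positive, so monotonicity gives $h(s) \ge h(0) = \sqrt{2/\pi}$ directly, which is the desired lower bound in Eq.~\eqref{ineq_rightlower}; the constants can be shrunk slightly to make every inequality strict.

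I expect the only delicate step to be the third property and the resulting uniform lower bound $h(s) \ge c/(1+|s|)$, since this is what produces the exact $\min\{1,1/|x_t|\}$ scaling in Eq.~\eqref{ineq_leftboth}: one must control $h$ simultaneously near $s = 0$ (where it is bounded away from zero) and in the far tail $s \to -\infty$ (where it decays like $1/|s|$), ruling out any intermediate vanishing via positivity and continuity. The monotonicity argument and Case 2 are routine, and the upper bound in Case 1 is an immediate consequence of $h$ being increasing.
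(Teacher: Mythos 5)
Your proof is correct, and it takes a genuinely different route from the paper's. The paper works directly with the defining integral of the truncated mean: for the lower bound in Case 1 it restricts the numerator integral to a subinterval of length $\min\{1,1/|x_tm_t|\}$ sitting one such length below the truncation point and uses the Gaussian ratio bound $e^{-(x+a)^2/2}/e^{-x^2/2}=e^{-xa-a^2/2}$ to get the explicit constant $e^{-3/2}$; for the upper bound and for Case 2 it uses separate translation/comparison tricks (e.g.\ $\phi(x+c)/\phi(x)\le\phi(c)$ for $x,c\le 0$, and halving the untruncated half-normal mean). You instead collapse all cases into the single identity $a_t(\Ehatzt+x_tm_t)=h(a_tx_tm_t)$ with $h(s)=\phi(s)/\Phi(s)+s$, which is verified correctly via the reflection identity, and then read everything off from three properties of $h$: strict monotonicity (via the truncated-normal variance identity $h'(s)=1-s\phi(s)/\Phi(s)-(\phi(s)/\Phi(s))^2>0$), the value $h(0)=\sqrt{2/\pi}$, and the Mills-ratio tail $|s|h(s)\to 1$. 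Your reduction of $c/(1+|x_t|)$ to $(c/2)\min\{1,1/|x_t|\}$ and the use of $|m_t|\le 1$ mirror the paper's final step. The trade-off: your argument is more systematic, handles both signs of $x_tm_t$ at once, and yields the sharp upper constant $\Cezu=\sqrt{2/\pi}$, but the lower constant $c=\inf_{s\le 0}(1+|s|)h(s)$ comes from a continuity-plus-limit (compactness) argument and is not explicit, whereas the paper's subinterval trick produces the concrete value $e^{-3/2}$. Since the lemma only asserts existence of universal constants, this is immaterial; just make sure to take $\Cezl=\min\{c/2,\sqrt{2/\pi}\}$ (shrunk slightly for strictness) so that a single constant serves both cases, as you note.
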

The term $\min\{1, 1/|x|\}$ in Eq.~\eqref{ineq_leftboth} is derived from the fact that $e^{-x^2/2}$ decays faster for a large $|x|$. It is analogous to the equation 
\begin{equation}
\frac{1}{(x+1)^2 - x^2} = \frac{1}{2x+1} \ge \frac{1}{3} \min\left\{1, \frac{1}{x}\right\}
\end{equation}
for $x>0$.

\begin{proof}[Proof of Lemma \ref{lem_hatz_bound}]

For ease of discussion, we assume $x_t m_t \ge 0$ (which aligns with Figure~\ref{fig:ez_recom}). The case of $x_t m_t < 0$ can be dealt with the same discussion.

Let $\phi(x) = \frac{1}{\sqrt{2 \pi}}e^{-x^2/2}$ be the pdf of the standard normal distribution and $\erf(x) = \frac{2}{\sqrt{\pi}} \int_0^x e^{-t^2}dt$ be the error function. Let $a_t = -1$ and $\Ctemp = \min\{1, {1}/{|x_t m_t|}\}$. Then,
\begin{align}
\Ehatzt + x_t m_t
&= \frac{\int_{-\infty}^{-x_t m_t} (z+x_t m_t) \phi(z) dz}{\int_{-\infty}^{-x_t m_t} \phi(z) dz} \\
&\le \frac{1}{\phi(-x_t m_t)} \int_{-\infty}^{-x_t m_t} (z+x_t m_t) \phi(z) dz \\
&\le \frac{1}{\phi(-x_t m_t)} \int_{-x_t m_t - 2\Ctemp}^{-x_t m_t - \Ctemp} (z+x_t m_t) \phi(z) dz \\
&\le \frac{-\Ctemp}{\phi(-x_t m_t)} \int_{-x_t m_t - 2\Ctemp}^{-x_t m_t - \Ctemp} \phi(z) dz \\
&\le \frac{-\Ctemp}{\phi(-x_t m_t)} \min_{z \in [-x_t m_t - 2\Ctemp, -x_t m_t - \Ctemp]} \phi(z) \\
&\le -\Ctemp \min_{z \in [-x_t m_t - 2\Ctemp, -x_t m_t - \Ctemp]} e^{-(3/2)} = -\Ctemp e^{-(3/2)}\\
&\text{\ \ \ \ \ (by $e^{-(x+a)^2/2}/e^{-x^2/2} = e^{-xa - a^2/2}$ and $|x_t m_t|\Ctemp \le 1$)}\\
& = -e^{-(3/2)} \min\{1, 1/|x_t m_t|\} \le -e^{-(3/2)} \min\{1, 1/|x_t|\},
\end{align}
which implies the first inequality\footnote{Note that $a_t = -1$ and the inequality here is flipped.} of Eq.~\eqref{ineq_leftboth}.

Moreover,
\begin{align}
\Ehatzt + x_t m_t
&= \Ex_{z \sim \trunkN_{-\infty, -x_t m_t}} [z] + x_t m_t \\
&= \frac{\int_{-\infty}^{-x_t m_t} (z+x_t m_t) \phi(z) dz}{\int_{-\infty}^{-x_t m_t} \phi(z) dz} \\
&\ge \frac{\int_{-\infty}^0 z \phi(z) dz}{\int_{-\infty}^{0} \phi(z) dz} \\
&\text{\ \ \ \ \ (by $\phi(x+c)/\phi(x) \le \phi(c)$ for any $x,c\le0$)}\\
&= - \frac{\int_{0}^\infty z \phi(z) dz}{\int_0^{\infty} \phi(z) dz} \\
&= -\sqrt{\frac{2}{\pi}},
\end{align}
which is a constant and implies the second inequality\footnote{Again, $a_t = -1$ and the inequality here is flipped.} of Eq.~\eqref{ineq_leftboth}.

If $a_t = 1$, then
\begin{align}
\Ehatzt + x_t m_t
&= \Ex_{z \in \trunkN(-x_t m_t, \infty)} [z] + x_t m_t\\
&\ge \frac{1}{2} \Ex_{z \in \trunkN(0, \infty)} [z] \\
&= \sqrt{\frac{1}{2\pi}},
\end{align}
which implies Eq.~\eqref{ineq_rightlower}.

\end{proof} 

\subsection{Proof of Lemma \ref{lem_reglower_round}}

\begin{proof}[Proof of Lemma \ref{lem_reglower_round}] 

\begin{figure}
\vspace{6em}
    \centering
    \includegraphics[width = 0.9 \textwidth]{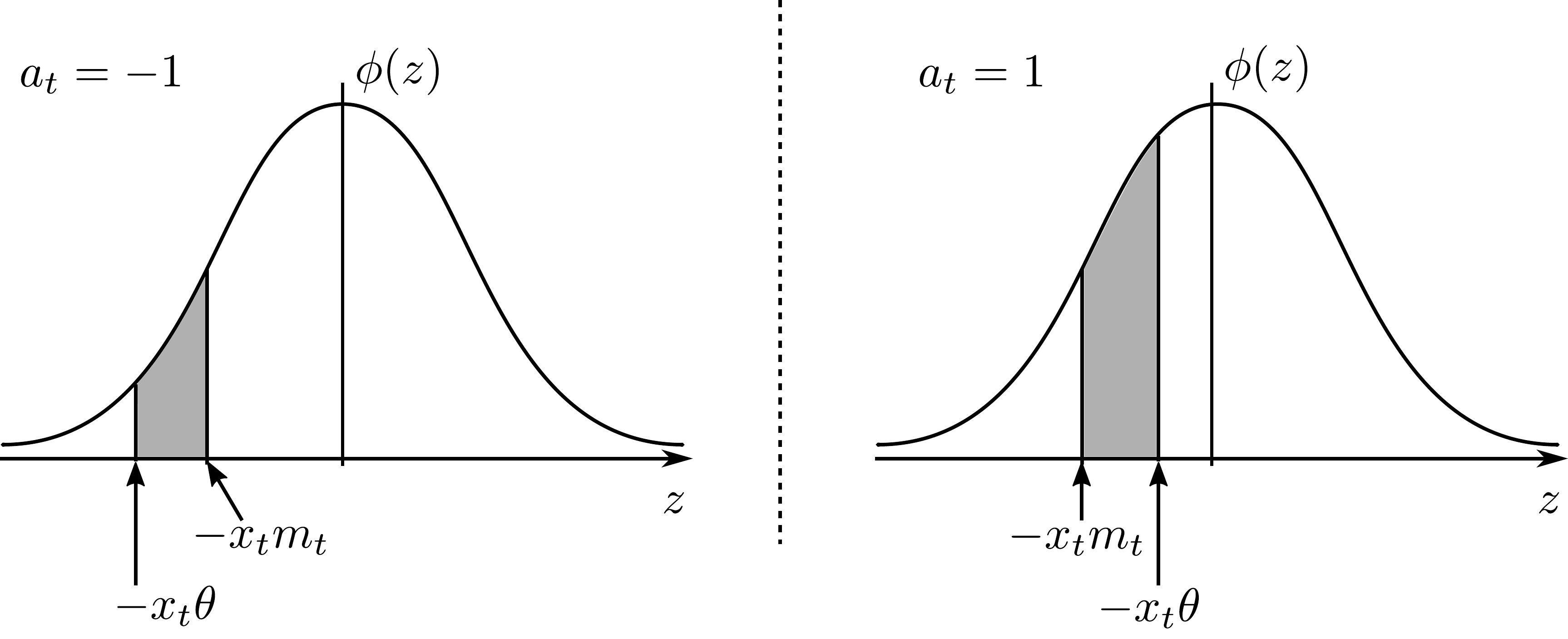}
    \caption{Recommendation $a_t$ is determined by the sign of $x_t m_t + z_t$, whereas the true superior arm is determined by $x_t \theta + z_t$. When $z_t \in [-x_t m_t, -x_t \theta]$, the recommender fails to recommend the superior arm.}
    \label{fig:ez_recom_lb}
\end{figure}

Without loss of generality, we assume $m_t \ge 0$. (Otherwise, by using the symmetry of the model, we may flip the sign of variables as $(l_t, u_t, \theta) = (-u_t, -l_t, -\theta)$ and apply the same analysis to obtain the same result.) For the ease of discussion, we assume $m_t - \theta > 0$. (In the case of $m_t - \theta < 0$, we can follow essentially the same discussion as the case of $m_t - \theta > 0$.)

For $t \in [T]$ and $l >0$, let 
\begin{equation}
\EC(t, l)
\coloneqq \{ - x_t m_t < z_t < -x_t \theta - l \} 
\cap 
\{ x_t (m_t - \theta) < \Cezl \}. 
\end{equation}

\begin{claim}\label{claim: reg lower bound A}
$\EC(t, l) \subseteq \{\reg(t) \ge l\}$.
\end{claim}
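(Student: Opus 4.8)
The plan is to show that on the event $\EC(t,l)$ the recommender suggests the inferior arm and yet the user obeys, which pins the per-round regret above $l$. Throughout I keep the running assumptions $m_t \ge 0$ and $m_t - \theta > 0$. First I would read off the recommendation: the left defining inequality $-x_t m_t < z_t$ is exactly $x_t m_t + z_t > 0$, so under the straightforward policy $a_t = 1$. Next I observe that on $\EC(t,l)$ the interval $(-x_t m_t,\, -x_t\theta - l)$ is nonempty (it contains $z_t$), which rearranges to $x_t(m_t - \theta) > l > 0$; together with $m_t - \theta > 0$ this forces $x_t > 0$, hence $x_t m_t > 0$ and $\sgn(x_t m_t)\,a_t > 0$.

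Then I would invoke Lemma~\ref{lem_hatz_bound}. Since $\sgn(x_t m_t)\,a_t > 0$ with $a_t = 1$, the second case, Eq.~\eqref{ineq_rightlower}, gives $Z_t + x_t m_t > \Cezl$, i.e.\ $Z_t > \Cezl - x_t m_t$. Substituting into the user's decision statistic yields $x_t\theta + Z_t > \Cezl - x_t(m_t - \theta)$, and the second defining condition of $\EC(t,l)$, namely $x_t(m_t-\theta) < \Cezl$, makes the right-hand side strictly positive. Therefore $x_t\theta + Z_t > 0$ and the user chooses $b_t = 1$, obeying the recommendation even though it is wrong. This is the crux of the claim: the recommender's information advantage about $z_t$ (quantified by $\Cezl$) dominates her static-payoff estimation error $x_t(m_t - \theta)$, so the user follows blindly.

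Finally I would compute the regret. The right defining inequality $z_t < -x_t\theta - l$ says $r_t(1) = x_t\theta + z_t < -l < 0 = r_t(-1)$, so the true superior arm is $\ist = -1$ with $r_t(\ist) = 0$. Since $b_t = 1$,
\[
\reg(t) = r_t(\ist) - r_t(b_t) = 0 - (x_t\theta + z_t) = -(x_t\theta + z_t) > l,
\]
placing the outcome in $\{\reg(t) \ge l\}$ and establishing the inclusion.

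I expect the only delicate point to be the sign bookkeeping: one must confirm that the event indeed lands in the regime $\sgn(x_t m_t)\,a_t > 0$, so that the correct branch (Eq.~\eqref{ineq_rightlower} rather than Eq.~\eqref{ineq_leftboth}) of Lemma~\ref{lem_hatz_bound} applies, and track the degenerate case $m_t = 0$, which has probability zero once the confidence interval has been updated and can be folded into the generic analysis. Everything else is a direct substitution.
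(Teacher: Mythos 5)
Your proposal is correct and follows essentially the same route as the paper: identify $a_t=1$ from the left inequality, use Eq.~\eqref{ineq_rightlower} of Lemma~\ref{lem_hatz_bound} together with $x_t(m_t-\theta)<\Cezl$ to conclude $x_t\theta+Z_t>0$ and hence $b_t=1$, and read off $\ist=-1$ and $\reg(t)>l$ from the right inequality. Your explicit check that the nonemptiness of the interval forces $x_t>0$ (so the correct branch of Lemma~\ref{lem_hatz_bound} applies) is a small extra piece of sign bookkeeping that the paper leaves implicit.
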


\begin{proof}
\begin{align}
\lefteqn{
\{ - x_t m_t < z_t < -x_t \theta - l \}
\cap 
\{ x_t (m_t - \theta) < \Cezl \}
}\\
&=
\{ 0 < x_t m_t + z_t \} \cap \{x_t \theta + z_t < -l\} 
\cap 
\{ x_t (m_t - \theta) < \Cezl \}
\\&\subseteq
\{ 0 < x_t m_t + z_t \} \cap \{x_t \theta + z_t < -l\} 
\cap 
\{ x_t \theta + \Ehatzt > 0\}\text{\ \ \ \ \ (by Eq.~\eqref{ineq_rightlower})} 
\\&=
\{ x_t \theta + z_t < -l\}
\cap \{ \ist = -1\} 
\cap \{ a_t = 1\} 
\cap \{ b_t = 1\}.
\end{align}
It follows from $b_t \neq \ist$ and $x_t \theta + z_t < -l$ that $\reg(t) > l$.
\end{proof}

\begin{claim}\label{claim: reg lower bound B}
$\Prob[\EC(t, l)]\ge \Theta (m_t - \theta - l)$.
\end{claim}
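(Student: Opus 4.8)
The plan is to compute $\Prob[\EC(t,l)]$ directly, exploiting that $x_t$ and $z_t$ are independent standard normals: I condition on $x_t$ and integrate out $z_t$. Write $d \coloneqq m_t - \theta > 0$ and note the claim is vacuous unless $d - l > 0$, so I restrict to $l \in [0, d]$. For a fixed $x_t$, the first event $\{-x_t m_t < z_t < -x_t\theta - l\}$ asks $z_t$ to land in an interval of length $x_t d - l$, which is nonempty exactly when $x_t > l/d$; the second event $\{x_t(m_t-\theta) < \Cezl\}$ is $x_t < \Cezl/d$. Hence, with $\phi, \Phi$ the standard normal pdf and cdf,
\begin{equation}
\Prob[\EC(t,l)] = \int_{l/d}^{\Cezl/d} \phi(x)\bigl[\Phi(-x\theta - l) - \Phi(-x m_t)\bigr]\,dx .
\end{equation}

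The crucial observation — the one preventing this integral from being exponentially small when $d$ is tiny — is that although the endpoints $l/d$ and $\Cezl/d$ blow up as $d \to 0$, the ratio $l/d$ stays in $[0,1]$ since $l \le d$. I would therefore throw away most of the range and retain only the sub-interval $I \coloneqq \bigl[(l+d)/(2d),\, l/d + 1\bigr]$, on which $x \in [1/2, 2]$. This $I$ is engineered so that three quantities are simultaneously controlled by constants: its length satisfies $|I| = l/(2d) + 1/2 \ge 1/2$; for $x \in I$ the interval length obeys $x d - l \ge (d-l)/2$; and, because $m_t \in [0,1]$, $\theta \in [-1,1]$, and $x \in [1/2,2]$, the endpoints $-x m_t$ and $-x\theta - l$ both lie in $[-2,2]$, so $\phi$ is bounded below by $\phi(2)$ on the whole $z_t$-interval.

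Combining these on $I$, I lower-bound $\Phi(-x\theta-l) - \Phi(-x m_t) \ge (x d - l)\,\phi(2) \ge \tfrac{d-l}{2}\phi(2)$ and $\phi(x) \ge \phi(2)$, yielding
\begin{equation}
\Prob[\EC(t,l)] \ge \int_I \phi(x)\bigl[\Phi(-x\theta-l)-\Phi(-xm_t)\bigr]\,dx \ge |I|\,\phi(2)\cdot\frac{d-l}{2}\,\phi(2) \ge \frac{\phi(2)^2}{4}\,(d-l),
\end{equation}
the desired $\Theta(d-l)$ bound with a universal constant. Together with Claim~\ref{claim: reg lower bound A}, integrating over $l \in [0,d]$ then gives $\int_0^d \Theta(d-l)\,dl = \Theta(d^2) = \Theta(|\theta-m_t|^2)$, which is exactly Lemma~\ref{lem_reglower_round}.

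The main obstacle is the bookkeeping in the middle step: one must choose $I$ so that $\phi(x)$, the interval length $x d - l$, and the minimum of $\phi$ over the $z_t$-interval are all controlled by constants at once — it is this balancing that converts a naively exponentially-small integral into a bound linear in $d-l$. A secondary point to check is that $I \subseteq (l/d, \Cezl/d)$, i.e.\ that the right endpoint $l/d + 1 \le 2$ respects $x < \Cezl/d$; this holds once $d \le \Cezl/2$, which is the relevant small-$w_t$ regime, and for the complementary range $d \in (\Cezl/2, 2]$ the target $d-l$ is itself $\Theta(1)$ so a cruder estimate suffices.
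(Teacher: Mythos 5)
Your core computation is correct and is essentially the paper's own argument carried out more carefully. The paper also reduces the claim to ``find a $\Theta(1)$-probability window of $x_t$ on which the constraint $x_t(m_t-\theta)<\Cezl$ holds and the $z_t$-interval $(-x_tm_t,-x_t\theta-l)$ has probability $\Omega(m_t-\theta-l)$''; it uses the fixed window $x_t\in[\Cezl/4,\Cezl/2]$ and then asserts the $z_t$-interval has probability $\Theta(m_t-\theta-l)$, even though the true interval length is $x_t(m_t-\theta)-l$, which can vanish (or be negative) on that window when $l$ is not small relative to $x_t(m_t-\theta)$. Your adaptive window $I=[(l+d)/(2d),\,l/d+1]$, with $d\coloneqq m_t-\theta$, fixes exactly this by forcing $x_td-l\ge(d-l)/2$ while keeping $x\in[1/2,2]$ and the $z_t$-interval inside $[-2,2]$; that part of the proposal is sound (given the surrounding normalization $m_t\ge 0$) and is, if anything, tighter than the paper's.

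The genuine gap is the last sentence. For $d\in(\Cezl/2,2]$ you claim ``$d-l$ is itself $\Theta(1)$,'' but $l$ still ranges over all of $[0,d)$, so $d-l$ can be arbitrarily small; no crude estimate is waiting there. Worse, in that regime the claim as literally stated is false and cannot be rescued: for $l\in(\Cezl,d)$ the two requirements $x_td>l$ (nonempty $z_t$-interval) and $x_td<\Cezl$ are incompatible, so $\Prob[\EC(t,l)]=0$ while $d-l>0$. (The paper's proof shares this blind spot.) The correct repair is to restrict the range of $l$ rather than weaken the estimate: prove the bound only for $0<l\le\min\{d,\Cezl\}/2$, handling $d>\Cezl/2$ by replacing $I$ with the right half of the admissible window $(l/d,\Cezl/d)$, on which $x_td-l\ge(\Cezl-l)/2=\Theta(1)\ge\Theta(d-l)$ and $x\in(0,2]$ so the same density bounds apply. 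This restricted claim is all that the downstream integration in Lemma~\ref{lem_reglower_round} needs, since $\int_0^{\min\{d,\Cezl\}/2}\Theta(d-l)\,dl=\Omega\bigl(d\min\{d,\Cezl\}\bigr)=\Omega(d^2)$ because $d\le 2$.
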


\begin{proof}
By using the fact that  $\Cezl$ is a universal constant and $1 \ge m_t > \theta \ge -1$, we have 
\begin{align}
\Prob[ x_t (m_t - \theta) < \Cezl ]
&\ge \Prob[ 2 x_t < \Cezl, x_t > 0] \\
&\ge \Prob[ \Cezl/2 < 2 x_t < \Cezl, x_t > 0] \\
&\ge \int_{\Cezl/4}^{\Cezl/2} \phi(x) dx  \\
&= \Theta(1).
\end{align}
Moreover, for any $\Cezl/4 \le x_t \le \Cezl/2$,
\begin{align}
\Prob[- x_t m_t < z_t < -x_t \theta - l]
&= \int_{- x_t m_t}^{-x_t \theta - l} \phi(z) dz\\
&= x_t (m_t - \theta - l) \min_{z \in \{- x_t m_t, -x_t \theta - l\}} \phi(z) \\
&= \Theta (m_t - \theta - l).
\end{align}
Therefore, $\Prob[\EC(t, l)] \ge \Theta(1) \times \Theta (m_t - \theta - l) = \Theta (m_t - \theta - l)$.
\end{proof}

Combining Claims~\ref{claim: reg lower bound A} and \ref{claim: reg lower bound B}, the regret is bounded as follows:
\begin{align}
\Ex[ \regret(t) ] 
&\ge \int_{0}^\infty \Prob[\regret(t) \ge l] dl \\
&\ge \int_{0}^\infty \Prob[\EC(t, l)] dl \text{\ \ \ (by Claim~\ref{claim: reg lower bound A})}\\
&\ge \int_{0}^{m_t - \theta} \Theta(m_t - \theta - l) dl \text{\ \ \ (by Claim~\ref{claim: reg lower bound B})}\\
&= \Omega((m_t - \theta)^2).
\end{align}
\end{proof} 

\subsection{Proof of Lemma \ref{lem_update}}

\begin{proof}[Proof of Lemma \ref{lem_update}]
Let
\begin{align}
\EU_1(t) &= \{b_t \sgn(x_t) < 0\} \cap \{b_t > -\Ehatzt/x_t\}, \\
\EU_2(t) &= \{b_t \sgn(x_t) > 0\} \cap \{a_t < -\Ehatzt/x_t\}, \\
\EU(t) &= \EU_1(t) \cup \EU_2(t).
\end{align}
By the update rule (Eq.~\eqref{ineq_update_left} and \eqref{ineq_update_right}), event $\EU(t)$ is equivalent to $(l_{t+1}, u_{t+1}) \ne (l_t, u_t)$.

Eq.~\eqref{ineq_leftboth} and \eqref{ineq_rightlower} in Lemma \ref{lem_hatz_bound} imply
\begin{equation}\label{lem_hatz_bound_reused}
|\Ehatzt - x_t m_t| \ge \Cezl \min\{1, 1/|x_t|\}.
\end{equation}
Accordingly,
\begin{align}
\EU(t) 
&= \EU_1(t) \cup \EU_2(t) \\
&\subseteq \{b_t > -\Ehatzt/x_t\} \cup \{a_t < -\Ehatzt/x_t\} \\
&\subseteq \{w_t/2 > \Cezl \min\{1/|x_t|, 1/|x_t|^2\}\} \\
&\text{\ \ \ (by $u_t - m_t = m_t - l_t = w_t/2$ and Eq.~\eqref{lem_hatz_bound_reused})}. 
\end{align}
For a sufficiently small $w_t$,\footnote{$w_t \le 2\Cezl$ is enough to assure $\{w_t/2 > \Cezl /|x_t|\} \subseteq \{w_t/2 > \Cezl /|x_t|^2\}$ because $\{w_t \le 2\Cezl, w_t/2 > \Cezl /|x_t|\}$ implies $|x_t| > 1$.}
\begin{align}
\Prob[\EU(t)] 
&\le \Prob\left[w_t > \frac{2\Cezl}{x_t^2}\right] \\
&= \Prob\left[x_t^2 > \frac{2\Cezl}{w_t}\right] \\
&= 2\Phi^c\left(\sqrt{\frac{2\Cezl}{w_t}}\right)\\
&\le 
\exp\left(-\frac{2\Cezl}{w_t}\right) \times 2\Phi^c\left(0\right)\\
&=\exp\left(-\frac{2\Cezl}{w_t}\right),
\end{align}
which completes the proof.
\end{proof} 

\subsection{Proof of Theorem \ref{thm_three}}
\begin{proof}[Proof of Theorem \ref{thm_three}]
Let 
\begin{equation}
\EE(t) = \left\{w_{t+1} \le \frac{5}{6} w_t\right\}.
\end{equation}
Lemmas \ref{lem_unknown_shrink} and \ref{lem_probunknown} imply that there exists a universal constant $\Cshrink > 0$ such that
\begin{equation}\label{ineq_eeprob}
\Prob[\EE(t)] \ge \Cshrink w_t.
\end{equation}
Lemma \ref{lem_regeps} states that  
\begin{equation}\label{ineq_regroundwt} 
\Ex[\regret(t)] \le \Cregttwo w_t^2.
\end{equation}

For $s = 1,2,\dots$, let
\begin{align}
\EP_s(t) &= \left\{\left(\frac{5}{6}\right)^s \le w_t \le \left(\frac{5}{6}\right)^{s-1}\right\},\\
\Reg_s(T) &= \sum_{t=1}^T \regret(t) \Ind\left\{\EP(t)\right\}.
\end{align}

Let $t_s$ be the first round in which $\EP_s(t)$ holds. Then, for each round $t=t_s+1,t_s+2,\dots$, we have the following:
\begin{enumerate}
    \item Eq.~\eqref{ineq_eeprob} implies that, with probability at least $\Cshrink (5/6)^{s - 1}$, $\EE(t)$ occurs. Furthermore, once $\EE(t)$ occurs, $\EP_s(t')$ never occurs again for round $t'>t$.
    \item Eq.~\eqref{ineq_regroundwt} implies that the expected regret per round is at most $\Cregttwo (5/6)^{2(s-1)}$.
\end{enumerate}
Accordingly, it follows that
\begin{align}
\Ex[\Reg_s(T)]
&\le \Cregttwo \left(\frac{5}{6}\right)^{2(s-1)}\sum_{u = 0}^\infty \left[1 - \Cshrink \left(\frac{5}{6}\right)^{s - 1}\right]^{u}\\
&= \frac{\Cregttwo}{\Cshrink}\left(\frac{5}{6}\right)^{s-1}.
\end{align}
The regret is bounded as
\begin{align}
\Ex[\Reg(T)]
&= \sum_{s=1}^\infty \Ex[\Reg_s(T)]\\
&\le \frac{\Cregttwo}{\Cshrink}\sum_{s=1}^\infty \left(\frac{5}{6}\right)^{s-1}\\
&= \frac{6\Cregttwo}{\Cshrink},
\end{align}
which is a constant.
\end{proof} 

\subsection{Proof of Lemma \ref{lem_unknown_shrink}}

\begin{proof}[Proof of Lemma \ref{lem_unknown_shrink}]
Let 
\begin{equation}
\EX(t) := \{x_t \ge 3\Ceps\}. 
\end{equation}

\begin{claim}\label{claim: Shrink A}
$\EX(t)$ and  $a_t=0$ implies $w_{t+1} \le (5/6)w_t$.
\end{claim}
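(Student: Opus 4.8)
The plan is to turn the observation ``$a_t=0$ pins $z_t$ down to within $\eps_t$ of $-x_t m_t$'' into a bound on how far the belief-update threshold $-Z_t/x_t$ can sit from the midpoint $m_t$, and then read off the width reduction directly from the update rule. First I would record that $\EX(t)$ means $x_t \ge 3\Ceps > 0$, so $\sgn(x_t)=1$. Next, since $a_t=0$ is sent exactly when $z_t \in (-x_t m_t - \eps_t,\, -x_t m_t + \eps_t)$, the posterior law $z_t \sim \trunkN(-x_t m_t - \eps_t,\, -x_t m_t + \eps_t)$ in Eq.~\eqref{ineq_ehatzt_ternary} is supported on that interval, and hence so is its mean $Z_t$. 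This gives the elementary but crucial estimate $|Z_t + x_t m_t| < \eps_t = \Ceps w_t$.

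I would then convert this into a statement about the location of the cut point used in the update. Dividing by $x_t>0$, the threshold appearing in Eqs.~\eqref{ineq_update_left}–\eqref{ineq_update_right} satisfies
\begin{equation}
\left| -\frac{Z_t}{x_t} - m_t \right| = \frac{|Z_t + x_t m_t|}{x_t} < \frac{\Ceps w_t}{3\Ceps} = \frac{w_t}{3}.
\end{equation}
Because $m_t=(l_t+u_t)/2$ is the midpoint, $m_t - l_t = u_t - m_t = w_t/2$, so the cut point $-Z_t/x_t$ lies in $(m_t - w_t/3,\, m_t + w_t/3)$, which is strictly inside $(l_t,u_t)$ (indeed inside $(l_t + w_t/6,\, u_t - w_t/6)$).

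Finally I would split on the user's choice $b_t$ from Eq.~\eqref{eq_user's decision}. If $b_t=1$, then $b_t\sgn(x_t)>0$, so $u_{t+1}=u_t$ and $l_{t+1}=\max\{l_t, -Z_t/x_t\} = -Z_t/x_t$, the maximum being attained at the cut point because it exceeds $l_t$; hence
\begin{equation}
w_{t+1} = u_t + \frac{Z_t}{x_t} < u_t - \left(m_t - \frac{w_t}{3}\right) = \frac{w_t}{2} + \frac{w_t}{3} = \frac{5}{6} w_t.
\end{equation}
The case $b_t=-1$ is symmetric: the update replaces $u_t$ by $-Z_t/x_t < m_t + w_t/3$, giving $w_{t+1} < (m_t + w_t/3) - l_t = 5w_t/6$. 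The one point that needs care — and where the hypothesis $x_t \ge 3\Ceps$ is essential — is confirming that the cut point lands strictly inside $(l_t,u_t)$, so that the $\max$/$\min$ in the update rule is attained at $-Z_t/x_t$ rather than at the old endpoint (which would leave the width unchanged). This is exactly what the bound $w_t/3 < w_t/2$ guarantees, and it is the only substantive step; the rest is the two-line case analysis above.
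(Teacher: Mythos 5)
Your proposal is correct and follows essentially the same route as the paper's proof: both use the facts that $a_t=0$ forces the posterior mean $Z_t$ to lie within $\eps_t=\Ceps w_t$ of $-x_tm_t$, that $x_t\ge 3\Ceps$ then places the cut point $-Z_t/x_t$ within $w_t/3$ of the midpoint $m_t$, and that the update rule therefore moves one endpoint to within $w_t/3$ of $m_t$, yielding $w_{t+1}\le w_t/2+w_t/3=(5/6)w_t$. The only cosmetic difference is that you spell out the case split on $b_t$, whereas the paper handles both cases at once via the disjunction $l_{t+1}=\max\{l_t,-Z_t/x_t\}$ or $u_{t+1}=\min\{u_t,-Z_t/x_t\}$ (and note that the $\max$/$\min$ bound already gives $l_{t+1}\ge -Z_t/x_t$ resp.\ $u_{t+1}\le -Z_t/x_t$ regardless of which argument attains it, so the point you flag as needing care is automatic).
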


\begin{proof}
Eq.~\eqref{ineq_update_left} and \eqref{ineq_update_right} imply that $l_{t+1} = \max\{l_t, -\Ehatzt/x_t\}$ or $u_{t+1} = \min\{u_t, -\Ehatzt/x_t\}$ always holds. 
By using this, we have
\begin{align}
&\{\EX(t), a_t=0\}\\
&\coloneqq \{x_t \ge 3\Ceps, a_t=0\}\\
&= \{x_t \ge 3\Ceps, |x_t m_t + \Ehatzt| \le \eps_t\}\\
&\subseteq \{|m_t + \Ehatzt/x_t| \le \eps_t/(3\Ceps)\}\\
&= \{|m_t + \Ehatzt/x_t| \le \eps_t/(3\Ceps)\} \cap
\{l_{t+1} = \max\{l_t, -\Ehatzt/x_t\} \cup u_{t+1} = \min\{u_t, -\Ehatzt/x_t\}\}\\
&\text{\ \ \ \ \ (by Eq.~\eqref{ineq_update_left} and \eqref{ineq_update_right})}\\
&\subseteq\{l_{t+1} \ge m_t - \eps_t/(3\Ceps)\,\cup\,u_{t+1} \le m_t + \eps_t/(3\Ceps) \} \\
&=\{l_{t+1} \ge m_t - (1/3)w_t\,\cup\,u_{t+1} \le m_t + (1/3)w_t \}.
\end{align}
Moreover, by $w_t/2 = u_t-m_t = m_t-l_t$, we have
\begin{equation}
\{l_{t+1} \ge m_t - (1/3)w_t\,\cup\,u_{t+1} \le m_t + (1/3)w_t \}
\subseteq \{w_{t+1} \le (5/6) w_t\}.
\end{equation}
\end{proof}

\begin{claim}\label{claim: Shrink B}
$\Prob[\EX(t),a_t=0] = \Theta(1)$.
\end{claim}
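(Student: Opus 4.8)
The plan is to compute the \emph{joint} probability $\Prob[\EX(t),a_t=0]$ directly, rather than the conditional probability $\Prob[\EX(t)\mid a_t=0]$. Throughout I treat the history up to round $t$ as given, so that $m_t$ is a fixed number with $|m_t|\le 1$, while $x_t,z_t$ are fresh independent standard normals. Since $a_t=0$ is exactly the event $z_t\in(-x_tm_t-\eps_t,\,-x_tm_t+\eps_t)$ and $\EX(t)=\{x_t\ge 3\Ceps\}$, conditioning on $x_t$ gives
\begin{equation}
\Prob[\EX(t),a_t=0]=\int_{3\Ceps}^{\infty}\bigl[\Phi(-xm_t+\eps_t)-\Phi(-xm_t-\eps_t)\bigr]\phi(x)\,dx .
\end{equation}
The bracketed term is the standard-normal mass of an interval of width $2\eps_t=2\Ceps w_t$, so the whole expression is linear in $\eps_t$; the correct order is therefore $\Theta(\eps_t)=\Theta(w_t)$ (the quantity labeled $\Theta(1)$ in the statement should read $\Theta(w_t)$, which is exactly what the proof of Lemma~\ref{lem_unknown_shrink} consumes: dividing it by $\Prob[a_t=0]=\Theta(w_t)$ from Lemma~\ref{lem_probunknown} turns it into the constant conditional bound $\Cgeomup$). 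The upper bound then requires no new work, since monotonicity and Lemma~\ref{lem_probunknown} give $\Prob[\EX(t),a_t=0]\le\Prob[a_t=0]\le\CfenceU\eps_t=O(w_t)$.

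For the lower bound I would restrict the outer integral to the fixed window $x\in[3\Ceps,\,3\Ceps+1]$, whose Gaussian mass $p_0:=\int_{3\Ceps}^{3\Ceps+1}\phi(x)\,dx$ is a positive universal constant. On this window, $|m_t|\le 1$ forces the center $-xm_t$ into the compact set $[-(3\Ceps+1),\,3\Ceps+1]$, so for $\eps_t\le 1$ the density $\phi$ is bounded below by $c_1:=\phi(3\Ceps+2)>0$ everywhere on the integration interval. Hence the bracketed mass is at least $2c_1\eps_t$ for each such $x$, and integrating yields $\Prob[\EX(t),a_t=0]\ge 2c_1 p_0\,\eps_t=\Omega(w_t)$. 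Combining the two estimates gives $\Prob[\EX(t),a_t=0]=\Theta(\eps_t)=\Theta(w_t)$.

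The main (mild) obstacle is precisely this lower bound. Because $a_t=0$ is a narrow $z_t$-window whose total mass vanishes as $w_t\to0$, a crude estimate would lose the correct $\eps_t$-scaling; to retain it I must confine $x_t$ to a bounded range so that the window's center $-x_tm_t$ stays in a fixed compact set on which the Gaussian density is bounded away from zero. This is where $|m_t|\le 1$ and the truncation $x_t\le 3\Ceps+1$ enter, and where the universality of the constants (independent of $m_t$, hence of the history) comes from. Everything else reduces to the one-line application of Lemma~\ref{lem_probunknown} for the upper bound.
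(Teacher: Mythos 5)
Your proof is correct and takes essentially the same route as the paper's: both write $\Prob[\EX(t),a_t=0]$ as the double integral $\int_{3\Ceps}^{\infty}\int_{-x m_t-\eps_t}^{-x m_t+\eps_t}\phi(z)\phi(x)\,dz\,dx$ and lower-bound the $z$-window's mass by a constant multiple of $\eps_t$, the only (immaterial) difference being that you truncate to the compact window $x\in[3\Ceps,3\Ceps+1]$ where $\phi$ is bounded below, while the paper keeps the full tail and integrates the decaying bound $\phi(-x-1)\phi(x)$. You are also right that the stated ``$\Theta(1)$'' is a misprint: the paper's own proof ends with the bound $\Cgeomuptmp\,\eps_t$, i.e.\ $\Theta(\eps_t)=\Theta(w_t)$, which is exactly what the concluding step of Lemma~\ref{lem_unknown_shrink} consumes after dividing by $\Prob[a_t=0]=\Theta(\eps_t)$ from Lemma~\ref{lem_probunknown}.
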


\begin{proof}
\begin{align}
\Prob[\EX(t),a_t=0] &= 
\int_{3\Ceps}^\infty \int_{-x_t m_t-\eps_t}^{-x_t m_t+\eps_t} \phi(z) \phi(x) dz dx\\
&\ge \eps_t \int_{3\Ceps}^\infty \phi(-x_t-\eps_t) \phi(x) dx \\
&\ge \eps_t \int_{3\Ceps}^\infty \phi(-x_t-1) \phi(x) dx \\
&\ge \frac{\eps_t}{2 \pi} \int_{3\Ceps}^\infty e^{-(x+1)^2} dx =: \Cgeomuptmp \ \eps_t.
\end{align}
\end{proof}

Combining Claims~\ref{claim: Shrink A} and \ref{claim: Shrink B}, we have
\begin{align}
&\Prob\left[\left.w_{t+1} \le \frac{5}{6} w_t\right|a_t=0\right]\\
&=\frac{\Prob\left[w_{t+1} \le \dfrac{5}{6} w_t,a_t=0\right]}{\Prob[a_t=0]}\\
&\ge\frac{\Prob\left[w_{t+1} \le \dfrac{5}{6} w_t,a_t=0\right]}{\CfenceU \eps_t} \text{\ \ \ \ (by Lemma \ref{lem_probunknown})}\\
&\ge \frac{\Prob[\EX(t),a_t=0]}{\CfenceU \eps_t} \text{\ \ \ (by Claim~\ref{claim: Shrink A})}\\
&\ge \frac{\Cgeomuptmp}{\CfenceU \eps_t} \text{\ \ \ (by Claim~\ref{claim: Shrink B})}\\
&\eqqcolon \Cgeomup.
\end{align}

\end{proof} 

\subsection{Proof of Lemma \ref{lem_probunknown}}

\begin{proof}[Proof of Lemma \ref{lem_probunknown}]
We have
\begin{align}
\Prob[a_t = 0] 
&= \Prob\left[ |x_t m_t + z_t| \le \eps_t \right] \\
&= \Prob\left[
\int_{-\eps_t/(1+m_t^2)}^{\eps_t/(1+m_t^2)} \phi(x) dx
\right]
\text{\ \ \ (by $x_t m_t + z_t \sim \Normal(0, 1+m_t^2)$ given $m_t$)} \\
&= \Theta(\eps_t) 
\text{\ \ \ (by $1 \le (1+m_t^2) \le 2$ and $\phi(x) \le 1$),}
\end{align}
which completes the proof.
\end{proof} 

\subsection{Proof of Lemma \ref{lem_regeps}}

We first introduce the following lemmas.
\begin{lem}[Gap between $Z_t$ and $-x_t m_t$: Ternary Case]\label{lem_hatz_bound_ternary}
There exist universal constants $\Cezl > 0$ such that the following inequalities hold.
\begin{enumerate}
    \item If $\sgn(x_t m_t)a_t < 0$, then
    \begin{equation}\label{ineq_leftlower_ternary} 
    \Cezl\min\{1, 1/|x_t|\} < a_t( \Ehatzt + x_t m_t).
    \end{equation}
    \item If $\sgn(x_t m_t)a_t > 0$, then 
    \begin{equation} \label{ineq_rightlower_ternary}
    \Cezl < a_t(\Ehatzt+x_t m_t).
    \end{equation}
\end{enumerate}
\end{lem}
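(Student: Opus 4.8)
The plan is to reduce the ternary statement to the already-proved binary bound (Lemma~\ref{lem_hatz_bound}) via a monotonicity comparison, exploiting the fact that passing from the straightforward policy to the ternary policy only shifts the truncation boundaries of the posterior outward by $\pm\eps_t$. As in the binary proof I would first assume without loss of generality that $x_t m_t > 0$; the case $x_t m_t < 0$ follows from the reflection $z \mapsto -z$, $a_t \mapsto -a_t$, and $x_t m_t = 0$ occurs with probability zero. Under this normalization the two cases to treat are $a_t = -1$ (which is exactly $\sgn(x_t m_t)a_t < 0$) with posterior $\trunkN(-\infty, -x_t m_t - \eps_t)$, and $a_t = 1$ (which is $\sgn(x_t m_t)a_t > 0$) with posterior $\trunkN(-x_t m_t + \eps_t, \infty)$. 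In both, the threshold is the binary threshold $-x_t m_t$ displaced by $\eps_t$ away from the origin.

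The key tool is the elementary monotonicity of the truncated-normal conditional means: both $c \mapsto \Ex[z \mid z < c] = -\phi(c)/\Phi(c)$ and $c \mapsto \Ex[z \mid z > c] = \phi(c)/(1-\Phi(c))$ are strictly increasing. I would establish these from the fact that the normal hazard rate $\phi/(1-\Phi)$ is increasing (together with the symmetry $\Ex[z\mid z<c] = -\Ex[z\mid z > -c]$), or, more cheaply, by noting that enlarging a lower-truncation support adds mass only to the right and shrinking an upper-truncation support removes mass only on the left, each of which raises the mean. Comparing the ternary boundaries to the binary ones at $-x_t m_t$ then gives at once $\Ehatzt \le \Ex_{z \sim \trunkN(-\infty,-x_t m_t)}[z]$ when $a_t=-1$ (boundary moved left), and $\Ehatzt \ge \Ex_{z \sim \trunkN(-x_t m_t,\infty)}[z]$ when $a_t=1$ (boundary moved right).

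With these comparisons the two displays follow directly. For $a_t = 1$ we have $a_t(\Ehatzt + x_t m_t) = \Ehatzt + x_t m_t \ge \Ex_{z \sim \trunkN(-x_t m_t,\infty)}[z] + x_t m_t$, and the right-hand side exceeds $\Cezl$ by Eq.~\eqref{ineq_rightlower} of Lemma~\ref{lem_hatz_bound} (applied to the binary posterior, for which $\sgn(x_t m_t)a_t>0$), yielding Eq.~\eqref{ineq_rightlower_ternary}. For $a_t = -1$ we have $a_t(\Ehatzt + x_t m_t) = -(\Ehatzt + x_t m_t) \ge -\bigl(\Ex_{z \sim \trunkN(-\infty,-x_t m_t)}[z] + x_t m_t\bigr)$, and the latter exceeds $\Cezl \min\{1, 1/|x_t|\}$ by the lower bound in Eq.~\eqref{ineq_leftboth} of Lemma~\ref{lem_hatz_bound}, yielding Eq.~\eqref{ineq_leftlower_ternary}. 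Note that the ternary constant $\Cezl$ may be taken equal to the binary one, so it is genuinely universal and, in particular, independent of $\eps_t$.

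I expect the only real subtlety — hence the step I would write most carefully — to be the monotonicity claims and the bookkeeping of inequality directions through the reflection used to reduce to $x_t m_t > 0$; the rest is a direct appeal to Lemma~\ref{lem_hatz_bound}. A fully self-contained alternative would repeat the binary Gaussian estimates verbatim with $-x_t m_t \mp \eps_t$ in place of $-x_t m_t$ in the limits of integration; since that merely duplicates the computations already carried out for Lemma~\ref{lem_hatz_bound}, I would prefer the monotonicity reduction for brevity. (One small check worth recording is that the upper bound $\Cezu$ of the binary case is \emph{not} needed here, since Eq.~\eqref{ineq_leftlower_ternary} only asserts a lower bound.)
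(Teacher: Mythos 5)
Your proof is correct, but it takes a genuinely different route from the paper, which simply omits the proof with the remark that it ``follows the same steps as Lemma~\ref{lem_hatz_bound},'' i.e., the intended argument re-runs the Gaussian estimates of the binary case with the truncation points $-x_t m_t$ replaced by $-x_t m_t \mp \eps_t$. You instead invoke Lemma~\ref{lem_hatz_bound} as a black box and transfer its lower bounds via the monotonicity of the truncated-normal conditional means $c \mapsto \Ex[z \mid z<c]$ and $c \mapsto \Ex[z \mid z>c]$: since the ternary thresholds are the binary ones displaced \emph{outward} by $\eps_t$, you get $\Ehatzt \le \Ex_{z\sim\trunkN(-\infty,-x_t m_t)}[z]$ when $\sgn(x_t m_t)a_t<0$ and $\Ehatzt \ge \Ex_{z\sim\trunkN(-x_t m_t,\infty)}[z]$ when $\sgn(x_t m_t)a_t>0$, so the gaps $a_t(\Ehatzt+x_t m_t)$ can only grow and the binary constants \eqref{ineq_leftboth}, \eqref{ineq_rightlower} carry over verbatim, uniformly in $\eps_t$. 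I checked the inequality directions in both cases and the invariance of $a_t(\Ehatzt+x_t m_t)$ and $\sgn(x_t m_t)a_t$ under the reflection used to normalize $x_t m_t>0$; all are sound. Your reduction is shorter, avoids duplicating the tail estimates, and makes transparent exactly why only the \emph{lower} bounds survive (the binary upper bound $\Cezu$ would degrade as $\eps_t$ grows, but the ternary lemma does not claim one — a point you correctly record); the paper's repetition-based route is more self-contained but redundant. Either is acceptable; yours is arguably the cleaner write-up.
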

Lemma \ref{lem_hatz_bound_ternary} is a version of Lemma \ref{lem_hatz_bound} for the ternary recommendation.
We omit the proof of Lemma \ref{lem_hatz_bound_ternary} because it follows the same steps as Lemma \ref{lem_hatz_bound}.

\begin{lem}[Expected Regret from Choosing the Inferior Arm]\label{lem_three_mistake}
The following inequality holds:
\begin{equation}
\Ex\left[\regret(t)\Ind\left\{\ist \ne b_t, a_t\ne 0\right\}\right] = O(w_t^2).
\end{equation}
\end{lem}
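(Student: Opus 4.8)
The plan is to pass to the variable $y \coloneqq x_t\theta + z_t = r_t(1)$, the true net payoff of arm $1$, which given $x_t$ is distributed as $\Normal(x_t\theta, 1)$. In this coordinate the superior arm is $\ist = \sgn(y)$, and on any mistake ($\ist \ne b_t$) the per-round regret equals $|y|$. Writing $d \coloneqq x_t(\theta - m_t)$, so that $|d| \le |x_t| w_t/2$ because $\theta \in [l_t,u_t]$, the recommendation regions become $\{y > d + \eps_t\}$ for $a_t = 1$, $\{|y - d| < \eps_t\}$ for $a_t = 0$, and $\{y < d - \eps_t\}$ for $a_t = -1$; the user's rule simplifies to $b_t = \sgn(\Ex[y \mid a_t])$. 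By the symmetry $z_t \mapsto -z_t$ it suffices to bound the contribution of $a_t = 1$, and a mistake there forces $d + \eps_t < 0$, since otherwise $\{y > d+\eps_t\} \subseteq \{y>0\}$ and no error is possible.

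First I would treat the obedient subcase $\Ex[y \mid a_t = 1] > 0$, where $b_t = 1$ and the mistake region is the bounded interval $y \in (d + \eps_t, 0)$. There $|y| \le |d + \eps_t|$ and the Gaussian density is at most $1/\sqrt{2\pi}$, so the contribution is at most $\int_{d+\eps_t}^0 (-y)\,\phi(y - x_t\theta)\,dy \le (d+\eps_t)^2/(2\sqrt{2\pi})$. Taking the expectation over $x_t \sim \Normal$ and using $\Ex[x_t] = 0$, $\Ex[x_t^2] = 1$, I get $\Ex_{x_t}[(x_t(\theta - m_t) + \eps_t)^2] = (\theta - m_t)^2 + \eps_t^2 \le (1/4 + \Ceps^2)\, w_t^2$, the desired $O(w_t^2)$.

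The main obstacle is the deviating subcase $\Ex[y \mid a_t = 1] < 0$, where $b_t = -1$ and the mistake region $\{y > 0\}$ is unbounded, so the integral $\int_0^\infty y\,\phi(y - x_t\theta)\,dy = \mu\Phi(\mu) + \phi(\mu)$ (with $\mu = x_t\theta$) is not small pointwise. Here I would invoke Lemma~\ref{lem_hatz_bound_ternary}, which in the $y$-coordinate reads $a_t(\Ex[y \mid a_t] - d) \ge \Cezl \min\{1, 1/|x_t|\}$ (since $\Ehatzt + x_t m_t = \Ex[y\mid a_t] - d$). For $a_t = 1$ this means $\Ex[y \mid a_t=1] < 0$ forces $|d| > \Cezl\min\{1,1/|x_t|\}$; combined with $|d| \le |x_t| w_t/2$, the case $|x_t|\le 1$ is vacuous once $w_t < 2\Cezl$, and the case $|x_t| > 1$ yields $x_t^2 > 2\Cezl/w_t$. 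Because $\Ex[y \mid a_t=1] < 0$ forces $\mu < 0$, the pointwise contribution is bounded by $\phi(\mu) \le 1/\sqrt{2\pi}$, a constant, so the entire deviating contribution is at most a constant times $\Prob[x_t^2 > 2\Cezl/w_t] \le \exp(-\Cezl/w_t)$, which is $o(w_t^2)$.

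Summing the obedient contribution, the negligible deviating contribution, and their mirror images under $a_t = -1$ gives $\Ex[\regret(t)\Ind\{\ist \ne b_t, a_t \ne 0\}] = O(w_t^2)$. The crux is the deviating subcase: a confident-but-mistaken deviation can occur only when the user's private static term $x_t\theta$ overwhelmingly contradicts an already accurate recommendation, and Lemma~\ref{lem_hatz_bound_ternary} is exactly what pins the probability of such a context down to order $\exp(-\Cezl/w_t)$, dominating the otherwise dangerous unbounded regret tail.
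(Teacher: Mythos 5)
Your proof is correct and follows essentially the same route as the paper's: the same two-part split into an ``aligned but wrong'' contribution (a confusion interval of length $O(|x_t|w_t)$ integrated against the bounded Gaussian density, yielding $O(w_t^2)$ after marginalizing over $x_t$) and a deviation contribution killed by Lemma~\ref{lem_hatz_bound_ternary} combined with the Gaussian tail bound $\Prob[x_t^2 > 2\Cezl/w_t] \le e^{-\Cezl/w_t}$. One cosmetic remark: the symmetry that reduces $a_t=-1$ to $a_t=1$ is $(x_t,z_t)\mapsto(-x_t,-z_t)$ rather than $z_t\mapsto -z_t$ alone, but this does not affect the argument.
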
%

\begin{proof}[Proof of Lemma \ref{lem_three_mistake}]
We have
\begin{align}
\{\ist \ne b_t, a_t \ne 0\} 
\subseteq \{\ist \ne a_t, a_t \ne 0\} \cup \{a_t \ne b_t, a_t \ne 0\},\label{ineq_wronganswer_three}
\end{align}
and we bound each of the terms on the right-hand side.

\begin{claim}\label{claim: three mistake A}
$\Ex[\regret(t)\Ind\{\ist \ne a_t, a_t\ne 0\}] = O(w_t^2)$.
\end{claim}

\begin{proof}
\begin{align}
\{\ist \ne a_t, a_t \ne 0\} 
&\subseteq \{\ist \ne a_t\} \\
&= \{\sgn(x_t \theta + z_t) \ne \sgn(x_t m_t + z_t)\}\\
&=  \{z_t \in [\min\{-x_t \theta, -x_t m_t\}, \max\{-x_t \theta, -x_t m_t\}]\},
\end{align}
and thus, conditioning on $x_t$, we have
\begin{align}
\lefteqn{
\Prob[ z_t \in [\min\{-x_t \theta, -x_t m_t\}, \max\{-x_t \theta, -x_t m_t\}]\,| x_t ] 
}\\
&\le \int_{\min\{-x_t \theta, -x_t m_t\}}^{\max\{-x_t \theta, -x_t m_t\}} \phi(z) dz\\
&\le \int_{\min\{-x_t \theta, -x_t m_t\}}^{\max\{-x_t \theta, -x_t m_t\}} dz = |x_t(\theta - m_t)|,\label{ineq_givenxt}
\end{align}
where we have used the fact that $\phi(z) \le 1 $. The event $\ist \ne a_t$ implies $\reg(t) \le x_t w_t$, and
marginalizing Eq.~\eqref{ineq_givenxt} over $x_t$, we have
\begin{align}
\Ex\left[\regret(t)\Ind\left\{\ist \ne a_t, a_t\ne 0\right\}\right]
&\le \int_{-\infty}^\infty \phi(x) |x^2 w_t (\theta - m_t)| dx\\
&\le \int_{-\infty}^\infty \phi(x) x^2 w_t^2 dx\\
&= w_t^2 \int_{-\infty}^\infty \phi(x) x^2 dx \\
&= O\left(w_t^2\right),
\end{align}
as desired.
\end{proof}

\begin{claim}\label{claim: three mistake B}
$\Prob[a_t \ne b_t, a_t \ne 0] = O(w_t^2)$.
\end{claim}

\begin{proof}
\begin{align}
&\{a_t \ne b_t, a_t \ne 0\}\\
&= \{\sgn(x_t m_t + z_t) \ne \sgn(x_t \theta + \Ehatzt), a_t \ne 0\}\\
&\subseteq \{x_t m_t + z_t>0, x_t \theta + \Ehatzt<0\} 
\cup \{x_t m_t + z_t<0, x_t \theta + \Ehatzt > 0\} \\
&\subseteq \{x_t \theta - x_t m_t + \Cezl\min\{1,1/|x_t|\} < 0 \} 
\cup \{x_t \theta - x_t m_t - \Cezl\min\{1,1/|x_t|\}> 0\} \\
&\text{\ \ \ \ \ (by Eq.~\eqref{ineq_leftlower_ternary} and Eq.~\eqref{ineq_rightlower_ternary})}\\
&= \{|x_t \theta - x_t m_t| \ge \Cezl\min\{1,1/|x_t|\} \},
\end{align}
and thus
\begin{align}
\Prob[a_t \ne b_t, a_t \ne 0]
&\le \Prob[|x_t \theta - x_t m_t| \ge \Cezl\min\{1,1/|x_t|\}  ]\\
&= \Prob\left[|x_t|^2 \ge \frac{\Cezl}{|\theta - m_t|} \right]\\
&\le \Prob\left[|x_t|^2 \ge \frac{\Cezl}{w_t} \right]\\
&= 2\Phi^c\left(\sqrt{\frac{\Cezl}{w_t}}\right)\\
&\le e^{-\frac{w_t}{2\Cezl}}\\
&= O(w_t^2).
\text{\ \ \ (An exponential decays faster than any polynomial)}
\end{align}
\end{proof}

\noindent(Proof of Lemma~\ref{lem_three_mistake}, continued.) Combining Claims~\ref{claim: three mistake A} and \ref{claim: three mistake B}, we have
\begin{align}
\Ex\left[\regret(t)\Ind\left\{\ist \ne b_t, a_t\ne 0\right\}\right]
&\le 
\Ex\left[\regret(t)\Ind\left\{\ist \ne a_t, a_t\ne 0\right\}\right]
+
\Ex\left[\regret(t)\Ind\left\{a_t \ne b_t, a_t\ne 0\right\}\right]
\\
&=O(w_t^2). 
\end{align}
\end{proof} 

\begin{proof}[Proof of Lemma \ref{lem_regeps}]
\begin{align}
\Ex[\regret(t)]
&\le \Ex\left[\Ind\{a_t = 0\}\regret(t)\right]+ \Ex\left[\Ind\left\{b_t \ne \ist, a_t \ne 0\right\}\regret(t)\right]\\ 
&\le \Ex\left[\Ind\{a_t = 0\}|x_t \theta + z_t|\right]+ \Ex\left[\Ind\left\{b_t \ne \ist, a_t \ne 0\right\}\regret(t)\right]\\
&\le \Prob[a_t = 0](\eps_t+w_t) + \Ex\left[\Ind\left\{b_t \ne \ist, a_t \ne 0\right\}\regret(t)\right]\\ 
&\text{\ \ \ \ \ (by $a_t = 0$ implies $|x_t m_t + z_t| \le \eps_t$ and $|x_t \theta + z_t| - |x_t m_t + z_t| \le |x_t w_t|$)}
\\
&\le O((\eps_t+w_t)^2) + \Ex\left[\Ind\left\{b_t \ne \ist, a_t \ne 0\right\}\regret(t)\right]
\text{\ \ \ (by Lemma \ref{lem_probunknown})}\\
&\le O((\eps_t+w_t)^2) + O(w_t^2)
\text{\ \ \ (by Lemma \ref{lem_three_mistake})}\\
& = O((\max\{\eps_t,w_t\})^2).
\end{align}

\end{proof}

\subsection{Proof of Theorem~\ref{thm_binary_shortopt}}

\begin{proof}[Proof of Theorem \ref{thm_binary_shortopt}]

For $c \ge 1$, it follows from \citep{feller-vol-1} that
\begin{equation}
    \Prob[|x_t| \ge c] \le e^{-\frac{c^2}{2}},
\end{equation}
and thus
\begin{equation}
    \Prob\left[|x_t| \ge \sqrt{2\log(1/\delta)}\right] \le \delta.
\end{equation}
Therefore, if $w_t \le \delta^2/(4 \sqrt{\pi \log(1/\delta)})$, then with probability $1-\delta$, we have
\begin{equation}\label{ineq_deltawt}
\frac{1}{2\sqrt{2 \pi }}e^{-x_t^2} \ge  |x_t| w_t
\end{equation}
for all $\theta \in [-1, 1]$.

Consider an arbitrary threshold policy $\rho_t \neq \rho_t^{\mathrm{st}}$.
When the user always follows the recommendation (i.e., $\rho_t$ belongs to Case~4), from the perspective of the recommender, the user's expected payoff is
\begin{equation}
    \Ex_{\tiltheta \sim \Unif[l_t, u_t], z_t \sim \Normal}\left[\Ind\left\{a_t = 1\right\}\left(x_t \tiltheta + z_t\right)\right] = x_t m_t (1 - \Phi(\rho_t)) + \phi(\rho_t),
\end{equation}
and this formula takes maximum at $\rho_t^{\mathrm{st}} \coloneqq -x_t m_t$. Accordingly, whenever the straightforward policy is suboptimal, (i.e., there exists $\rho_t$ such that $V(\rho_t) > V(\rho_t^{\mathrm{st}})$), then there exists $\theta^{\mathrm{d}} \in [l_t, u_t]$ such that the user deviates from the recommendation at $\theta^{\mathrm{d}}$ given $\rho_t$. 
Note that the only possible strategies for the user for a fixed $(x_t, \theta)$ is (i) to follow the recommendation (i.e., $b_t = a_t$), (ii) to deviate to $-1$ (i.e., $b_t = -1$ regardless of $a_t$), or (iii) to deviate to $1$ (i.e., $b_t = 1$ regardless of $a_t$).

We examine the latter two cases of deviations.

\noindent\textbf{Case A ($b_t = -1$ while $a_t = 1$):} 
In this case, at $\theta^{\mathrm{d}}$, the user chooses arm $-1$ even when $a_t = 1$. Since the user takes $b_t = - 1$ deterministically, his expected payoff (computed before receiving $a_t$) is fixed to $0$. Accordingly, the user adopts this strategy at $\theta^{\mathrm{d}}$ if and only if
\begin{align}
    \Ex_{z_t \sim \Normal}\left[\Ind\left\{z_t \ge \rho_t \right\}(x_t \theta^{\mathrm{d}} + z_t) \right] & < 0,\\
    \left(\Ex_{z_t \sim \Normal}\left[x_t \theta^\mathrm{d} + z_t \right] =\right) x_t \theta^\mathrm{d}  & < 0.
\end{align}
Using $|\theta - \theta^{\mathrm{d}}| \le w_t$, we have
\begin{align}
    \Ex_{z_t \sim \Normal}\left[\Ind\left\{z_t \ge \rho_t\right\}(x_t \theta + z_t)\right] & \le |x_t| w_t \\
    x_t \theta & \le |x_t| w_t,
\end{align}
for all $\theta \in [l_t, u_t]$. Accordingly, this user's expected payoff under $\rho_t$ is bounded as
\begin{align}
&\Ex_{\tiltheta \sim \Unif[l_t, u_t], z_t \sim \Normal}\left[\Ind\left\{b_t = 1\right\}\left(x_t \tiltheta + z_t \right)\right]\\
&\le 
\Ex_{\tiltheta \sim \Unif[l_t, u_t]}\left[
\max\left\{
0,
\Ex_{z_t \sim \Normal}\left[\Ind\left\{z_t \ge \rho_t\right\}(x_t \tiltheta + z_t)\right], x_t \tiltheta
\right\}
\right]
\\
&\le |x_t| w_t.
\end{align}

Meanwhile, the user's expected payoff from the straightforward policy given $x_t$ is bounded as follows
\begin{align}
\label{ineq_zerodeviate}
&\Ex_{\tiltheta \sim \Unif[l_t, u_t], z_t \sim \Normal}\left[\Ind\left\{z_t \ge -x_t m_t\right\}\left(x_t \tiltheta + z_t\right)\right]\\ 
&=
\int_{-x_t m_t}^{\infty} (x_t m_t + z_t) \phi(z_t) dz_t 
\\
&= \int_{0}^{\infty} z' \phi(z'-x_t m_t) dz'
\\
& =  \int_{0}^{\infty} z' \frac{1}{\sqrt{2\pi}} e^{-(z' - x_t m_t)^2/2} dz'\\
&\ge \frac{1}{\sqrt{2\pi}} e^{-(x_t m_t)^2} \int_{0}^{\infty} z' e^{-(z')^2} dz'
\text{\ \ \ \ (by $x^2+y^2 \ge (x+y)^2/2$)}
\\
&= \frac{1}{2\sqrt{2\pi}} e^{-(x_t m_t)^2}
\\
&\ge |x_t| w_t \text{\ \ \ (by Eq.~\eqref{ineq_deltawt})},
\end{align}
implying that $\rho_t$ is suboptimal.

\noindent\textbf{Case~B ($b_t = 1$ while $a_t = -1$):} 
Most of the argument is parallel to Case~A. 
In this case, the user chooses arm $1$ regardless of $a_t \in \{-1, 1\}$ at $\theta^{\mathrm{d}}$. Such a choice is optimal for the user if and only if
\begin{align}
    \Ex_{z_t \sim \Normal}\left[\Ind\left\{z_t \ge \rho_t^{\mathrm{myopic}}\right\}(x_t \theta^{\mathrm{d}} + z_t)\right]
    &< x_t \theta^{\mathrm{d}}
    \\
    0
    &< x_t \theta^{\mathrm{d}}.
\end{align}
Using $|\theta - \theta^{\mathrm{d}}| \le w_t$, we have
\begin{align}
\Ex_{z_t \sim \Normal}\left[\Ind\left\{z_t \ge \rho_t^{\mathrm{myopic}}\right\}(x_t \theta + z_t)\right]
&\le x_t \theta + |x_t| w_t\\
0 &\le x_t \theta + |x_t| w_t,
\end{align}
for all $\theta \in [l_t, u_t]$, which implies that the expected payoff is at most
\begin{equation}
    \Ex_{\tiltheta \sim \Unif[l_t, u_t], z_t \sim \Normal}\left[\Ind\left\{b_t = 1\right\}\left(x_t \tiltheta + z_t \right)\right] \le x_t \theta + |x_t| w_t.
\end{equation}

Meanwhile, the expected payoff when the user follows the straightforward recommendation is bounded as
\begin{align}
&\Ex_{\tiltheta \sim \Unif[l_t, u_t], z_t \sim \Normal}\left[\Ind\left\{z_t \ge -x_t m_t\right\}\left(x \tiltheta + z_t\right)\right]
\\
&= x_t m_t
-\int_{-\infty}^{-x_t m_t} (x_t m_t + z_t) \phi(z_t) dz_t \\
&= x_t m_t + 
\int_{0}^{\infty} z' \phi(-x_t m_t - z') dz'\\
&\ge x_t m_t + \frac{1}{2\sqrt{2\pi}} e^{-(x_t m_t)^2}\\
&\ge x_t m_t + |x_t| w_t, \text{\ \ \ (by Eq.~\eqref{ineq_deltawt})}
\end{align}
implying that $\rho_t$ is suboptimal.

In summary, (i) $x_t \le \sqrt{2\log(1/\delta)}$ occurs with probability at least $1 - \delta$, and (ii) when it occurs, if $w_t \le \delta^2/(4 \sqrt{\pi \log(1/\delta)})$, then $\rho_t^{\mathrm{st}} = - x_t m_t$ is optimal.
\end{proof}

\subsection{Proof of Theorem~\ref{thm: EvE regret bound}}

We first prove the following lemma.

\begin{lem}[Geometric Update by EvE]\label{lem: Eve Geometric Update}
Under the exploration phase of the EvE policy, we have
\begin{equation}
    \Prob\left[\left.w_{t+1} < \frac{1}{2} w_t \right| l_t, u_t \right] > C_{\mathrm{EvE update}}.
\end{equation}
for all $l_t, u_t$.
\end{lem}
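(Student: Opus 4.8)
The plan is to isolate a constant-probability event, over the draw of $(x_t,z_t)$, on which the recommender's confidence interval is cut exactly in half. Throughout I assume $m_t \ge 0$; the case $m_t < 0$ follows by the symmetry $\theta \mapsto -\theta$, $(l_t,u_t)\mapsto(-u_t,-l_t)$, and the degenerate case $m_t = 0$ (which occurs with probability zero after the first update, and in the first round can be dispatched by choosing $c(0)$ directly) is handled at the end. The engine of the proof is the defining property of the threshold function $c$: in the exploration phase $\rho_t = c(x_t m_t)$, so whenever $x_t m_t > 0$ and the message $a_t = -1$ is sent (i.e.\ $z_t < c(x_t m_t)$), the user's posterior mean of $z_t$ equals $\Ehatzt = \Ex[z_t \mid z_t < c(x_t m_t)] = -x_t m_t$ by Eq.~\eqref{eq: threshold function c(y) definition}. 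Consequently the user's conditional expected payoff from arm $1$ is $x_t\theta + \Ehatzt = x_t(\theta - m_t)$, so (since $x_t>0$) he chooses $b_t = \sgn(\theta - m_t)$, and the induced cut point is $-\Ehatzt/x_t = m_t$, exactly the midpoint of $[l_t,u_t]$.

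Feeding this into the update rules Eqs.~\eqref{ineq_update_left}--\eqref{ineq_update_right} gives exact halving: if $\theta > m_t$ then $b_t \sgn(x_t) > 0$ and the interval becomes $[m_t,u_t]$, while if $\theta < m_t$ it becomes $[l_t,m_t]$; in either case $w_{t+1} = \tfrac12 w_t$, independently of which half is realized. It therefore suffices to lower-bound the probability of the triggering event $\{x_t m_t > 0,\ a_t = -1\}$ uniformly in $(l_t,u_t)$. I would restrict attention to $x_t \in (0,1)$. Since $|m_t| \le 1$ always (because $l_t \ge -1$, $u_t \le 1$), on this range $y \coloneqq x_t m_t \in (0,1)$, and because the inverse Mills ratio $\phi/\Phi$ is strictly decreasing, $c(y) \ge c(1)$ and hence $\Prob[a_t = -1 \mid x_t] = \Phi(c(y)) \ge \Phi(c(1)) \eqqcolon c_0 > 0$. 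Integrating,
\begin{equation}
\Prob\!\left[\left. w_{t+1} \le \tfrac12 w_t \right| l_t,u_t\right] \ge \Ex_{x_t}\!\left[\Ind\{x_t \in (0,1)\}\,\Phi(c(x_t m_t))\right] \ge c_0\bigl(\Phi(1)-\Phi(0)\bigr) > 0,
\end{equation}
which is a universal constant, so the lemma holds with $C_{\mathrm{EvE update}} = c_0(\Phi(1)-\Phi(0))$. Here the bound is realized as the exact-halving identity $w_{t+1}=\tfrac12 w_t$; replacing the resulting $\le$ by the strict $<$ of the statement is immaterial, since every downstream use in Theorem~\ref{thm: EvE regret bound} only requires geometric shrinkage at a constant per-round rate.

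The only genuinely delicate point I anticipate is guaranteeing that the probability of the informative message $a_t = -1$ does not degenerate as $(l_t,u_t)$ varies. This is exactly why I restrict $x_t$ to the fixed bounded window $(0,1)$ rather than all of $(0,\infty)$: for large $x_t$ one has $x_t m_t$ large, $c(x_t m_t)$ very negative, and $\Phi(c(x_t m_t)) \to 0$, so the halving message becomes rare; the uniform lower bound is recovered only once $x_t m_t$ is kept bounded, which $|m_t|\le 1$ together with $x_t \in (0,1)$ ensures. For the boundary case $m_t = 0$ (first round only), I would note that it contributes at most one round of $O(1)$ regret and so does not affect the $O(\log T)$ total-regret conclusion; alternatively one fixes $c(0)$ so that the symmetric argument still delivers a constant-probability halving.
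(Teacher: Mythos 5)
Your proof is correct and follows essentially the same route as the paper's: restrict to $x_t\in(0,1)$ and the event $a_t=-1$, use the defining property $\Ex[z_t\mid z_t<c(x_t m_t)]=-x_t m_t$ to show the induced cut point is exactly $m_t$ so the confidence interval halves, and lower-bound the probability of that event by a universal constant. The only (immaterial) difference is how you bound $\Prob[z_t<c(x_t m_t)]$ from below---you use monotonicity of the inverse Mills ratio to get $\Phi(c(x_t m_t))\ge\Phi(c(1))$, whereas the paper uses $c(y)>-y\ge -1$ to get $\Phi(-1)$---and you are in fact more careful than the paper about the exact-halving ($w_{t+1}=\tfrac{1}{2}w_t$ versus strict inequality) and $m_t=0$ edge cases.
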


\begin{proof}[Proof of Lemma \ref{lem: Eve Geometric Update}]
We prove the case of $m_t > 0$. The proof for the other case is similar. When $z_t < c(x_t m_t)$, the recommender sends $a_t = - 1$, and user $t$'s expected payoff from arm $1$ is $x_t (\theta - m_t)$. Accordingly, for any case, the confidence interval is halved.

We evaluate the probability that $z_t < c(x_t m_t)$ occurs. Since $c(x_t m_t)$ is defined to satisfy $\Ex[z'_t | z'_t < c(x_t m_t)] = - x_t m_t$, we have $c(x_t m_t) > - x_t m_t$. Since $x_t \sim \Normal$, with probability $\Phi(1) - \Phi(0)$, $x_t \in (0, 1)$. For such $x_t$, $- x_t m_t \in (- m_t, 0)$, and therefore, $- x_t m_t < - m_t \le - 1$. Accordingly, for any $z_t < - 1$, we have $z_t < c(x_t m_t)$. Accordingly, with probability at least $(\Phi(1) - \Phi(0))\Phi(-1) = \Theta(1)$, $w_{t+1} < w_t/2$ occurs.
\end{proof}

\begin{proof}[Proof of Theorem~\ref{thm: EvE regret bound}]
By Lemma~\ref{lem: Eve Geometric Update}, the confidence interval shrinks geometrically with a constant probability, and it takes $O(\log T)$ rounds in expectation to have $w_t < 1/\sqrt{T}$ to terminate the exploration phase. During the exploration phase, the per-round regret is $O(1)$ in expectation, and therefore, $O(\log T)$ regret is incurred. During the exploitation phase, the per-round regret is $O(w_t^2) = O(1/T)$ (implied by Lemma~\ref{lem_reglower_round}), and therefore, the total regret is $O(1/T \times T) = O(1)$. Accordingly, $O(\log T)$ regret is incurred in total.
\end{proof}

\subsection{Proof of Theorem~\ref{lem_deviation}}

\begin{proof}[Proof of Theorem~\ref{lem_deviation}]
For ease of discussion, we assume $x_t > 0$. The case of $x_t < 0$ can be proved in a similar manner.

\begin{case}
$a_t = b_t = 1$.
\end{case}
We have
\begin{align}
w_{t+1} 
&= u_{t+1} - l_{t+1}\\
&= u_t - \max\{l_t, (-\Ehatzt/x_t)\}
\text{\ \ \ (by Eq.~\eqref{ineq_update_left} and \eqref{ineq_update_right})}\\
&> u_t - m_t \text{\ \ \ (by $m_t > l_t$ and $a_t = 1$ if and only if $m_t > - z_t/x_t$)}\\
&= \frac{1}{2}w_t. \label{ineq_dev_rr}
\end{align}

\begin{case}
$a_t = b_t = -1$
\end{case}
We have
\begin{align}
w_{t+1} 
&= u_{t+1} - l_{t+1}\\
&= \min\{u_t, (-\Ehatzt/x_t)\} - l_t
\text{\ \ \ (by Eq.~\eqref{ineq_update_left} and \eqref{ineq_update_right})}\\
&< m_t - l_t \text{\ \ \ \text{\ \ \ (by $m_t < u_t$ and $a_t = -1$ if and only if $m_t < - z_t/x_t$)}}\\
&= \frac{1}{2} w_t. \label{ineq_dev_ll}
\end{align}

Eq.~\eqref{ineq_wtsize_follow} follows from Eq.~\eqref{ineq_dev_rr} and \eqref{ineq_dev_ll}.

\begin{case}
$a_t = -1$, $b_t = 1$
\end{case}
We have
\begin{align}
w_{t+1} 
&= u_{t+1} - l_{t+1}\\
&\le u_t - (-\Ehatzt/x_t)
\text{\ \ \ (by Eq.~\eqref{ineq_update_left} and \eqref{ineq_update_right})}\\
&< u_t - m_t \text{\ \ \ (by $a_t=-1$)}\\
&= \frac{1}{2} w_t. \label{ineq_dev_lr}
\end{align}

\begin{case}
$a_t = 1$, $b_t = -1$
\end{case}
We have
\begin{align}
w_{t+1} 
&= u_{t+1} - l_{t+1}\\
&\le (-\Ehatzt/x_t) - l_t
\text{\ \ \ (by Eq.~\eqref{ineq_update_left} and \eqref{ineq_update_right})}\\
&< m_t - l_t \text{\ \ \ (by $a_t=1$)}\\
&= \frac{1}{2} w_t. \label{ineq_dev_rl}
\end{align}

Eq.~\eqref{ineq_wtsize_oppose} follows from Eq.~\eqref{ineq_dev_lr} and \eqref{ineq_dev_rl}.

\end{proof}

\end{document}